\title{Non-Monetary Mechanism Design without Priors:\\Achieving Efficiency via Adaptive Costly Audits}
\author{%
    Yan Dai~\thanks{Massachusetts Institute of Technology, Operations Research Center. Email: \texttt{yandai20@mit.edu}.}\and
    Mo\"ise Blanchard~\thanks{Georgia Tech, School of Industrial and Systems Engineering. Email: \texttt{mblanchard41@gatech.edu}.}\and
    Patrick Jaillet~\thanks{Massachusetts Institute of Technology, Department of EECS. Email: \texttt{jaillet@mit.edu}.}
}
\date{First version: Feb 2025. This version: May 2026.\footnote{A preliminary version of this work was accepted to the 38th Annual Conference on Learning Theory (COLT 2025). Compared to that version: We extended our \mechname and its analysis to adversarial and stochastic audit models. The main body and technical sections have been thoroughly revised for enhanced clarity and expository depth. The literature overview has been enriched to include the research on mechanism design with costly state verification (CSV).}}
\let\originalmiddle=\middle
\def\middle#1{\mathrel{}\originalmiddle#1\mathrel{}}
\newcommand{\algeq}[2][]{%
  \par%
  \noindent%
  \makebox[\columnwidth]{%
    \hfill #2%
    \if\relax\detokenize{#1}\relax
      \hfill
    \else
      \hfill (\refstepcounter{equation}\theequation)\label{#1}%
    \fi
  }%
  \par%
}
\Crefname{ALG@line}{Line}{Lines}
\Crefname{assumption}{Assumption}{Assumptions}
\Crefname{protocol}{Protocol}{Protocols}
\newtheorem{theorem}{Theorem}
\newtheorem{lemma}[theorem]{Lemma}
\newtheorem{claim}[theorem]{Claim}
\theoremstyle{definition}
\newtheorem{definition}{Definition}
\newtheorem{assumption}[definition]{Assumption}
\newcommand{\E}{\operatornamewithlimits{\mathbb{E}}}
\renewcommand{\O}{\operatorname{\mathcal O}}
\newcommand{\Otil}{\operatorname{\tilde{\mathcal O}}}
\newcommand{\argmax}[1]{\operatornamewithlimits{argmax}}
\newcommand{\argmin}[1]{\operatornamewithlimits{argmin}}
\renewcommand{\tilde}{\widetilde}
\renewcommand{\hat}{\widehat}
\renewcommand{\paragraph}[1]{\vspace{5pt}\noindent\textbf{#1}}
\newcommand{\mechname}{\texttt{AdaAudit}\xspace}
\Crefname{enumi}{Restriction}{Restrictions}
\newcommand{\paren}[1]{\left( #1 \right)}
\newcommand{\sqb}[1]{\left[ #1 \right]}
\newcommand{\set}[1]{\left\{ #1 \right\}}
\newcommand{\abs}[1]{\left|#1\right|}
\newcommand{\mR}{\mathfrak{R}}
\newcommand{\mB}{\mathfrak{B}}
\newcommand{\poly}{\mathrm{poly}}
\newcommand{\truth}{\textbf{truth}}
\newcommand{\mA}{\mathcal{A}}
\newcommand{\mH}{\mathcal{H}}
\newcommand{\mE}{\mathcal{E}}
\newcommand{\mD}{\mathcal{D}}
\newcommand{\mV}{\mathcal{U}}
\newcommand{\1}{\mathbbm{1}}
\begin{document}

\maketitle

\begin{abstract}
We study repeated resource allocation with strategic agents, where monetary transfers are disallowed and the planner has no prior information on agents' utility distributions. Inspired by the costly state verification literature, we assume the planner can request costly \emph{audits} on the winning agent after allocation, revealing their true utility but without the ability to revoke the allocation. We design a mechanism achieving $T$-independent $\mathcal O(K^2)$ regret in social welfare while requesting $\mathcal O(K^3 \log T)$ audits in expectation, where $K$ is the number of agents and $T$ is the number of rounds. We further show an $\Omega(K)$ lower bound on the regret and an $\Omega(1)$ lower bound on the number of audits required for low regret. We also generalize our mechanism and analysis to imperfect audit models.
Algorithmically, we show that incentivizing truthful behavior relies on accurately estimating agents' truthful winning probability online. To achieve this, we impose future punishments via \emph{adaptive audits}; we also introduce an \emph{incentive-aligned flagging} component allowing agents to flag biased estimates, which we prove is in their best interest. Analytically, without distributional information, the revelation principle cannot dictate a truth-telling equilibrium. Instead, we characterize a Perfect Bayesian Equilibrium via a reduction to an \emph{auxiliary game} with only benign strategies. The technical tools developed herein can be of independent interest for other robust mechanism design problems where the revelation principle is inapplicable.
\end{abstract}

\section{Introduction}

Consider a central planner tasked with allocating scarce resources among a group of self-interested agents. To maximize social welfare, the planner must overcome a fundamental information asymmetry: Agents' true utilities for the resource are private information, and the planner can only observe their strategic reports.
A classical solution to this challenge is the seminal Vickrey-Clarke-Groves (VCG) class of mechanisms \citep{vickrey1961counterspeculation,clarke1971multipart,groves1973incentives}, which employs monetary transfers as the main tool to ensure {incentive-compatibility}. Under such allocation and transfer rules, it is in each agent's best interest to truthfully report their utilities, thereby allowing the planner to maximize welfare-maximizing allocations.

However, in many realistic scenarios---such as the allocation of computational resources within a non-profit organization \citep{ng2011online}, the distribution of foods in a food bank \citep{prendergast2017food,prendergast2022allocation}, or the pairing of volunteers with opportunities \citep{manshadi2023redesigning}---monetary transfers are either impractical or undesirable. These settings are often inherently socially-motivated, where the introduction of monetary transfers could undermine the organization's mission, exacerbate inequality, or be deemed ethically inappropriate. This motivates the design of non-monetary resource allocation mechanisms; we direct the readers to \citep{schummer2007mechanism} for a comprehensive overview on the study of non-monetary mechanisms.

Per Arrow's theorem and related impossibility results \citep{arrow1950difficulty,gi73,s75}, in single-round allocations without monetary transfers, it is in general impossible to achieve incentive-compatibility while maximizing social welfare (except in some specific settings that we discuss in \Cref{sec:related work}).
Meanwhile, a stream of papers showed positive results in repeated allocation settings, where a fixed set of forward-looking agents interact with the planner for multiple rounds.
Intuitively, while the absence of monetary transfers prohibits intra-round transfers, the repeated setup allows the planner to leverage inter-round incentives---essentially, favoring or penalizing agents in future rounds based on their current reports.

The most common design of non-monetary mechanisms is via reduction to monetary ones. Specifically, the planner can allocate a budget of \emph{artificial currencies} to each agent to simulate a market mechanism. This paradigm provides approximate Bayesian Incentive-Compatibility and social welfare guarantees \citep{gorokh2021monetary}. Due to its simplicity, it also exhibits practical success in allocating courses \citep{budish2016bringing} or food-bank donations \citep{prendergast2017food,prendergast2022allocation}.
Another design of non-monetary mechanisms uses the \emph{promised utility} framework, which ensures exact Bayesian Incentive-Compatibility and typically yield faster convergence towards the optimal social welfare \citep{balseiro2019multiagent,blanchard2024near}.

Notably, both design paradigms rely on the assumption that all agents' utility distributions are known to the mechanism \textit{a-priori}: the artificial currency approaches require distributions to set initial budgets, while the promised utility framework calculates a fixed point of a Bellman-like operator induced by utility distributions.
We argue, however, that obtaining precise information about \emph{strategic} agents' utility distributions is a significant practical challenge. The classical rationale for assuming distributional priors is that it can be derived from historical data. Unfortunately, relying on historical data when facing strategic agents is fragile, since such data is fundamentally \emph{endogenous} and subject to manipulation if prior mechanisms failed to be incentive-compatible.
This fragility highlights the necessity of a \emph{robust mechanism design} approach that does not rely on the precise knowledge of agents \citep{wilson1987game,bergemann2005robust}.
Therefore, rather than assuming that this difficult estimation task has been done already, we aim to understand what can be achieved without such prior knowledge. This leads to our central research question:

\begin{center}
\textbf{\textit{(Q): Without prior distributional information on agent's utilities, is it still possible to design a non-monetary resource allocation mechanism that maximizes social welfare?}}
\end{center}

Without further instruments for incentives, the answer is likely negative. Agents could arbitrarily manipulate their reports, and the hardness results arising from Arrow's impossibility theorem \citep{arrow1950difficulty} persist.
Inspired by the costly state verification framework in contract theory \citep{townsend1979optimal,ben2014optimal}, we consider a setting where the planner can occasionally request \emph{costly ex-post} audits to reveal the true utility of the allocated agent (in the majority of this paper, we consider the case where audits exactly reveal true utilities; in \Cref{sec:noisy case}, we discuss two \emph{imperfect audits models} where our mechanism also works well).
This capability mirrors many real-world non-profit operations where planners can eventually evaluate allocation efficiency---albeit at a significant cost---through surveys, utilization reports, or operational audits.
While audits have long been studied to enforce incentive-compatibility in mechanism design, existing works heavily relies on \textit{i)} the comprehensive prior knowledge on agents' private utility distributions, and \textit{ii)} the ability of directly punishing untruthful agents discovered by audits. Therefore, as we discuss in more details in \Cref{sec:related work}, existing methodologies do not apply to our setup. Here, the planner has to \emph{learn} agents' utility distributions online when facing strategic behavior, to which the planner \emph{cannot} give direct punishments.

\subsection{Setup: Non-Monetary Resource Allocation without Distributional Priors}
We consider a $T$-round repeated game between one central planner and $K$ self-interested agents. In each round $t=1,2,\ldots,T$, a same indivisible resource is presented to the central planner. Each agent $i=1,2,\ldots,K$ privately observes their utility $u_{t,i}\in [0,1]$ for the resource, such that $u_{t,i}$ is an independent sample from their own utility distribution $\mV_i$. Importantly, all the distributions $\mV_i$'s are \emph{fixed but unknown} to the planner.

All agents simultaneously submit their reports $v_{t,i}\in [0,1]$ according to a Perfect Bayesian Equilibrium (PBE), based on which the planner irrevocably allocates the resource to an agent $i_t$. The planner is unable to collect any payment from any agent; however, they can decide whether to audit the winner $i_t$---which we encode as a binary decision $o_t\in \{0,1\}$. When setting $o_t=1$, the planner observes the true utility $u_{t,i_t}$ (or a imperfect signal of it, as we will extend in \Cref{sec:noisy case}); the planner does not obtain any information if $o_t=0$.

By designing a collection of allocation and audit rules for each round (which constitute a \emph{mechanism}, as we will formulate in \Cref{sec:setup}), the planner aims to minimize two objectives at the same time: \textit{i)} the social welfare regret $\mR_T:=\E[\sum_t (\max_i u_{t,i}-u_{t,i_t})]$ and \textit{ii)} the expected number of audits $\mB_T:=\E[\sum_t o_t]$.

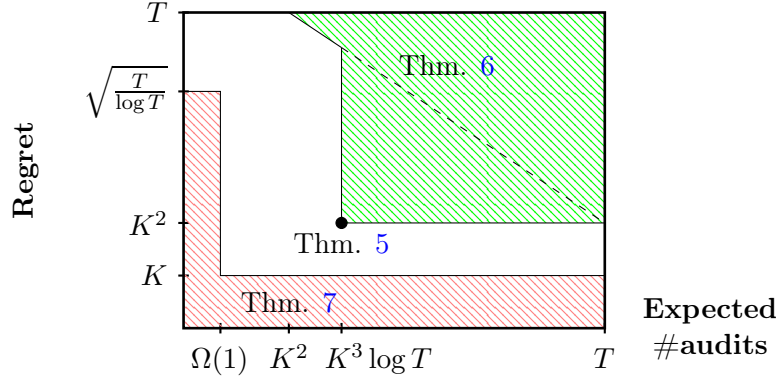
\begin{figure}[tb]
\centering
\begin{tikzpicture}[scale=0.7]

\draw (4,-1.5) node {$\E[\textbf{\#audits}]$};
\draw (-1.5,4) node[rotate=90]{\textbf{Regret}};

\draw[thick,draw=none,pattern=north west lines,pattern color=green!40!white] (2,8) -- (8,2) -- (8,8);
\draw[thick,draw=none,pattern=north east lines,pattern color=green] (3,7) -- (3,2) -- (8,2);
\draw[thick,draw=none,pattern=north west lines,pattern color=red!40!white] (0,0) -- (8,0) -- (8,1) -- (0.5,1) -- (0.5,6) -- (0,6);

\draw[line width=1pt,black] (0,0) rectangle (8,8);
\draw[thick] (8,0.1) -- (8,-0.1) node[below]{$T^{\vphantom{2}}$};
\draw[thick] (0.1,8) -- (-0.1,8) node[left]{$T$};

\draw[thick] (2,0.1) -- (2,-0.1) node[below]{$K^2$};
\draw[thick] (0.1,2) -- (-0.1,2) node[left]{$K^2$};
\draw (6,6) node{\textbf{\Cref{thm:simple mech main theorem}}};

\draw[thick] (3,0.1) -- (3,-0.1);
\draw (3.7,-0.1)  node[below]{$K^3\log T$};
\draw[thick] (0.1,1) -- (-0.1,1) node[left]{$K$};
\draw (2,8) -- (3,7);
\draw[dashed] (3,7) -- (8,2);
\draw (3,7) -- (3,2) -- (8,2);
\filldraw (3,2) circle (3pt) node[below]{\textbf{\Cref{thm:deterministic main thm}}};

\draw[thick] (0.5,0.1) -- (0.5,-0.1) node[below]{$\Omega(1)^{\vphantom{2}}$};
\draw[thick] (0.1,6) -- (-0.1,6) node[left]{$\sqrt {\frac{T}{\log T}}$};
\draw (0,6) -- (0.5,6) -- (0.5,1) -- (8,1);
\draw (3,0.5) node {\textbf{\Cref{thm:lower_bounds}}};

\end{tikzpicture}

\caption{Trade-off between social welfare regret and expected number of audits. The dashed green (resp. red) region depicts known achievable (resp. unachievable) regions.}
\label{fig:main_plot}
\end{figure}

\subsection{Our Contributions}
There is a natural tradeoff between the regret $\mR_T$ and audits $\mB_T$: tackling strategic manipulations requires frequent audits, whereas minimizing audits risks leaving misreports unchecked. Therefore, how effectively can a central planner resolve this conflict? Our primary contribution is demonstrating that the planner does \emph{not} need to heavily compromise. Instead, a small (logarithmic in $T$) number of audits is sufficient to substitute for the missing distributional priors and achieve the first-best social welfare.
Towards characterizing the frontier between $\mR_T$ and $\mB_T$, we summarize our core theoretical and algorithmic contributions as follows:
\begin{itemize}
\item \textbf{Achievable Regions.}
As a baseline, we first achieve a linear trade-off between $\mR_T$ and $\mB_T$ via a mechanism that \emph{uniformly} audits the winning agent with a fixed probability $p\in (0,1]$ (\Cref{alg:simple_mechanism} and \Cref{thm:simple mech main theorem}). Surpassing this greatly, we propose \mechname, a mechanism that adaptively audits the winner, which attains $T$-independent regret $\mR_T=\O(K^2)$ using only a logarithmic number of audits $\mB_T=\O(K^3\log T)$ (\Cref{alg:mechanism} and \Cref{thm:deterministic main thm}). This demonstrates that a small number of ex-post audits suffices to substitute for prior distributional information---which is hard to obtain in practice due to strategic agents.
\item \textbf{Online Estimation in Strategic Environments.}
Algorithmically, the central challenge in designing an effective adaptive audit scheme is estimating agents' \emph{ex-ante first-best utilities} (expected per-round utilities under the welfare-maximizing allocation) in an \emph{online} manner. The historical data used for estimation is generated by \emph{strategic} agents, who may have manipulated reports to bias future estimates; hence standard concentration inequalities fail (we demonstrate this in \Cref{sec:description of est and flag}). We resolve this failure by introducing an algorithmic component called \emph{incentive-aligned flagging}. By requesting and incorporating information from all agents, we align individual incentives with the planner's objective of accurate online estimation.
\item \textbf{Analysis via Reduction to Auxiliary Games.}
In our prior-free setting, a major analytical barrier is the \emph{inapplicability} of the revelation principle: While a truthful direct-revelation mechanism exists for any \emph{fixed} utility distribution, such a mechanism cannot be constructed without distributional information. In other words, the revelation principle cannot yield a \emph{single} mechanism that enforces truthfulness across \emph{all} possible distributions.
Thus we cannot simply \emph{assume} truth-telling as a PBE; instead, to characterize a PBE, we develop a reduction framework mapping the complex dynamic game to a simpler \emph{auxiliary game} restricted to tractable strategies. As we justify in \Cref{sec:sketch varying-p}, any PBE in this auxiliary game correspond to a PBE in the original game (despite having a much larger strategy class), thereby establishing \Cref{thm:deterministic main thm}.
\item \textbf{Hardness Results and Robustness Guarantees.}
In \Cref{sec:lower bounds}, we complement our mechanisms by hardness results. Specifically, as we color with red in \Cref{fig:main_plot}, we prove that $\mR_T=\Omega(K)$ is unavoidable, and that any mechanism attaining $\mR_T=o(\sqrt{T/\log T})$ must require $\mB_T=\Omega(1)$ audits in expectation.
In \Cref{sec:noisy case}, we further demonstrate the robustness of \mechname by extending our results to \emph{imperfect} audit models, where either \textit{i)} agents can strategically distort the audit outcome, or \textit{ii)} audits are subject to stochastic noise. Notably, our auxiliary game analysis remains largely the same, suggesting it can be of independent interest to other mechanism design problems where the revelation principle is inapplicable.
\end{itemize}

\section{Related Literature}\label{sec:related work}
\subsection{Mechanism Design for Non-Monetary Resource Allocation}\label{sec:related work non-monetary}

When allocating scarce resources to strategic agents, classical mechanisms typically rely on monetary transfers \citep[e.g., the VCG mechanisms by][]{vickrey1961counterspeculation,clarke1971multipart,groves1973incentives} to ensure both incentive-compatibility (incentivizing truthful reports from the agents) and efficiency (maximizing social welfare). When monetary transfers are prohibited, achieving these two goals simultaneously is notoriously difficult.

In one-shot settings (i.e., $T=1$), Arrow's impossibility theorem dictates that non-monetary mechanisms face inherent limits in maximizing social welfare \citep{arrow1950difficulty,gi73,s75}. Positive results are largely restricted to narrow environments, e.g., two-agent symmetric settings \citep{miralles2012cardinal} or specific additive utility classes \citep{guo2010strategy,han2011strategy,cole2013positive}. A variant model called \emph{money burning}---where the planner can destroy a portion of the agents' utilities---also incurs an efficiency loss \citep{hartline2008optimal,hoppe2009theory,condorelli2012money}.

To overcome these one-shot impossibilities, a rich literature studies repeated allocations over $T$ rounds, leveraging inter-round incentives. When agents' utility distributions are exactly known \textit{a-priori}, folk theorems guarantee asymptotic efficiency \citep{fudenberg2009folk,guo2009competitive}. Specifically, two dominant paradigms have emerged to achieve finite-time convergence. The \emph{artificial currency} approach allocates initial budgets based on the known priors to simulate a monetary market \citep{friedman2006efficiency,jackson2007overcoming,kash2007optimizing,kash2015equilibrium,budish2011combinatorial,johnson2014analyzing,gorokh2021monetary}. Alternatively, the \emph{future promises} framework, building upon the d'Aspremont--Gérard-Varet--Arrow mechanism \citep{d1979incentives,arrow1979property}, targets Perfect Bayesian Equilibria (PBE) by dynamically adjusting agents' future utilities \citep{balseiro2019multiagent,blanchard2024near}. These powerful paradigms, however, are tailored to environments where the planner has access to precise utility distributions.

Dropping the assumption of known priors motivates a robust mechanism design approach \citep{wilson1987game,bergemann2005robust}.
Should monetary transfers be allowed, the planner can leverage payments to elicit truthful reports, seamlessly learning the distribution from historical data for auction design \citep{babaioff2012dynamic,amin2013learning,kanoria2021incentive} or resource allocation \citep{besbes2019static,kandasamy2023vcg,dai2025incentive}.
In non-monetary cases, current research has mostly focused on establishing robust utility guarantees for individual agents, where \citet{gorokh2019remarkable} pioneered the \emph{$\beta$-Utopia} metric that has been adopted by subsequent works on prior-free non-monetary resource allocation \citep{banerjee2023robust,fikioris2025beyond,onyeze2025allocating,lin2026robust}. Generalizing the competitive ratio metric in single-agent cases, this metric ensures that for any given target allocation probability vector $\bm \alpha$, in equilibrium, every agent attains at least a $\beta$ fraction of their ex-post optimal utility.

While the $\beta$-Utopia metric serves as a powerful benchmark for individual worst-case guarantees, our work pursues the complementary objective of maximizing social welfare. As no single allocation probability vector $\bm \alpha$ guarantees sub-linear social welfare regret across \emph{all} utility distributions, the performance guarantees of these prior works are not directly comparable to ours.
By empowering the planner with adaptive audits, our work bypasses the impossibility to simultaneously elicit near-truthful behavior and achieve highly efficient allocations. Consequently, our analysis centers on the social welfare regret, a metric frequently studied in non-strategic online resource allocation \citep{devanur2009adwords,balseiro2023best,besbes2025dynamic}.

Most relevant to our work is that of \citet{yin2022online}, who elegantly design non-monetary mechanisms without distributional priors to maximize social welfare for \emph{homogeneous} agents. In their setting, as all agents share the same utility distribution, the planner can leverage cross-agent reports to detect if a specific agent is misreporting; this consequently elicits truthful reports from agents. On the other hand, our setting features \emph{heterogeneous} agents. Without the shared distribution, we cannot infer one agent's truthfulness by exogenous reports from others. Instead, the planner must assess each agent solely based on their own endogenous history, thus requiring alternative instruments to verify agents' private information occasionally. This leads to our incorporation of ex-post audits and resonates with the literature on costly state verification.

\subsection{Mechanism Design with Costly State Verification}\label{sec:related work CSV}
Audits, as a costly tool to verify the private information of agents, have been widely studied in the mechanism design literature. The seminal work by \citet{townsend1979optimal} considers a one-shot \emph{monetary} resource allocation problem, where the planner can request costly audits \emph{before} making allocation decisions (which we refer to as \emph{ex-ante audits}; we discuss the distinction between ex-ante and ex-post audits shortly). Subsequently, other monetary mechanisms utilizing costly state verification \citep{gale1985incentive,border1987samurai,mookherjee1989optimal}, as well as money-burning mechanisms \citep{patel2017costly,lundy2019allocation}, have been proposed.
More recently, post-allocation inspections and exclusion mechanisms have continued to drive significant advances in monetary settings \citep{belloni2025approximately,bayrak2025distributionallycontract,li2025revenue,alaei2026optimal}.
Given our emphasis on environments where transfers are either infeasible or undesirable, we depart from this monetary stream to focus purely on {non-monetary} resource allocation.

Non-monetary resource allocation with audits was first studied by \citet{ben2014optimal}. They consider a one-shot allocation problem with homogeneous agents and \emph{known} utility distributions, where the planner can request ex-ante audits prior to allocation. This setting is later generalized to multi-item scenarios with identical items \citep{ben2019mechanisms,erlanson2024optimal}, as well as to secretary problems with sequential agent arrivals \citep{epitropou2019optimal,li2024dynamics,li2026dynamic}.

Our work differs from this line of research in two critical dimensions. In our setting: \textit{i)} agents are \emph{heterogeneous} with \emph{unknown} utility distributions; and \textit{ii)} costly audits occur only \emph{after} allocations, which we refer to as \emph{ex-post audits}. Regarding Difference \textit{ii)}, the key distinction between ex-ante and ex-post audits lies not merely in timing, but in the severity of available punishments. With ex-ante audits, the planner can withhold the resource from an agent found misreporting, imposing an \emph{unlimited punishment} (i.e., the agent receives zero utility). Conversely, with ex-post audits, the winning agent has already received the allocation and retains utility, rendering the punishment \emph{limited} (see \citealt{epitropou2019optimal} for a related discussion).

Focusing on ex-post audits with limited punishment, \citet{mylovanov2017optimal} design a one-shot non-monetary mechanism utilizing \emph{costless} ex-post audits, while \citet{li2020mechanism} study the more general setting of \emph{costly} ex-post audits with \emph{heterogeneous} agents. Both works assume perfect knowledge of agents' utility distributions, thus diverging from our setting regarding Difference \textit{i)}. Furthermore, they equip the planner with an \emph{exogenous} ability to destroy a fixed portion of the audited agent's utility (e.g., a university revoking part of the scholarship). In contrast, our repeated allocation setting requires the planner to \emph{endogenously} construct any punishment to agents through a dynamic process, as meticulously designed in \Cref{sec:main mechanism}.

We now turn to Difference \textit{i)}: the prior knowledge of utility distributions. As we are aware of, the only previous work on costly state verification addressing this informational friction is by \citet{bayrak2025distributionally}. Taking a distributionally robust optimization perspective, they consider a one-shot non-monetary allocation with costly \emph{ex-ante} audits, assuming the agents' joint distribution belongs to a \emph{known} ambiguity set. While also maximizing the worst-case social welfare, we allow agents' distributions to be completely \emph{arbitrary} (subject to independence and the mild \Cref{assumption:min report}), and the planner lacks \emph{any} prior knowledge of them. Moreover, our planner is only able to conduct \emph{ex-post} audits, which inherently induce limited punishment.

Additional but less directly related literature includes the application of audits to settings beyond resource allocation \citep{estornell2021incentivizing,estornell2023incentivizing,jalota2024catch}; mechanisms utilizing alternative instruments like subsidies or robust classifier design \citep{estornell2021incentivizing,estornell2023incentivizing}; and probabilistic verification models where audits are subject to noise, analogous to our \Cref{sec:noisy case} \citep{caragiannis2012mechanism,ball2019probabilistic}.

To conclude, the existing literature on mechanism design with costly state verification does not readily apply to our setting, which features heterogeneous agents, unknown distributions, and costly ex-post audits with limited punishment. Collectively, these two differences forge our core technical challenge: an effective audit scheme typically requires either extensive prior knowledge of agents or the ability to impose substantial punishment. In our setup, we must estimate these properties \emph{online} in the face of strategic behavior, all while relying exclusively on costly ex-post audits and lacking the ability to impose direct unlimited punishments.

\section{Setup: Distribution-Free Non-Monetary Resource Allocation}\label{sec:setup}

\paragraph{Notation.}
For $n\in\mathbb N$, we use $[n]$ to denote the set $\{1,2,\ldots,n\}$.
We use bold letters like $\bm u$ and $\bm v$ for vectors and unbolded letters like $u_i$ and $v_i$ for elements therein.
We use standard Landau notations, where $\O$ and $\Omega$ only hide universal constants and $\Otil$ and $\tilde{\Omega}$ additionally hide logarithmic factors.

\subsection{Repeated Resource Allocation with Audits}

We consider a $T$-round sequential game in which a central planner repeatedly allocates an indivisible but reusable resource to $K\geq 2$ strategic agents. In each round, private utilities of each agent $i\in[K]$ are sampled i.i.d. from their own fixed utility distribution---which is \emph{unknown} to the planner. Without loss of generality, we assume these distributions have support on $[0,1]$. In each round, the planner receives reports $v_{t,i}$ from all agents $i\in[K]$ and decides whom to allocate the resource to (potentially no one, i.e., the ``reject'' option is available). The planner can then potentially request an \emph{audit} of the winner, revealing their true utility at a cost. Formally, each round $t\in[T]$ proceeds as follows.
\begin{enumerate}
    \item Each agent $i\in[K]$ observes their private utility $u_{t,i}\overset{i.i.d.}{\sim} \mV_i$, which is sampled independently from the history and between agents. We abbreviate the generation of all agents' private utilities by $\bm u_t\sim \mV$.
    \item Each agent $i\in[K]$ submits a report $v_{t,i}\in [0,1]$ based on their \emph{observable history} $\mH_t^i$ (formulated later) and their own utility $u_{t,i}$. They cannot access others' utilities $(u_{t,j})_{j\ne i}$, nor their future utilities $(u_{\tau,i})_{\tau>t}$.
    \item The planner allocates the resource to an agent $i_t\in[K]\cup\{0\}$ based on the \emph{public history} $\mH_t^0$ (again, formulated later) and reports $\bm v_t=\{v_{t,i}\}_{i\in[K]}$. Choosing $i_t=0$ corresponds to not allocating the resource.
    \item After the allocation, the planner may decide to perform an {audit} or not, denoted by a binary variable $o_t\in\{0,1\}$. If $o_t=1$, the planner observes feedback $w_t\in [0,1]$ related to the winning agent's true utility, namely $u_{t,i_t}$; otherwise, they observe nothing, in which case we write $w_t=0$. \label{item:possible_audit}
    \item All public information---specifically reports $\bm v_t$, winning agent $i_t$, audit decision $o_t$, and outcome $w_t$---is revealed to all agents. The planner is free to request further information from the agents (thus allowing the planner to further interact with agents beyond their reports), but agents are also free to answer strategically. We denote by $f_{t,i}$ the answer of agent $i\in[K]$ in round $t\in [T]$. Notationally, we assume answers must lie in a fixed measurable space $F$, where $F=\{0\}$ corresponds to not asking further questions.
    \label{item:additional_questions}
\end{enumerate}
In the vast majority of this paper, we assume a \emph{perfect audit model} where the planner's audit $w_t$ exactly reveals the winner's true utility $u_{t,i_t}$, as formalized in \Cref{assump:noiseless}. (In \Cref{sec:noisy case}, we discuss two \emph{imperfect} audit models, namely \Cref{assump:adv noise model,assump:stoc noise model}, to which our proposed mechanism can be extended.)
\begin{assumption}[Perfect Audit Model]\label{assump:noiseless}
In a round $t\in [T]$ where the planner decides to audit the winner (i.e., $o_t=1$), the audit outcome $w_t$ equals the winner's true utility $u_{t,i_t}$; otherwise, the planner observes nothing and we write $w_t=0$. Equivalently, we write $w_t=o_tu_{t,i_t}$.
\end{assumption}

\paragraph{No Distributional Priors.}
We assume the planner has neither access to, nor any prior knowledge on the distributions $\{\mV_i\}_{i\in[K]}$. As a result, any distributional information must be learned by requesting audits, which is the critical difference between our setup and standard (non-monetary) resource allocation (see \Cref{sec:related work}).
However, the type distributions $\{\mV_i\}_{i\in [K]}$ are public knowledge among agents. This assumption, a standard one in robust mechanism design \citep{bergemann2005robust}, not only is technically necessary to define Bayesian equilibria, but also captures the prevalent information asymmetry in practice: participants often posses knowledge about their peers, whereas the planner acts as an outsider facing uncertainty.

\subsection{Game-Theoretic Definitions}
We start by formally defining histories at the beginning of a round $t\in[T]$. The \emph{public history} $\mH_t^0$ contains all agents' previous reports, allocations, audit decisions, audit outcomes, and agents' answers, {i.e.}, $\mH_t^0:=\set{(\bm v_\tau,i_\tau,o_\tau,w_\tau, \bm f_\tau)}_{\tau<t}$. For any agent $i\in[K]$, their \emph{observable history} $\mH_t^i$ additionally contains their private utilities, {i.e.}, $\mH_t^i:=\set{(u_{\tau,i},\bm v_\tau,i_\tau,o_\tau,w_\tau,\bm f_\tau)}_{\tau<t}$. Last, we define the {complete history} at round $t$ via $\mH_t:=\set{(\bm u_{\tau},\bm v_\tau,i_\tau,o_\tau,w_\tau, \bm f_\tau)}_{\tau<t}$. Note that the complete history is not available to any player and will only be used for proof purposes. For convenience, we define the space of public histories via $H_t^0:=([0,1]^K \times([K]\cup\{0\})\times \{0,1\}\times [0,1]\times F^K)^{t-1}$, and similarly the space of agent observable histories $H_t^i$ for $i\in[K]$ and the space of full histories $H_t$. We now define the planner's \emph{mechanism} and agents' \emph{strategies}. To allow for randomization, we further fix a measurable space $\Xi$ and a distribution $\mD_\xi$ on $\Xi$.

\begin{definition}[planner Mechanism]
A {planner mechanism} is a sequence of measurable functions $\bm M:=(M_t)_{t\in[T]}$ where $M_t:[0,1]^K\times H_t^0\times \Xi \to ([K]\cup\{0\}) \times \{0,1\}$. In each round $t\in [T]$, this mechanism maps reports $\bm v_t$, public history $\mH_t^0$, and internal randomness $\xi_{t,0}\sim \mD_\xi$ (sampled independently from all other random variables) to allocation $i_t$ and audit decision $o_t$ as $(i_t,o_t):=M_t(\bm v_t,\mH_t^0,\xi_{t,0})$.
\end{definition}

We assume the planner has full commitment power, i.e., they are free to choose \emph{any} mechanism $\bm M$. Classical economics theory reveals that it is in the planner's interest to announce this $\bm M$ as public information.

\begin{definition}[Agent Strategies]
\label{def:agent_strat_mechanism}
    A {report strategy} for agent $i\in[K]$ is a sequence of measurable functions $\pi_i^r:=(\pi^r_{t,i})_{t\in[T]}$ where $\pi^r_{t,i}:[0,1]\times H_t^i\times \Xi \to [0,1]$. Their report in round $t\in[T]$ is given by $v_{t,i}:=\pi^r_{t,i}(u_{t,i},\mH_t^i,\xi^r_{t,i})$, where $\xi^r_{t,i}\sim\mD_\xi$ is sampled independently from other random variables.

    An {answer strategy} (or flagging strategy, which we use interchangeably) for agent $i$ is a sequence of measurable functions $\pi^f_i:=(\pi^f_{t,i})_{t\in[T]}$ where $\pi^f_{t,i}:[0,1]\times [0,1]^K\times ([K]\cup\{0\})\times \{0,1\}\times [0,1]\times H_t^i\times \Xi \to F$. Their answer in round $t\in[T]$ (recall Step~\ref{item:additional_questions}) is given by $f_{t,i}:=\pi^f_{t,i}(u_{t,i},\bm v_t,i_t,o_t,w_t, \mH_t^i,\xi^f_{t,i})$, where $\xi^f_{t,i}\sim\mD_\xi$ is sampled independently from all other random variables.

    A {strategy} for agent $i$ is composed of a report and answer strategy, namely $\pi_i:=(\pi^r_i,\pi_i^f)$.

    A {joint strategy} for agents is a collection of all agent strategies $\pi:=(\pi_{i})_{i\in[K]}$. We also use the joint report strategy and flagging strategy, denoted by $\pi^r:=(\pi^r_i)_{i\in [K]}$ and $\pi^f:=(\pi^f_i)_{i\in [K]}$, respectively.
\end{definition}

For a given planner mechanism $\bm M$ and joint strategy for agents $\pi$, we define the value function (also known as V-function) of any agent $i\in[K]$ from any complete history $\mH_t\in H_t$ as 
\begin{equation}
V_i^{\pi}(\mH_t;\bm M):=\E_{\substack{\bm u\sim \mV,\bm v\sim \pi^r,\\(\bm i,\bm o)\sim \bm M, \bm f\sim\pi^f}}\left [\sum_{\tau=t}^T u_{\tau,i} \1[i_{\tau}=i]\middle \vert \mH_t\right ].\label{eq:V-func general}\end{equation}
When it is clear from the context, we often drop the parameter $\bm M$ for the ease of presentation.

The agents' objectives are to strategically maximize their own value functions. Formally, fixing a mechanism $\bm M$, we assume the agents' joint strategy forms a Perfect Bayesian Equilibrium (PBE).

\begin{definition}[Perfect Bayesian Equilibrium]\label{def:PBE}
Under a planner mechanism $\bm M$, a joint strategy $\pi=(\pi_i)_{i\in [K]}$ is a Perfect Bayesian Equilibrium (PBE) if, for any possible history, any unilateral deviation of any agent cannot increase their value function. That is, for any alternative strategy $\pi_i'$ of agent $i\in [K]$, let $\pi'$ be the joint strategy where agent $i$ follows $\pi_i'$ and any other agent $j\ne i$ follows $\pi_j$, then we must have
\begin{equation}\label{eq:PBE}
V_i^{\pi'}(\mH_t;\bm M)\le 
V_i^{\pi}(\mH_t;\bm M),\quad \forall t\in[T],\mH_t\in H_t.
\end{equation}
\end{definition}

While part of the complete history $\mH_t$ is inaccessible to the agent $i$, we remark that \Cref{eq:PBE} naturally implies for any observable $\mH_t^i\in H_t^i$ that $\E[V_i^{\pi'}(\mH_t;\bm M)\mid \mH_t^i]\le \E[V_i^{\pi}(\mH_t;\bm M)\mid \mH_t^i]$.

\subsection{Objectives of the planner}\label{sec:setup planner objective}
The planner aims to design a mechanism that both (in expectation) maximizes the social welfare $\sum_{t=1}^T u_{t,i_t}$ and minimizes the number of audits requested, at a corresponding PBE. For notational simplicity, we define $u_{t,0}=0$ for those rounds where $i_t=0$, i.e., no welfare arises if the allocation is rejected. The welfare maximization objective is equivalent to minimizing the regret w.r.t. optimal allocations, as defined below.

\begin{definition}[Regret and Expected Number of Audits]
For a planner mechanism $\bm M$ and an agents' joint strategy $\pi$, the regret and the expected number of audits are
\begin{align*}
\mR_T(\pi,\bm M)&= \E_{\substack{\bm u\sim \mV,\bm v\sim \pi^r,\\(\bm i,\bm o)\sim \bm M, \bm f\sim\pi^f}}\sqb{\sum_{t=1}^T \left (\max_{i\in [K]} u_{t,i}-u_{t,i_t}\right )},\\
\mB_T(\pi,\bm M)&= \E_{\substack{\bm u\sim \mV,\bm v\sim \pi^r,\\(\bm i,\bm o)\sim \bm M, \bm f\sim\pi^f}}\sqb{\sum_{t=1}^T o_t}.
\end{align*}
\end{definition}

We remark that the social welfare regret $\mR_T(\pi,\bm M)$ compares established utility $u_{t,i_t}$ to the first-best utility $\max_{i\in [K]} u_{t,i}$ in every round, which is much stronger than the online learning (static) regret that compares to a fixed optimal agent in hindsight. This regret notion is tractable because reports from the agents---as extra pieces of information---are made available before decision-making.

\paragraph{Deterministic Tie-Breaking Rule.}
When comparing utilities or reports across agents, the planner always uses the lexicographical order of agent indices to break ties. That is, for any two agents $i\ne j$, whenever we write $u_i<u_j$, it means either \textit{i)} $u_i<u_j$, or \textit{ii)} $u_i=u_j$ but $i<j$. Employing a deterministic tie-breaking rule is critical since otherwise the probability of allocation (the ``ex-ante first-best utility'' we define later) can change drastically when ties are not broken properly, as observed by \citet{dutting2024online}.

\section{Non-Monetary Mechanisms without Distributional Priors}\label{sec:main mechanism}
While dominant approaches for non-monetary online resource allocation, such as {artificial currency} \citep{gorokh2021monetary} and {promised utilities} \citep{balseiro2019multiagent}, allow sharp convergence to (approximate) Bayesian equilibria when utility distributions are known, it is unclear how to adapt them to prior-free settings. 
Specifically, the artificial currency framework assigns an initial budget to each agent based on their expected total payment throughout the game (which depend on, for example, the distribution of the second-highest utilities), while the promised utility framework relies on computing fixed points of Bellman-like functionals.
Our audit-based mechanism instead relies on a different one-dimensional statistics called \emph{ex-ante first-best utility} (i.e., agents' expected per-round utility under truthful reports and welfare-maximizing allocation), which is easier to estimate in the strategic and online environment.
Before introducing this statistics, however, we begin by introducing a baseline mechanism to demonstrates how ex-post audits elicit truthful behavior.

\subsection{Baseline Mechanism: Fixed-Probability Auditing Yields a Linear Trade-Off}\label{sec:simple mechanism}

\begin{algorithm}[!t]
\caption{Baseline Mechanism $\bm M^0(p)$}
\label{alg:simple_mechanism}
\begin{algorithmic}[1]
\Require{Number of rounds $T$, number of agents $K$, audit probability $p$}
\Ensure{Allocations $i_1,i_2,\ldots,i_T\in \{0\}\cup [K]$}
\State Initialize the set of alive agents $\mA_1=[K]$.
\For{$t=1,2,\ldots,T$}
\State Collect reports $v_{t,i}\in [0,1]$ from $i\in \mA_t$ and allocate to agent $i_t=\argmax_{i\in \mA} v_{t,i}$.
\State Decide whether to audit the winner via $o_t\sim \text{Ber}(p)$. Observe audit outcome $w_t$.
\If{$o_t=1$ \textbf{and} $w_t \neq v_{t,i_t}$} 
\State Eliminate agent $i_t$ permanently: $\mA_{t+1}\gets \mA_t\setminus \{i_t\}$.\label{line:simple mech elimination}
\Else
\State All agents stay alive: $\mA_{t+1}\gets \mA_t$.
\EndIf
\EndFor
\end{algorithmic}
\end{algorithm}

To motivate our adaptive approach, we first analyze a baseline mechanism, $\bm M^0(p)$. This mechanism utilizes the concept of \emph{future punishment}: since the planner lacks the ability to revoke allocations \citep[unlike previous mechanisms utilizing ex-post audits, e.g.,][]{mylovanov2017optimal,li2020mechanism}, they can only impose threats about future allocations. Specifically, in each round $t\in [T]$, this mechanism \textit{i)} allocates to the agent with highest report, namely $i_t=\argmax_i v_{t,i}$ (with ties broken lexicographically), \textit{ii)} audits the winner with a fixed and public probability $p\in (0,1]$, i.e., $o_t\sim \text{Ber}(p)$, and \textit{iii)} eliminates agent $i_t$---by ignoring their reports forever---if the audit reveals a misreport. The pseudo-code of $\bm M^0(p)$ is provided in \Cref{alg:simple_mechanism}.

Under the perfect audit model defined in \Cref{assump:noiseless} (where the audit outcome $w_t$ equals $u_{t,i_t}$ when audited, and $0$ otherwise), $\bm M^0(p)$ establishes a linear trade-off between the social welfare regret $\mR_T$ and the expected number of audits $\mB_T$. This baseline performance is visualized in the upper-right region of \Cref{fig:main_plot}.

\begin{theorem}[Performance of Fixed-Probability Auditing]\label{thm:simple mech main theorem}
Fix parameter $p\in(0,1]$ and assume \Cref{assump:noiseless} holds. Under mechanism $\bm M^0(p)$, for any utility distributions $\{\mV_i\}_{i\in[K]}$, any PBE $\pi^\ast$ satisfies:
\begin{align*}
\mR_T(\pi^\ast,\bm M^0(p))&\leq \frac{K(K-1)}{p},\\ \mB_T(\pi^\ast,\bm M^0(p))&=pT.
\end{align*}
\end{theorem}

The formal proof of \Cref{thm:simple mech main theorem} is in \Cref{sec:appendix fixed-p}.
Intuitively, this mechanism leverages the threat of permanent elimination (\Cref{line:simple mech elimination}) to deter misreporting. To see this, we consider an ``alive'' (i.e., not yet eliminated) agent $i \in \mA_t$ in round $t\in [T]$ who is hesitating between the following two options:
\begin{enumerate}
\item \textbf{Truthful} ($v_{t,i}=u_{t,i}$). This incurs no risk of elimination, preserving their access to allocations.
\item \textbf{Misreporting} ($v_{t,i}=1$). This secures an immediate gain of $u_{t,i}$, but risks losing all future utility w.p. $p$.
\end{enumerate}

A rational agent only prefers misreporting if the expected penalty (due to elimination) is outweighed by the immediate gain (due to allocation). Since $u_{t,i}\le 1$, truth-telling must be better whenever $p \cdot \E[\text{future utility}] \ge 1\ge u_{t,i}$.
Therefore, by imposing future punishments, $\bm M^0(p)$ successfully elicits truthful behavior from those agents with high expected future utilities (exceeding $p^{-1}$). Agents with low expected future utility may still misreport, but their impact on the total social welfare---and thus to the regret---is relatively limited.

\subsection{The \mechname Mechanism and Main Result}

\begin{algorithm}[!t]
\caption{Main Mechanism \mechname}
\label{alg:mechanism}
\begin{algorithmic}[1]
\Require{Number of rounds $T$, number of agents $K$, minimum winning utility $c$}
\Ensure{Allocations $i_1,i_2,\ldots,i_T\in \{0\}\cup [K]$, where $i_t=0$ means no allocation to anyone}
\State Initialize the set of alive agents $\mA_1=[K]$, epoch counter $\ell\gets 1$ and start round of the epoch $t_1=0$. For any agent $i\in [K]$, initialize their empirical winning probability $\hat q_{1,i}=0$. \label{line:initialization}
\For{$t=1,2,\ldots,T$}
\State Collect reports $v_{t,i}$ from alive agents $i\in \mA_t$. Set $i_t=\argmax_{i\in \mA_t} v_{t,i}$. \label{line:report}
\State By default, keep $\mA_{t+1}\gets \mA_t$ and $\hat{\bm q}_{t+1}\gets \hat{\bm q}_t$ unchanged.
\If{$v_{t,i_t}\geq c$} 
\State Allocate to the agent $i_t$ and compute an audit probability $\hat p_{t,i_t}:=\min\left (\frac{4(1+K^2)}{(T-t)\hat q_{t,i_t} c},1\right )$. \label{line:check prob}
\State Decide whether to audit the winner by $o_t\sim \text{Ber}(\hat p_{t,i_t})$. Observe audit outcome $w_t$. \label{line:randomly_check}
\If{$o_t=1$ \textbf{and} $v_{t,i_t}>w_t$} \Comment{Winner marked up (under \Cref{assump:noiseless}, $w_t=o_tu_{t,i_t}$)} \label{line:elimination}
\State Eliminate agent $i_t$ permanently by updating $\mA_{t+1}\gets \mA_t\setminus \{i_t\}$. Start a new epoch $\ell\gets \ell+1$ and $t_{\ell}=t+1$. Reset $\hat q_{t+1,i}=0$ for all alive agents $i\in\mA_{t+1}$.
\EndIf

\If{$\mA_{t+1}=\mA_t$ \textbf{and} $\hat q_{t,i_t}=0$} \Comment{Winner does not have winning probability estimate} 
\State Estimate the winning probability of agent $i_t$ as $\hat q_{t+1,i_t}=\frac{\abs{\set{\tau\in [t_\ell,t]:i_t=i}}}{t-t_\ell+1}$. \label{line:estimate winning prob}
\State Ask every agent $i\in [K]$ whether to flag $\hat q_{t+1,i_t}$ is biased, denoted by $f_{t,i}\in \{0,1\}$. \label{line:flagging}
\If{$\bm f_t\neq \bm 0$} \Comment{Someone flags the winner for having a biased $\hat q_{i_t}$}
\State Continue estimation for $i_t$ by resetting $\hat q_{t+1,i_t}=0$. \label{line:restart estimation}
\EndIf
\EndIf
\Else \Comment{No allocation if $v_{t,i_t}<c$}
\State Do not allocate to anyone, that is, setting $i_t=0$.
\EndIf
\EndFor
\end{algorithmic}
\end{algorithm}

While the baseline mechanism $\bm M^0(p)$ well incentivizes those agents with expected future utilities exceeding $p^{-1}$ (from now on, \emph{high-future-utility} agents), the remaining \emph{low-future-utility} agents---whose expected future utilities are bounded by $p^{-1}$---may still misreport, resulting in the $\O(p^{-1})$ regret.
A natural idea is making the audit probability \emph{adaptive}, such that in every round almost all agents are high-future-utility and thus hesitate to misreport.
However, this idea faces significant challenge in our prior-free setting: to make this adaptive audit scheme effective, we need to estimate agents' expected future utility online. Each agent, on the other hand, may strategize to distort this process if an accurate estimation harms their utility.

Therefore, in addition to designing an adaptive audit scheme, we also need to \emph{align} agents' individual incentives with the planner's objective of estimating future utilities. This results in our main mechanism \mechname (\textbf{Ada}ptive \textbf{Audit}ing; \Cref{alg:mechanism}).
As we will introduce in \Cref{sec:description of check prob,sec:description of est and flag}, it incorporates two algorithmic ideas: adaptive audits leveraging expected future utilities, and incentive-aligned flagging for online estimation.
\mechname requires the following regularity assumption on agents' utility distributions.

\begin{assumption}[One Agent's Utility Away from Zero]\label{assumption:min report}
There is an agent $i_0\in[K]$ with utility at least $c>0$ almost surely, i.e., $\mV_{i_0}$ is supported on $[c,1]$, where $c>0$ is a public constant.
\end{assumption}

We remark that the identity of such $i_0$ needs not be known to the planner, and that we only require \emph{one} agent to have their utilities bounded away from $0$.
\Cref{assumption:min report} is a mild regularity assumption in the sense that it is automatically satisfied if, for example, the central planner has a reserve price of $c>0$ for the resource or has a fixed cost $c>0$ for each round of allocation and audit, which is equivalent to an additional agent whose utility is deterministically $c$ \citep[see, e.g.,][]{myerson1981optimal,riley1981optimal}.

Under \Cref{assumption:min report}, \mechname enjoys the following performance guarantee, which means a constant regret and logarithmic expected number of audits when focusing on the dependency on $T$, the length of the game. The proof of \Cref{thm:deterministic main thm} is sketched in \Cref{sec:sketch varying-p} and formally included in \Cref{sec:appendix varying-p,sec:appendix regret and audits,sec:appendix deterministic main theorem}.

\begin{theorem}[Main Theorem of \mechname under Perfect Audit Model]\label{thm:deterministic main thm}
Consider the perfect audit model in \Cref{assump:noiseless}.
Under the mechanism \mechname defined in \Cref{alg:mechanism}, for any utility distributions $\{\mV_i\}_{i\in[K]}$ satisfying \Cref{assumption:min report}, there exists a PBE of agents' strategies $\pi^\ast$ such that
\begin{align*}
\mR_T(\pi^\ast,\mechname)&\leq K^2,\\\mB_T(\pi^\ast,\mechname)&= \O\left (\frac{K^3\log T}{c}\right ).
\end{align*}
\end{theorem}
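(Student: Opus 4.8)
The plan is to follow the three-step reduction sketched in \Cref{sec:technical overview} for the regret and to bound the expected number of audits by a separate, more computational, argument. The first task is to make precise the \emph{auxiliary game} and its equilibrium $\bm\pi^\ast$: in this game every agent is forced to report truthfully and to flag ``selflessly'' (flag exactly when an estimate $\hat\mu_i$ currently in use lies outside $[\tfrac12\mu_i,2\mu_i]$) for all but a final window of rounds, while in the final window an agent may report $v_{t,i}\neq u_{t,i}$ but is eliminated forever the first time it thereby wins. After discretizing the (compact) report space or via a measurable-selection argument, this is a finite game of perfect information up to the private utilities, so a PBE $\bm\pi^\ast$ exists by backward induction. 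Because flags are selfless along $\bm\pi^\ast$, \Cref{assumption:min report} together with a Chernoff bound ensures that every estimate actually used by \mechname lies in $[\tfrac12\mu_i,2\mu_i]$, so along $\bm\pi^\ast$ the mechanism effectively runs with known distributions; this is the content I would take from \Cref{lem:equiv between actual and no-flagging}.

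For the regret of $\bm\pi^\ast$: along the induced process the resource goes to the agent of highest true utility in every round \emph{except} within the final strategic window and during the $T$-independent ``calibration'' rounds triggered initially and after each of the $\le K$ possible eliminations (the lies $\bm\pi^\ast$ takes in the window can cause a few of these). Each such round contributes at most $1$ to the regret; counting them, and using \Cref{lem:u lower and upper bound} to bound the welfare impact of the deviations that $\bm\pi^\ast$ actually takes, gives $\mathcal R_T(\bm\pi^\ast,\mechname)\le K^2$.

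The crux is that $\bm\pi^\ast$ remains a PBE of the \emph{original} game. Given an arbitrary unilateral deviation $\bm\pi_i'$ there --- which may lie early, flag strategically, or try to corrupt the estimates --- I would first invoke \Cref{lem:equiv between actual and no-flagging} to replace it by an outcome-equivalent deviation that never mis-flags, reducing to deviations that change only the reports. For such a deviation the adaptive audit rule carries the argument: in every round in which the deviator wins with a false report it is audited with probability $\approx(\text{its own expected future gain})^{-1}$, so its expected continuation loss from being caught is $\gtrsim1$, which dominates the $\le1$ instantaneous gain; telescoping this round by round shows that no deviation improves on truthful reporting before the final window, and inside the window \Cref{lem:min report with u is good} shows the forced elimination makes lying unprofitable. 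Hence $V_i^{\bm\pi'}(\mH_t;\mechname)\le V_i^{\bm\pi^\ast}(\mH_t;\mechname)$ from every history, so $\bm\pi^\ast$ is a PBE of the original game and the regret bound of the previous paragraph applies to it.

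It remains to bound $\mathcal B_T(\bm\pi^\ast,\mechname)$. Audits are of two kinds. \emph{Punishment} audits: on the event $\{i_t=i\}$ the winner is audited with probability at most $(\widehat{\text{future gain of }i})^{-1}$, and controlling both this reciprocal and the per-round win probability, the sum over $t$ telescopes into a harmonic sum; carrying the polynomial-in-$K$ and $1/c$ factors that enter because fair shares must be resolved down to the $\Theta(c/K^2)$ scale and because \Cref{assumption:min report} is the only available lower bound on per-round value, this contributes $\O(K^2\log T/c)$ per agent and $\O(K^3\log T/c)$ overall, with the extra reports $\bm\pi^\ast$ makes in the final window adding only lower-order terms. \emph{Estimation} audits: a (re)calibration is triggered only initially, by one of the $\le K$ eliminations, or by a flag, and along $\bm\pi^\ast$ flags fire only with probability $1/\poly(T)$; each calibration costs at most $\O(K^2\log T/c)$ audits (the number of rounds needed to observe every relevant fair share $\Omega(\log T)$ times), so the total is again $\O(K^3\log T/c)$. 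The step I expect to be the main obstacle is the PBE verification: since the audit probability is itself built from an \emph{estimated} future gain, an agent could in principle both profit from a lie and steer the very estimate that governs its own punishment, and ruling this out uniformly over all histories and all (measurable, possibly randomized) deviations --- simultaneously with the selfless-flagging incentive argument --- is where the delicate work lies; establishing existence and measurability of $\bm\pi^\ast$ in the auxiliary game with its continuum of reports is a secondary technical point.
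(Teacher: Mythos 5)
Your high-level architecture matches the paper's (auxiliary game, correspondence lemma, V-function bounds, a ``marking up is dominated'' lemma, then a round-by-round/performance-difference argument), but your auxiliary game is mis-specified in a way that breaks the reduction. You force agents to report \emph{truthfully} outside a final window; the paper's auxiliary game (\Cref{def:auxiliary game}) instead allows arbitrary \emph{mark-downs} ($v_{t,i}\le u_{t,i}$) at all times and only forbids marking \emph{up} while $p_{t,i}\le 1$ (an agent- and history-dependent condition via $q_{t,i}$ and $\mA_t$, not a fixed final window). This distinction is load-bearing: the only deviation the paper can prove unprofitable is marking up (\Cref{lem:min report with u is good} caps reports at $u_{t,i}$, i.e., it compares against a mark-down-or-truthful strategy, never against forced truth-telling). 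Marking down \emph{can} be weakly or strictly beneficial --- an agent may sacrifice a win so that a rival wins-by-lying and gets eliminated --- so a PBE of your truth-forced auxiliary game need not survive mark-down deviations in the original game, and Step~3 of your reduction has no argument to close that hole. Relatedly, your regret accounting (``each bad round contributes at most $1$, count them'') is not how the $K^2$ bound is obtained; the paper derives it directly from the V-function lower bound $V_i^{\bm\pi^\ast}(\mH_1)\ge T\mu_{1,i}-K$ summed over agents against the benchmark $T\sum_i\mu_{1,i}$.

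The second gap is in the audit bound for the estimation phases. You assert that along $\bm\pi^\ast$ ``flags fire only with probability $1/\poly(T)$'' and invoke a Chernoff bound to say every estimate lands in $[\tfrac12\mu_i,2\mu_i]$. But the whole difficulty the flagging component addresses is that concentration does \emph{not} apply: agents may strategically mark down during estimation, which systematically biases $\hat q_i$ downward, triggers flags, and prolongs the estimation phase --- and these mark-downs are part of the equilibrium, not a low-probability event. The paper controls this with a separate incentive-based lemma (\Cref{lem:number of marking down bound}): the expected number of rounds in which an alive agent forgoes a fair win by marking down is at most $2K^3/c$, proved by comparing $\bm\pi^\ast$ against its no-mark-down modification and using the PBE property together with the V-function sandwich. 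Without this lemma (or a substitute), neither the length of the estimation phases nor the number of audits spent there is controlled, and the correspondence lemma itself cannot rely on concentration --- it relies on the deterministic guarantee that well-behaved flagging enforces $\hat q_{t,j}\le 4q_{t,j}$, which is what makes $\hat p_{t,i}=1$ whenever a permitted mark-up occurs.
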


Compared to the simple mechanism $\bm M^0(p)$, while \mechname additionally requires \Cref{assumption:min report} and only ensures the \emph{existence} of a PBE $\pi^\ast$ inducing low $\mR_T$ and $\mB_T$, the trade-off achieved in terms of expected regret and number of audits is much more favorable, as we demonstrated in the center of \Cref{fig:main_plot}.

\subsection{Adaptive Audits Leveraging Expected Future Utilities}\label{sec:description of check prob}
We now detail the rationale behind the adaptive audit probabilities employed in \mechname ($\hat p_{t,i_t}$ defined in \Cref{line:check prob} of \Cref{alg:mechanism}). Recall our analysis of the baseline mechanism $\bm M^0(p)$ in \Cref{sec:simple mechanism}: the threat of elimination essentially divides agents into two categories, \emph{high-future-utility} agents with $\E[\text{future utility}]\ge p^{-1}$ who hesitate to misreport, and \emph{low-future-utility} agents willing to secure a rare win by misreporting.
The fixed audit probability in the baseline mechanism $\bm M^0(p)$ is inherently inefficient for two reasons:
\begin{enumerate}
\item A fixed $p$ imposes a uniform threshold $p^{-1}$ between these two categories. Without distributional information, it is impossible to calibrate this $p$ optimally: a conservative $p$ wastes audits on high-future-utility agents who are naturally deterred, while a lenient $p$ fails to restrain agents with low future utilities.
\item An agent's future utility is dynamic: it not only depends on their specific utility distribution, but also \emph{decays} as the game progresses (i.e., the ``future'' becomes shorter since our game lasts for a finite number of rounds). An agent who is high-future-utility at the beginning can be low-future-utility near the end.
\end{enumerate}

We therefore \emph{adapt} the audit probability for each agent $i\in[K]$ and round $t\in[T]$, discouraging misreports while minimizing audits.
The governing parameter for this adaptation is the agent's \emph{ex-ante first-best utility}, i.e., the expected per-round utility under the welfare-maximizing allocation, which we formally define as:
\begin{equation}\label{eq:fair share informal}
\mu_i=\E_{\bm u\sim \mV}\big [u_{i}\1[u_i>u_{j},\forall j\ne i]\big ],\quad \forall i\in [K],
\end{equation}
where ties are broken lexicographically by agent index (recall \Cref{sec:setup planner objective}). Note that due to potential eliminations, this $\mu_i$ is in fact time-varying: it depends on the set of alive agents $\mA_t$. For the ease of presentation, we omit this dependency in the current overview; the formal definition is in \Cref{eq:fair share}.

As a first step, we introduce an idealized \emph{full-information mechanism}, denoted by $\bm M^\ast$. In each round $t\in[T]$, $\bm M^\ast$ allocates to the alive agent with the highest report (namely, $i_t:=\argmax_{i\in \mA_t} v_{t,i}$) and audits this winner with a dynamic probability $p_{t,i_t^\ast}$ that is inversely proportional to their expected future utility:
\begin{equation}\label{eq:ideal mech check prob}
o_t\sim \text{Ber}(p_{t,i_t}^\ast),\quad p_{t,i_t}^\ast := \min\left\{\frac{1}{(T-t)\mu_{i_t}}, 1\right\}.
\end{equation}
If the audit reveals a misreport (that is, $o_t=1$ and $u_{t,i_t}\neq v_{t,i_t}$), the agent is permanently eliminated as \Cref{line:simple mech elimination} in \Cref{alg:simple_mechanism}.
We assert that this adaptive scheme effectively aligns agents' incentives with the planner's welfare objective, achieving $\text{poly}(K)$ regret using a logarithmic number of audits in expectation.

\paragraph{$\bm M^\ast$ Ensures $\text{poly}(K)$ Regret.}
Consider an agent $i\in [K]$ deciding their report $v_{t,i}$ for round $t\in [T]$ after observing their private utility $u_{t,i}$ and observable history $\mH_t^i$. Similar to the analysis of $\bm M_0(p)$ in \Cref{sec:simple mechanism}, we analyze the trade-off between \textit{i)} the immediate gain from misreporting, and \textit{ii)} the potential future loss due to elimination.
Notationally, let $V_t$ denote agent $i$'s expected future utility from round $t+1$ onward should they remained alive. For the purpose of this heuristic derivation, we treat $V_t$ as a constant independent to current-round reports.
Agent $i$ faces two primary options:
\begin{enumerate}
\item \textbf{Truthful} ($v_{t,i}=u_{t,i}$). The agent may lose the current round (if their $u_{t,i}$ is not high enough) and thus do not score any immediate gain, but they always stay alive. The expected total payoff is therefore at least:
\begin{equation*}
\text{Payoff}_{\text{truthful}} \approx \E[u_{t,i}\cdot\1[\text{win}]] + V_t \ge V_t.
\end{equation*}
\item \textbf{Misreporting} ($v_{t,i}=1$). This ensures an immediate gain of $u_{t,i}$, but risks being audited and eliminated with probability $p_{t,i}^\ast$. The expected total payoff is therefore upper bounded by:
\begin{equation*}
\text{Payoff}_{\text{misreport}} \le 1 + (1-p_{t,i}^\ast)V_t.
\end{equation*}
\end{enumerate}

Agent $i$ prefers truthful reporting when $\text{Payoff}_{\text{truthful}} \ge \text{Payoff}_{\text{misreport}}$, which simplifies to the condition $p_{t,i}^\ast \ge 1/V_t$.
Crucially, if near-truthful reports are also elicited for all other agents---as in our equilibrium strategy $\pi^\ast$ in \Cref{thm:deterministic main thm}---the expected future utility $V_t$ is simply the sum of expected per-round utilities:
\begin{equation*}
V_t \approx \sum_{\tau=t+1}^T \mu_i = (T-t)\mu_i,
\end{equation*}
where $\mu_i$ is defined in \Cref{eq:fair share informal}.
Thus $p_{t,i}^\ast \approx \frac{1}{(T-t)\mu_i}$ is sufficient to deter misreporting for agents with high future utilities ($V_t \ge 1/p_{t,i}^\ast$). For the remaining agents, while the condition $p_{t,i}^\ast\ge 1/V_t$ cannot hold and thus they are free to misreport, \Cref{eq:ideal mech check prob} gives $p_{t,i}^\ast=1$---so misreporting results in an audit and elimination almost surely.
Thus under $\bm M^\ast$, there is an equilibrium $\pi^\ast$ where agents either \textit{i)} report truthfully, or \textit{ii)} misreport, win, and eliminated. Since each agent can at most be eliminated once, the total regret is bounded by $\O(K)$.

\paragraph{$\bm M^\ast$ Ensures Logarithmic Number of Audits.}
Moreover, after eliciting (mostly) truthful behavior from agents, this specific choice of $p_{t,i}^\ast$ in \Cref{eq:ideal mech check prob} naturally gives a logarithmic number of audits in expectation.
Analogous to $\mu_i$, let $q_i$ denote the \emph{ex-ante winning probability} that agent $i$ wins under the first-best allocation:
\begin{equation}\label{eq:fair winning prob informal}
    q_{i}:=\Pr_{\bm u\sim \mV}\{u_i>u_{j},\forall j\ne i\},\quad \forall i\in [K].
\end{equation}
Since agents are (mostly) truthful, agent $i$ wins roughly $q_i$ fraction of the rounds. Thus,
\begin{align*}
    \mB_T = \E\left [\sum_{t=1}^T o_t\right ] &= \sum_{t=1}^T \sum_{i=1}^K \Pr\{\text{agent } i \text{ wins}\} \cdot p_{t,i}^\ast \\
    &\approx \sum_{t=1}^{T} \sum_{i=1}^K q_{i} \cdot \frac{1}{(T-t) \mu_i}.
\end{align*}
Recall from \Cref{assumption:min report} that at least one agent has utility bounded away from zero by $c$. Thus, in order for any agent $i$ to win under the first-best allocation, we must have $u_i\ge c$. By comparing \Cref{eq:fair share informal,eq:fair winning prob informal}, we immediately have $q_i \le \mu_i/c$. Plugging this into the above equality yields:
\begin{equation*}
    \mB_T \lesssim \sum_{i=1}^K \frac{\mu_i}{c \mu_i} \sum_{t=1}^{T} \frac{1}{T-t} = \frac{K}{c} \sum_{k=1}^{T} \frac{1}{k} = \O\left (\frac{K\log T}{c}\right ).
\end{equation*}

While $\bm M^\ast$ offers desirable theoretical guarantees, it relies on the ex-ante first-best utilities $\bm\mu=(\mu_i)_{i\in [K]}$, which are unknown in advance. In practice, $\bm \mu$ must be learned online via agents' historical reports. However, simply substituting empirical estimates $\hat{\bm \mu}$ into the mechanism fails, because strategic agents can manipulate their reports to bias these estimates. We now detail our solution to this challenge of online estimation.

\subsection{Incentive-Aligned Flagging for Online Estimation}\label{sec:description of est and flag}
To practically implement $\bm M^\ast$ without distributional priors, we must estimate the $p_{t,i}^\ast \approx \frac{1}{(T-t)\mu_i}$ in \Cref{eq:ideal mech check prob} via historical data. Directly estimating the ex-ante first-best utility $\mu_i=\E_{\bm u\sim \mV}[u_i \1[u_i>u_j,\forall j\ne i]]$ is fragile, as it requires observing exact utility values that are easily manipulated by agents. In \Cref{alg:mechanism}, we instead opt for the ex-ante winning probability $q_i$---defined in \Cref{eq:fair winning prob informal} as $q_i=\Pr_{\bm u\sim \mV}\{u_i>u_j,\forall j\ne i]\}$---whose plug-in estimation relies only on the \emph{identity} of the winner $i_t$ but not their \emph{report}.
These two quantities are closely related under \Cref{assumption:min report}: since $\1[u_i>u_j,\forall j\ne i]$ implies $u_i>u_{i_0}\ge c$, we must have $\mu_i \ge c \cdot q_i$. Establishing a reliable estimate $\hat{q}_i \approx q_i$ therefore enables us to set the audit probability as $\hat{p}_{t,i} \approx \frac{1}{(T-t) c \hat{q}_i}$.

\paragraph{Challenge of Strategic Environments.}
In a hypothetical scenario where all agents are non-strategic and truthful, online estimating $q_i$ is straightforward: for any $i\in [K]$, the intervals between two wins of agent $i$ follow a geometric distribution with parameter $q_i$. A sample size of $\O(\log T)$ suffices to w.h.p. ensure $\hat{q}_i$ is within a constant multiplicative factor of $q_i$, thus giving an audit probability $\hat p_{t,i}$ close enough to $p_{t,i}^\ast$.

However, in the strategic case, historical data---for example, the past winners $\{i_\tau\}_{\tau<t}$---is generated by self-interested agents who \emph{anticipate} how current outcomes affect future estimates. This introduces a correlation between agents' strategic behavior and the planner's online estimation, and standard concentration bounds cannot apply.
In fact, even if agents do not directly manipulate their reports (which directly affects $\hat q_i$), they can still selectively deviate based on the \emph{conditional failure probability} of the estimator.

To see this, consider a heuristic model where $\hat q_i$ is the empirical mean of a stochastic process $X_1, X_2, \ldots, X_N$ (e.g., interval between consecutive wins), such that in non-strategic settings this estimation is w.h.p. close:
\begin{equation}\label{eq:trivial iid concentration}
\Pr_{\text{non-strategic}}\left \{\hat q_i\not \in \left [\frac 12 q_i,2q_i\right ]\right \}\le \frac{1}{T}.
\end{equation}

Suppose an agent can influence some $X_n$'s at the risk of potential elimination. That is, right before each $X_n$ is generated, based on $X_1,X_2,\ldots,X_{n-1}$, this agent decides whether to use this ability and make $X_n$ arbitrary (if not, $X_n$ is identically distributed as the one in \Cref{eq:trivial iid concentration}). This agent can adopt the following strategy:
\begin{enumerate}
\item Stay truthful for the first $N/2$ rounds, i.e., $X_1,X_2,\ldots,X_{N/2}$ are identically distributed as in \Cref{eq:trivial iid concentration}.
\item Compute the conditional failure probability, namely $Y:=\Pr\{\hat q_i\not\in [\frac 12 q_i,2q_i]\mid X_1,X_2,\ldots,X_{N/2}\}$.
\item If $Y$ is beyond a threshold, say $Y\ge 1/\sqrt T$, this agent deviates to a malicious strategy which gives a much higher individual utility whenever $\hat q_i\not \in [\frac 12 q_i,q_i]$. Otherwise, this agent remains truthful.
\end{enumerate}

Such a strategy greatly amplifies the probability that $\hat q_i\not \in [\frac 12 q_i,2q_i]$: \textit{i)} according to \Cref{eq:trivial iid concentration}, with probability $1/T$ (which is negligible), the plug-in estimator $\hat q_i$ fails on its own; and \textit{ii)} once this agent deviates to the malicious strategy---happening whenever $Y\ge 1/\sqrt T$---no guarantee on the relationship between $\hat q_i$ and $q_i$ can be derived.
We cannot say much about \textit{ii)}: the only thing we know from \Cref{eq:trivial iid concentration} is $\E[Y]\le 1/T$, which only gives $\Pr\{Y\ge 1/\sqrt T\}\le 1/\sqrt T$ according to Markov inequality.
Therefore, even in this heuristic example, due to one strategic agent, the failure probability of such $\hat q_i$ is already amplified from $1/T$ to $1/\sqrt T$.

\paragraph{Solution: Incentive-Aligned Flagging Component.}
To resolve the correlation between agents' strategic behavior and planner's online estimation, we introduce an \emph{incentive-aligned flagging component} that allows agents to trigger a reset if they find estimates biased. This component (\Cref{line:flagging} in \Cref{alg:mechanism}) leverages the agents' own self-interest to verify the estimation process. The incentive alignment works in two directions:
\begin{itemize}
\item \textbf{Case 1: Over-estimation ($\hat{q}_i \gg q_i$).} If agent $i$ is estimated to win more often than they should, their calculated audit probability $\hat{p}_{t,i}\approx \frac{1}{(T-t)c \hat q_i}$ becomes much smaller than the $p_{t,i}^\ast\approx \frac{1}{(T-t)\mu_i}$ used in $\bm M^\ast$. This motivates agent $i$ to misreport frequently, which reduces the winning chances of all other agents $j \ne i$. In this scenario, the ``victim'' agents $j$ have a strong incentive to flag and invalidate this biased estimate.
\item \textbf{Case 2: Under-estimation ($\hat{q}_i \ll q_i$).} On the other hand, if agent $i$ is estimated to win rarely, their audit probability $\hat{p}_{t,i}\approx \frac{1}{(T-t) c \hat q_i}$ becomes aggressively high, which unfairly penalizes agent $i$. In this scenario, agent $i$ themself has the incentive to flag this biased estimate and request a fair re-estimation.
\end{itemize}

By incorporating this flagging capability, \mechname ensures that any significant error in $\hat q_i$ is identified and corrected by the affected parties. In \Cref{thm:equivalence of PBE}, we prove that this design admits a PBE where agents adopt this flagging strategy (formalized in \Cref{def:well-behaved flagging}). This consequently secures the accuracy of $\hat{q}_i$ throughout the game, and recovers the regret guarantee of $\bm M^\ast$ without any distributional information.

\section{Analysis: Reduction to Auxiliary Games}\label{sec:sketch varying-p}

In this section, we outline the theoretical analysis of \mechname and prove the performance guarantees stated in \Cref{thm:deterministic main thm}.
Due to the absence of distributional information, standard techniques relying on the revelation principle are inapplicable. To overcome this, as another main technical contribution, we develop a \emph{reduction-based framework} that maps the complex extensive-form game to a tractable ``auxiliary game.''

We first outline the analytical challenges and our reduction-based framework in \Cref{sec:technical overview auxiliary game}.
We then formally define the auxiliary game in \Cref{sec:sketch varying-p Part I}, establish the correspondence between games in \Cref{thm:equivalence of PBE} (main steps proved in \Cref{sec:sketch varying-p Part I 1,sec:sketch varying-p Part I 2,sec:sketch varying-p Part I 3}), and finally derive the bounds on regret and audits in \Cref{sec:sketch varying-p Part II}.

\subsection{Challenge: Inapplicability of Revelation Principle}\label{sec:technical overview auxiliary game}
Standard approaches in non-monetary mechanism design typically leverage the revelation principle to restrict attention to direct-revelation mechanisms that admit truth-telling equilibria \citep{balseiro2019multiagent,gorokh2021monetary,blanchard2024near}.
The practical applicability of this reduction, however, hinges on the planner's knowledge of the underlying distribution. Formally, the revelation principle asserts that for any dynamic mechanism $\bm M$ and any \emph{fixed} prior distribution $\mV$, there exists a ``direct revelation'' mechanism $\bm M_{\mV}'$ that is Bayesian incentive-compatible (BIC) and implements the same equilibrium outcome.

Crucially, the constructed mechanism $\bm M_{\mV}'$ structurally depends on this specific distribution $\mV$.
In our setting, the planner has no distributional information and is thus unable to formulate a \emph{single} direct mechanism $\bm M$ that preserves truthfulness and efficiency across a \emph{broad class} of unknown distributions.
In fact, achieving welfare maximization under such strict BIC requirements is fundamentally impossible: as established by \citet{bergemann2005robust}, if a mechanism ensures BIC across \emph{all} possible distributions, it must be ex-post incentive-compatible (EPIC). In our private value setup, EPIC further reduces to dominant-strategy incentive-compatibility (DSIC).
To see why welfare maximization must fail under DSIC, consider a one-shot example with two agents having deterministic utilities $u_1 \equiv \frac 13$ and $u_2 \equiv \frac 23$, but the planner does not know the identities of the agents.
For a mechanism to be both perfectly incentive-compatible and welfare-maximizing, it must always allocate the resource to the $\frac 23$-agent, leaving the $\frac 13$-agent with zero utility.
However, since the planner is unaware of the priors, they cannot distinguish the genuine high type from a fabricated one.
Thus the $\frac 13$-agent can always profitably mimic the $\frac 23$-agent, secure a winning probability of at least $\frac 12$, and yield a strictly positive expected utility of $\frac 16$. This, therefore, violates dominant-strategy incentive-compatibility.

We therefore must give up the strict BIC requirement, but instead characterize a non-trivial PBE where agents are allowed to occasionally deviate from truthful reporting.
To tame the complexity of the full strategy space in the extensive-form game induced by \mechname, we develop a reduction-based framework.
We define a tractable \emph{auxiliary game}, where agents are restricted to a smaller class strategies (see \Cref{def:auxiliary game}). Letting $\pi^\ast$ be a PBE in this auxiliary game, our analysis framework proceeds in three steps:
\begin{enumerate}
\item \textbf{Equivalence of Utility Profiles.} Every joint strategy in this auxiliary game gives an equivalent strategy in the original game, in the sense that they induce an identical utility profile for every agent (\Cref{lem:equiv between actual and no-flagging}).
\item \textbf{Welfare Guarantee in the Auxiliary Game.} The auxiliary-game PBE $\pi^\ast$---despite containing a small amount of strategic deviations---yields high utility for every agent and high social welfare (\Cref{lem:V lower and upper bound}).
\item \textbf{Stability in the Original Game.} The original-game strategy induced by the auxiliary-game PBE $\pi^\ast$ is a PBE: since agents already secure high utility under $\pi^\ast$, even if they are allowed to adopt those strategies prohibited in the auxiliary game, they have no incentives to unilaterally deviate (\Cref{lem:min report with u is good}).
\end{enumerate}

This reduction framework allows us to bound the regret by analyzing the simpler auxiliary game while rigorously establishing equilibrium existence in the original game (\Cref{thm:equivalence of PBE}).
We believe this general analysis framework can be of independent interest to other mechanism design problems where the revelation principle is not directly applicable (see also \Cref{sec:noisy case} for extensions of our framework). Nevertheless, we remark that while the strategic behavior in $\pi^\ast$ does not hurt the social welfare regret $\mR_T$, it does affect the expected number of audits $\mB_T$; we will control this in \Cref{sec:sketch varying-p Part II} via agents' incentives in the original game.

\subsection{Defining the Auxiliary Game}\label{sec:sketch varying-p Part I}
We start by formalizing the $\mu_i$ and $q_i$ in \Cref{eq:fair share informal,eq:fair winning prob informal}, which ignored \textit{i)} the possible change in the alive agents set $\mA_t$ throughout the game, and \textit{ii)} the potential no-allocation of \mechname due to $\max_{i\in \mA_t} v_{t,i}<c$.
\begin{definition}[Ex-Ante First-Best]\label{eq:fair share}
For any $\mA\subseteq [K]$ and $i\in \mA$, the \emph{ex-ante first-best utility} is
\begin{equation*}
\mu_i(\mA)=\E_{\bm u\sim \mV} \Big [u_i\1[(u_{i}\ge c)\wedge (u_i>u_j,\forall j\in \mA\setminus \{i\})]\Big ].
\end{equation*}
Similarly, we define the \emph{ex-ante winning probability} for any $\mA\subseteq [K]$ and $i\in \mA$ as
\begin{equation*}
q_i(\mA) = \Pr_{\bm u\sim\mV}\Big \{(u_{i}\ge c)\wedge (u_i>u_j,\forall j\in \mA\setminus \{i\})\Big \}.
\end{equation*}
In each round $t$ of \mechname, we write $\mu_{t,i}:=\mu_i(\mA_t)\1[i\in \mA_t]$ and $q_{t,i}:=q_i(\mA_t)\1[i\in \mA_t]$ for any $i\in [K]$, where we recall from \Cref{line:elimination} of \Cref{alg:mechanism} that $\mA_t$ is the set of alive agents in round $t$.
\end{definition}

Before defining the auxiliary game, we provide some intuition.
Recall from \Cref{sec:description of check prob} that our adaptive audits deter misreporting by leveraging expected future utilities: those agents with $\E[\text{future utility}]\ge \hat p_{t,i}^{-1}$ hesitate to lie. However, as the game progresses, any agent's future utility diminishes, which eventually becomes too low such that $\E[\text{future utility}]\approx (T-t)\mu_i<1$. In this case, even the maximum threat of elimination---auditing with probability 1---is insufficient to incentivize truthfulness, which means the planner must give up on enforcing truthfulness for this agent.
\Cref{def:auxiliary game} formalizes this by explicitly \emph{prohibiting} over-reports when $\E[\text{future utility}]\gtrsim 1$ (the $p_{t,i}\le 1$ therein), and allowing arbitrary behavior afterwards.

\begin{definition}[Auxiliary Game]\label{def:auxiliary game}
Initially, alive agent set $\mA_1=[K]$. In round $t\in [T]$, each alive $i\in \mA_t$ reports a $v_{t,i}\in [0,1]$ that can only depend on their own utility $u_{t,i}$, round number $t$, and set of alive agents $\mA_t$---but not any other historical information.
Further, if $p_{t,i}\le 1$ where
\begin{equation*}
p_{t,i}:= \frac{1+K^2}{(T-t)q_{t,i} c},\quad q_{t,i}:=q_i(\mA_t),
\end{equation*}
agent $i$ is not allowed to ``over-report'', i.e., $v_{t,i}\le u_{t,i}$ must hold with probability 1.
After observing all reports $\bm v_t$, the mechanism allocates the resource to the alive agent with highest report, i.e., $i_t=\argmax_{i\in \mA_t} v_{t,i}$ (ties broken lexicographically as discussed in \Cref{sec:setup planner objective}). Then, if $v_{t,i_t}>u_{t,i_t}$, agent $i_t$ gets eliminated with probability 1 (instead of being probabilistic) so that $\mA_{t+1}=\mA_t\setminus \{i_t\}$; otherwise, $\mA_{t+1}=\mA_t$ is unchanged.
\end{definition}

We defer the formal definition of auxiliary-game strategies to \Cref{def:agents strategy auxiliary} in \Cref{sec:appendix auxiliary game}.
Compared to the original game induced by \mechname, the auxiliary game poses three key distinctions:
\begin{enumerate}
\item The report $v_{t,i}$ from agent $i$ can only depend on true utility $u_{t,i}$ and the \emph{simplified history} $\tilde \mH_t:=(t,\mA_t)$, whereas the original game allows dependency on their \emph{full observable history} $\mH_t^i$. \label{item:only simplified history}
\item Agents cannot over-report $v_{t,i}>u_{t,i}$ unless $p_{t,i}>1$, whereas in the original game they are free to do so. \label{item:no mark up unless p>=1}
\item Winners who over-report, i.e., $v_{t,i_t}>u_{t,i_t}$, are eliminated deterministically w.p. 1. In contrast, \mechname eliminates such agents with an adaptive probability $\hat p_{t,i}=\min \big (\frac{4(1+K^2)}{(T-t) \hat q_{t,i} c},1\big )\le 1$ (\Cref{line:check prob} of \Cref{alg:mechanism}).\label{item:eliminate immediately}
\end{enumerate}

We emphasize that these restrictions are only analytical tools for our auxiliary game and reduction-based analysis. In the actual game induced by \mechname, agents are free to violate them.
The core of our reduction is that if agents adopt a specific \emph{well-behaved flagging strategy}---formalizing the discussions in \Cref{sec:description of est and flag}---any PBE in the restricted auxiliary game translates to a PBE in the original game.
\begin{definition}[Well-Behaved Flagging Strategy]\label{def:well-behaved flagging}
The well-behaved flagging strategy $\pi^{\ast,f}=(\pi_{t,i}^{\ast,f})_{t\in [T],i\in [K]}$ is defined such that for any agent $i\in[K]$ and round $t\in[T]$, $\pi_{t,i}^{\ast,f}$ answers $f_{t,i}=1$ if and only if either \textit{i)} $i_t\ne i$ and $\hat q_{t+1,i_t}>4q_{t+1,i_t}$, or \textit{ii)} $i_t=i$ and $\hat q_{t+1,i_t}<q_{t+1,i_t}/4$.
\end{definition}

When agents follow $\pi^{\ast,f}$, they collectively ensure that the estimates $\hat q_{t,i}\in [\frac 14 q_{t,i},4q_{t,i}]$ remain accurate throughout the game, thus the \mechname audit probability $\hat p_{t,i}=\min\big (\frac{4(1+K^2)}{(T-t)\hat q_{t,i}c},1\big )$ is close to the auxiliary-game $p_{t,i}=\frac{1+K^2}{(T-t) q_{t,i} c}$ (and also consequently the $p_{t,i}^\ast=\frac{1}{(T-t)\mu_{t,i}}$ we used in $\bm M^\ast$; recall \Cref{sec:description of check prob}).

This makes the auxiliary game a good proxy for studying the complex original game.
In the next three subsections, we establish the following key result of our reduction-based analysis framework:
\begin{theorem}[Correspondence of Equilibrium]\label{thm:equivalence of PBE}
Let ${\pi}^{\ast,r}$ be a PBE in the auxiliary game from \Cref{def:auxiliary game} and $\pi^{\ast,f}$ be the well-behaved flagging strategy defined in \Cref{def:well-behaved flagging}. Then, the joint strategy $\pi^\ast=({\pi}^{\ast,r},{\pi}^{\ast,f})$ constitutes a PBE in the original game under \mechname.
\end{theorem}

Essentially, \Cref{thm:equivalence of PBE} says that no agent can gain by unilaterally deviating from $\pi^\ast=(\pi^{\ast,r},\pi^{\ast,f})$, even if they adopt a reporting strategy violating \Cref{item:only simplified history,item:no mark up unless p>=1,item:eliminate immediately} and/or deviate from the well-behaved $\pi^{\ast,f}$.
The deviation from the flagging strategy $\pi^{\ast,f}$ is relatively easily deterred: Since $\pi^{\ast,f}$ is designed to trigger resets when estimations are biased against the agent's interest, adhering to it is naturally incentive-compatible (see \Cref{sec:equivalence of PBE formal} for the formal proof).
Consequently, our main analytical task is to verify that any unilateral deviation in the \emph{reporting strategy} $\pi^{\ast,r}$ is unprofitable.
This is established in the following three subsections.

\subsection{Equivalence of Utilities between Auxiliary- and Original-Game Strategies}\label{sec:sketch varying-p Part I 1}
We first establish that any strategy in the auxiliary game maps to a strategy in the original game---via the well-behaved flagging strategy---yielding identical expected utilities for every agent.

\begin{lemma}[Equivalence of Utility Profiles; Informal \Cref{lem:equiv between actual and no-flagging formal}]\label{lem:equiv between actual and no-flagging}
For any auxiliary-game strategy $\pi^r$, let $\pi:=(\pi^r,\pi^{\ast,f})$ be the corresponding original-game strategy. For any round $t\in[T]$ and complete history $\mH_t\in H_t$, let $\mA_t$ be the set of alive agents at that time. For any alive agent $i\in \mA_t$, the following equal:
\begin{enumerate}
\item the expected utility of $i$ under $\pi$ in the original game starting from $\mH_t$, namely $V_i^{\pi}(\mH_t)$ in \Cref{eq:V-func general}; and
\item the expected utility of $i$ under $\pi^r$ in the auxiliary game starting from the simplified history $\tilde \mH_t:=(t,\mA_t)$.
\end{enumerate}
\end{lemma}

\begin{proof}[Proof Sketch]
Since the auxiliary game imposes strictly tighter restrictions than the original game, the strategy $\pi_{t,i}^r$ is always feasible in the original game provided the set of alive agents $\mA_t$ evolved identically.
Since the allocation rule $i_t=\argmax_{i\in \mA_t}v_{t,i}$ is shared between \mechname and the auxiliary game, to verify the equivalence of utilities, it suffices to show both games induce the same sequence of eliminations.

Firstly, truthful agents (or those who do not over-report) are never eliminated in either game.
Conversely, if the winner $i_t$ over-reports, the auxiliary game eliminates them with probability $1$ (\Cref{item:eliminate immediately}) while \mechname eliminates with probability $\hat p_{t,i_t}$. Thus it only remains to prove $\hat p_{t,i_t}=1$. From \Cref{item:no mark up unless p>=1}, over-reporting implies $p_{t,i_t} > 1$; from $\pi^{\ast,f}$ defined in \Cref{def:well-behaved flagging}, we always have $\hat q_{t,i} \le 4 q_{t,i}$. Therefore
\begin{align*}
\hat p_{t,i}&=\min\left (\frac{4(1+K^2)}{(T-t)\hat q_{t,i_t} c},1\right )\\&\ge \min\left (\frac{(1+K^2)}{(T-t)q_{t,i_t} c},1\right )=\min(p_{t,i},1)=1.
\end{align*}
Consequently, both the auxiliary game and \mechname eliminate agent $i$ with probability 1.
\end{proof}

The formal statement and proof is presented as \Cref{lem:equiv between actual and no-flagging formal}.
This correspondence result is only \emph{one-way} from the auxiliary game to the original game: for every auxiliary-game strategy $\pi^r$, the original-game strategy $\pi=(\pi^r,\pi^{\ast,f})$ generates identical utility profiles. However, the converse is not true: A valid strategy in the original game may violate \Cref{item:only simplified history,item:no mark up unless p>=1,item:eliminate immediately} and does not correspond to any auxiliary-game strategy.

\subsection{V-Function and Regret Bounds under Auxiliary-Game PBE}\label{sec:V lower and upper bound}\label{sec:sketch varying-p Part I 2}
With the correspondence established, we focus on analyzing the auxiliary game. Since \Cref{def:auxiliary game} only permits strategies satisfying \Cref{item:only simplified history,item:no mark up unless p>=1}, \emph{any} PBE $\pi^{\ast,r}$ therein guarantees a high expected utility for every agent.
Specifically, for any agent $i\in[K]$, due to the strict elimination rule in \Cref{item:eliminate immediately}, opponents' ability to ``win by over-reporting'' is limited. Thus, if agent $i$ unilaterally deviated to the truthful strategy, they would secure a high utility; by definition of PBE, this utility lower bounds agent $i$'s payoff under any PBE. We formalize this result as a sandwich-style bound in \Cref{lem:V lower and upper bound}.

\begin{lemma}[Bounds on V-Functions; Informal \Cref{lem:V lower and upper bound formal}]\label{lem:V lower and upper bound}
Let $\pi^{\ast,r}$ be a PBE in the auxiliary game (\Cref{def:auxiliary game}).
Under $\pi^\ast=(\pi^{\ast,r},\pi^{\ast,f})$ in the original game, for any round $t\in [T]$ and history $\mH_t$, the V-function of an alive agent $i\in \mA_t$ approximates their future cumulative first-best utility:
\begin{equation}
V_i^{{\pi}^\ast}(\mH_t)-(T-t+1) \mu_{t,i}\in [-K,K(K-1)],\label{eq:for alive}
\end{equation}
where $\mu_{t,i}$ is defined in \Cref{eq:fair share}.
For any eliminated agent $i\notin \mA_t$, we naturally have $V_i^{{\pi}^\ast}(\mH_t)=0$.
\end{lemma}
\begin{proof}[Proof Sketch]
In the auxiliary game, since $\pi^{\ast,r}$ is a PBE, the utility obtained by agent $i$ under $\pi^{\ast,r}$ is lower-bounded by that under the truthful reporting strategy $\truth_i$ (i.e., always reporting $v_{t,i}=u_{t,i}$ regardless of opponents' strategies). Fix any round $t\in [T]$ and history $\mH_t\in H_t$, and suppose that agent $i$ is alive (i.e., $i\in \mA_t$).
Consider a round $\tau\ge t$ where \textit{i)} $u_{\tau,i}>u_{\tau,j}$ for all $j\in \mA_t\setminus \{i\}$ (i.e., agent $i$ is the first-best winner among all alive agents), but \textit{ii)} $i_\tau\ne i$. This is the only case agent $i$'s expected utility can drop below their ex-ante first-best utility, namely $(T-t+1)\mu_{t,i}$.

In this case, the winner $i_\tau$ must have over-reported $v_{\tau,i_\tau}>u_{\tau,\tau}>u_{\tau,i_\tau}$. Due to \Cref{item:eliminate immediately} of \Cref{def:auxiliary game}, this winner $i_\tau$ is immediately eliminated from the auxiliary game. Therefore, such events can only happen for no more than $K$ times. In other words, in those rounds where $u_{\tau,i}>u_{\tau,j}$ for all $j\in \mA_t\setminus \{i\}$, agent $i$ always wins except for no more than $K$ rounds. This gives the auxiliary-game lower bound: under $\pi^{\ast,r}$, which is a PBE, agent $i$'s expected utility is at least $(T-t+1)\mu_{t,i}-K$. Using the equivalence from \Cref{lem:equiv between actual and no-flagging}, starting from $\mH_t$, agent $i$'s original-game expected utility under $\pi^\ast$ is also at least $(T-t+1)\mu_{t,i}-K$, i.e.,
\begin{equation*}
V_i^{\pi^\ast}(\mH_t)-(T-t+1)\mu_{t,i}\ge -K.
\end{equation*}

The upper bound part in \Cref{eq:for alive} follows from the fact that---by definition of ex-ante first-best utilities---the social welfare in any future round $\tau\ge t$ is upper-bounded by $\sum_{i\in \mA_\tau} \mu_{\tau,i}$. Therefore we have
\begin{equation*}
\sum_{i\in \mA_t} V_i^{\pi^\ast}(\mH_t)\le (T-t+1)\sum_{i\in \mA_t} \mu_{t,i}.
\end{equation*}
As every other agent $j\in \mA_\tau\setminus \{i\}$ also has an expected utility of at least $(T-t+1)\mu_{t,j}-K$, agent $i$ cannot claim more than $(T-t+1)\mu_{t,i}+K(K-1)$. This finishes the proof.
\end{proof}

\Cref{lem:V lower and upper bound} serves \emph{two} main purposes.
First, applying \Cref{eq:for alive} to the initial state of the game (namely $t=1$ and $\mA_1=[K]$) implies $\sum_{i=1}^K V_i^{\pi^\ast}(\mathcal H_1)\ge T \sum_{i=1}^K \mu_{1,i}-K(K-1)$. Since $\sum_{i=1}^K \mu_{1,i}$ equals the optimal per-round social welfare, this implies an $\O(K^2)$ regret bound for \mechname when agents all use $\pi^\ast$.
However, we are still one step away from proving the regret guarantee in \Cref{thm:deterministic main thm}: We need to justify that $\pi^\ast$ constitutes a PBE in the original game. While $\pi^{\ast,r}$ is an auxiliary-game PBE, the corresponding $\pi^\ast$ (despite yielding identical utility profiles for agents) is not necessarily an original-game PBE, since there are no \Cref{item:only simplified history,item:no mark up unless p>=1,item:eliminate immediately}. This brings us to the second use of \Cref{lem:V lower and upper bound}: By upper-bounding the gain from misreports and lower-bounding the utility when being truthful, misreporting is unprofitable.

\subsection{Establishing the PBE Result (Proving \Cref{thm:equivalence of PBE})}\label{sec:sketch varying-p Part I 3}
To prove that $\pi^\ast$ remains a PBE in the original game---where \Cref{item:only simplified history,item:no mark up unless p>=1,item:eliminate immediately} are absent---we must demonstrate that agents gain no advantage by violating them.
We first show that the restriction on over-reporting (i.e., \Cref{item:no mark up unless p>=1}) is consistent with agents' own incentives.
Specifically, we fix an auxiliary-game PBE $\pi^{\ast,r}$ and corresponding original-game strategy $\pi^\ast=(\pi^{\ast,r},\pi^{\ast,f})$. We now consider any agent $i\in [K]$'s unilateral deviation strategy $\pi_i=(\pi^r_{i},\pi_i^{\ast,f})$ \emph{in the original game} that \textit{i)} adheres to \Cref{item:only simplified history} (i.e., the report $v_{t,i}$ only depend on utility $u_{t,i}$, round number $t$, and set of alive agents $\mA_t$); but \textit{ii)} violates \Cref{item:no mark up unless p>=1} (i.e., agent $i$ may over-report in some round $t$ even if $p_{t,i}=\frac{1+K^2}{(T-t)q_{t,i}c}\le 1$).

Since this $\pi_i^r$ is \emph{invalid} in the auxiliary game by violating \Cref{item:no mark up unless p>=1}, it is \emph{not} automatically inferior than $\pi_i^{\ast,r}$ although $\pi^{\ast,r}$ is a PBE therein. To justify that in the original game $\pi_i$ is no better than $\pi_i^\ast$, we construct a strategy $\pi_{i}^{\prime}=(\pi_i^{\prime,r},\pi_i^{\ast,f})$ such that \textit{i)} in the original game, $\pi_{i}^{\prime}$ is better than $\pi_i$ for agent $i$ when opponents use $\pi_{-i}^{\ast}$; and \textit{ii)} $\pi_i^{\prime,r}$ is valid in the auxiliary game, and is thus dominated by $\pi_i^{\ast,r}$. This gives:
\begin{lemma}[Over-Reporting is Unprofitable; Informal \Cref{lem:min report with u is good formal}]\label{lem:min report with u is good}
For agent $i\in [K]$, let $\pi_i^{r}$ violate \Cref{item:no mark up unless p>=1} but not \Cref{item:only simplified history}.
Construct an alternative strategy $\pi_{i}^{\prime,r}$ via a coupling with $\pi_{i}^r$: For private utility $u_{t,i}$ and corresponding report $v_{t,i}\sim \pi^r_{t,i}(u_{t,i};\mH_t^i)$, let $\pi_{t,i}^{\prime,r}(u_{t,i};\mH_t^i)$ report
\begin{equation*}
v_{t,i}':=\begin{cases}
\min(v_{t,i},u_{t,i}),&p_{t,i}\le 1;\\
v_{t,i},&p_{t,i}>1.
\end{cases}
\end{equation*}

In the original game, when opponents are using $\pi_{-i}^\ast$, agent $i$'s strategy of $\pi_i:=(\pi_{t,i}^r,\pi_{t,i}^{\ast,f})$ is dominated by $\pi_i':=(\pi_{t,i}^{\prime,r},\pi_{t,i}^{\ast,f})$. Consequently, agent $i$'s unilateral deviation from $\pi_i^\ast$ to $\pi_i$ is unprofitable.
\end{lemma}
\begin{proof}[Proof Sketch]
Using the performance difference lemma \citep[see \Cref{sec:appendix varying-p} for the detailed arguments]{kakade2002approximately}, it suffices to verify that any \emph{single-round deviation} is unprofitable: Namely, we argue that for any fixed round $t\in [T]$ and ``valid'' history $\mH_t$ (such that \Cref{lem:V lower and upper bound} is applicable; formalized in the appendix), if agent $i$ only unilaterally deviates \emph{in this round} from $\pi_{t,i}^\ast$ to $\pi_{t,i}$, it is unprofitable.

Indeed, in this case, since in all subsequent rounds every agent is using $\pi^\ast$, the correspondence result in \Cref{lem:equiv between actual and no-flagging} is applicable. Hence the only case where agent $i$'s expected future utility differs is when the simplified history $\tilde \mH_{t+1}=(t+1,\mA_{t+1})$ differs. We therefore only need to consider the case where the two reporting strategies of $\pi_{t,i}^r$ and $\pi_{t,i}^{\prime,r}$ give a different set of alive agents $\mA_{t+1}$. This implies two things: \textit{i)} agent $i$'s reports must be different (i.e., $v_{t,i}\ne v_{t,i}'$), which only happens when $p_{t,i}\le 1$ and $v_{t,i}>u_{t,i}$; and \textit{ii)} agent $i$ is the winner under (at least) one reporting strategy, which means $v_{t,i}>\max_{j\in \mA_t\setminus \{i\}} v_{t,j}$.

Therefore, let $p_{t,i}\le 1$. We fix a realization of utilities and reports such that
\begin{equation*}
v_{t,i}>u_{t,i}=v_{t,i}',\quad v_{t,i}>\max_{j\in \mA_t\setminus \{i\}} v_{t,j}.
\end{equation*}

We upper bound the expected utility under $\pi_{t,i}$. Given $v_{t,i}>\max_{j\in \mA_t\setminus \{i\}} v_{t,j}$, agent $i$ would win (i.e., $i_t=i$). Thus for this round, agent $i$'s utility is $u_{t,i}$, which is bounded by $[0,1]$. Agent $i$ is then audited with probability $\hat p_{t,i}$, in which case $\mA_{t+1}=\mA_t\setminus \{i\}$ (because agent $i$ over-reported a $v_{t,i}>u_{t,i}$). If no audits happen, we will have $\mA_{t+1}=\mA_t$. Hence the expected utility of agent $i$ under $\pi_{t,i}$ is upper bounded by
\begin{align}\label{eq:upper bound deviation informal}
&\quad u_{t,i}+\hat p_{t,i}\cdot 0+(1-p_{t,i})\cdot V_i^{\pi^\ast}(\mH_{t+1})\\
&\le 1+(1-\hat p_{t,i}) \left ((T-t)\mu_{t,i}+(K^2-K)\right ),
\end{align}
where we used the upper bound part of \Cref{lem:V lower and upper bound} and the fact that, if $\mA_{t+1}=\mA_t$, then $\mu_{t+1,i}=\mu_{t,i}$.

Using the lower bound part of \Cref{lem:V lower and upper bound}, we further lower bound the expected utility under $\pi_{t,i}'$. While agent $i$ may not win the current round, since $\pi_{t,i}'$ reported $v_{t,i}'=\min(v_{t,i},u_{t,i})=u_{t,i}$, they are never eliminated. This implies $\mu_{t+1,i}\ge \mu_{t,i}$. Hence agent $i$'s expected utility under $\pi_{t,i}'$ is at least
\begin{equation}\label{eq:lower bound deviation informal}
0+V_i^{\pi^\ast}(\mH_{t+1})\ge (T-t)\mu_{t,i}-K.
\end{equation}

Comparing the RHS of \Cref{eq:upper bound deviation informal,eq:lower bound deviation informal}, it only remains to prove $1+(1-\hat p_{t,i}) ((T-t)\mu_{t,i}+K(K-1))\le (T-t)\mu_{t,i}-K$. It thus suffices to prove $\hat p_{t,i} \ge \frac{1+K^2}{(T-t)\mu_{t,i}}$.
Comparing the RHS to the definitions of $p_{t,i}=\frac{1+K^2}{(T-t)q_{t,i} c}$ in \Cref{def:auxiliary game} and $\hat p_{t,i}=\min(\frac{4(1+K^2)}{(T-t)\hat q_{t,i} c},1)$ in \Cref{line:check prob} of \Cref{alg:mechanism}, this inequality indeed holds:
\begin{align*}
\hat p_{t,i}&=\min\left (\frac{4(1+K^2)}{(T-t)\hat q_{t,i} c},1\right )\\
&\overset{(a)}{\ge} \min\left (\frac{1+K^2}{(T-t) q_{t,i} c},1\right )\\
&\overset{(b)}{=}p_{t,i}\overset{(c)}{\ge} \frac{1+K+K(K-1)}{(T-t) \mu_{t,i}},
\end{align*}
when (a) uses $\hat q_{t,i}\le 4q_{t,i}$ (due to the well-behaved flagging strategy $\pi^{\ast,f}$ in \Cref{def:well-behaved flagging}); (b) uses $p_{t,i}\le 1$ (from the condition that $v_{t,i}'\ne v_{t,i}$); and (c) uses $\mu_{t,i}\ge q_{t,i} c$ (implied by \Cref{assumption:min report}). This gives the first claim: For agent $i$, when opponents are using $\pi_{-i}^\ast$, the original-game strategy $\pi_i=(\pi_i^r,\pi_i^{\ast,f})$ is dominated by $\pi_i'=(\pi_i^{\prime,r},\pi_i^{\ast,f})$, which caps the reports at true utilities when $p_{t,i}\le 1$.

Since $\pi_{i}^{\prime,r}$ is a valid auxiliary-game strategy and $\pi^{\ast,r}$ is a PBE therein, the correspondence result in \Cref{lem:equiv between actual and no-flagging} further suggests that $\pi_i^\prime$ is dominated by $\pi_i^\ast$ in the original game. This finishes the proof.
\end{proof}

Having established that over-reporting is unprofitable, we subsequently address the history dependency (\Cref{item:only simplified history}).
Notice that: \textit{i)} all other agents follow $\pi_{-i}^{\ast}$, whose reporting strategy only depends on their own utility and the simplified history; and \textit{ii)} when agent $i$ refrains from over-reporting (as per \Cref{lem:min report with u is good}), the original-game mechanism \mechname behaves identically to the auxiliary-game mechanism (i.e., audits and eliminations occur if and only if agent $i$ wins by over-reporting).
Consequently, the system dynamics are entirely determined by $\tilde \mH_t=(t,\mA_t)$, making any strategic dependence on the full history $\mH_t^i$ redundant.
Thus no unilateral deviation is profitable, and $\pi^\ast=(\pi^{\ast,r},\pi^{\ast,f})$ constitutes a PBE in the original game. This completes the proof of \Cref{thm:equivalence of PBE}. The formal version of the above arguments resides in \Cref{sec:equivalence of PBE formal}.

\subsection{Bounding Expected Number of Audits (Proving \Cref{thm:deterministic main thm})}\label{sec:sketch varying-p Part II}
We now derive the performance guarantees of \mechname stated in \Cref{thm:deterministic main thm}.
As discussed in the end of \Cref{sec:V lower and upper bound}, \Cref{lem:V lower and upper bound} and \Cref{thm:equivalence of PBE} already provides the desired bound on regret $\mR_T$. Indeed, from \Cref{lem:V lower and upper bound} the social welfare---the sum of each agent's utility---achieved by \mechname under $\pi^\ast$ satisfies
\begin{equation*}
\sum_{i=1}^K V_i^{\pi^\ast} (\mH_1) \geq T\sum_{i=1}^K\mu_{1,i} - K^2.
\end{equation*}
Since $\pi^\ast$ is an original-game PBE according to \Cref{thm:equivalence of PBE}, and by definition $T\sum_{i=1}^K\mu_{1,i}$ is the maximum expected social welfare that any mechanism cannot exceed, \mechname induces a PBE such that $\mR_T\le K^2$.

Controlling the expected number of audits $\mB_T$ requires additional care: The auxiliary game in \Cref{def:auxiliary game} only prohibits \emph{over-reporting} $v_{t,i}>u_{t,i}$. While \textit{under-reporting} $v_{t,i}<u_{t,i}$ does not affect the regret---thanks to the sandwich result presented in \Cref{lem:V lower and upper bound}---it can make the estimates $\hat q_{t,i}$ biased since agents lose some rounds where they are actually first-best. Should this result in a biased $\hat q_{t,i}\not\in [\frac 14 q_{t,i},4q_{t,i}]$, the flagging component (\Cref{line:flagging} in \Cref{alg:mechanism}) would be triggered, thus requiring extra audits on re-estimating $\hat q_{t,i}$.
Fortunately, under-reports only happen rarely: It results in an immediate utility loss while only gives very limited gain (if any), as we characterize in the following lemma:
\begin{lemma}[Number of Under-Reports; Informal \Cref{lem:number of under-reporting bound formal}]\label{lem:number of under-reporting bound}
For round $t\in [T]$ and agent $i\in [K]$, let $D_{t,i}$ be the indicator of the event that they are alive, do not win, but would have won had they reported honestly, i.e.,
$D_{t,i}=\1[i\in \mA_t]\1[i_t\ne i]\1[u_{t,i}\geq c]\1[u_{t,i}>v_{t,j},\forall j\in \mA_t\setminus\{i\}].$ Then we claim that
\begin{equation*}
\E_{\pi^\ast,\mechname}\sqb{\sum_{t=1}^T \sum_{i=1}^K D_{t,i}} \leq \frac{2K^3}{c}.
\end{equation*}
\end{lemma}
\begin{proof}[Proof Sketch]
For a rational agent $i$ to miss a win in a round $t$---which immediately loses utility $u_{t,i}$---the only reason is that the new history $\mH_{t+1}$ may be more favorable. Due to the correspondence in \Cref{lem:equiv between actual and no-flagging}, the expected future utility of agent $i$ only depends on $(t+1,\mA_{t+1})$, i.e., the round number and the set of alive agents. Hence, there must exist some $j\in \mA_t\setminus \{i\}$ who \textit{i)} would probably win-by-lying in this round; and \textit{ii)} the resulting $\mA_{t+1}=\mA_t\setminus \{j\}$ is more favorable for $i$. However, due to the sandwich bounds on $V_i$ in \Cref{lem:V lower and upper bound} and because any such $j$ can only be eliminated once, the utility gain due to eliminations is limited. Very informally, $\E[\sum_{t=1}^T \text{gain} \1[i_t\text{ eliminated}]]\le K^3$.

On the other hand, every time agent $i$ misses a win they suffer an immediate loss of $u_{t,i}\ge c$ (this is because of \Cref{assumption:min report}, which ensures the winning utility is at least $c$). Comparing the immediate loss to the potential future gain, we can upper bound the expected number of under-reports throughout the game, which gives the claim. Note that, however, all above events are in fact probabilistic and require extensive care. The formal statement and proof can be found in \Cref{sec:appendix regret and audits} as \Cref{lem:number of under-reporting bound formal}.
\end{proof}

We now control $\mB_T$.
For each agent $i\in[K]$, \mechname proceeds alternatively between an \emph{estimation phase} where $\hat q_{t,i}=0$ (where the mechanism is estimating the ex-ante winning probability $q_{t,i}$; during this phase, agent $i$ is audited with probability $1$ whenever they win), and an \emph{incentive-compatible} phase where the mechanism sticks to an estimated $\hat q_{t,i}>0$ after no agent answered $f_{t,i}=1$ (during this phase, agent $i$ is audited with probability $\hat p_{t,i}\leq 1$ as defined in \Cref{line:check prob} of \Cref{alg:mechanism}). We control them differently.

After proving \Cref{lem:number of under-reporting bound}, we can conclude that the estimation phase of each agent cannot last too long---roughly speaking, it should be the number of under-reports times $\O(\log T)$. On the other hand, the incentive-compatible phases are relatively straightforward, because agents under the PBE $\pi^\ast$ always use the well-behaved flagging strategy (recall \Cref{def:well-behaved flagging}) and thus the $\hat q_{t,i}$ used to construct $\hat p_{t,i}$ are multiplicatively close to the true $q_{t,i}$. Hence, $\hat p_{t,i}\lesssim \frac{\poly(K)}{(T-t)\mu_{t,i}}$ during this phase. Similar arguments as for the full-information mechanism $\bm M^\ast$ in \Cref{sec:description of check prob} then control the expected number of audits during incentive-compatible phases by $\O(\poly(K) \log T)$.
The formal arguments for $\mB_T(\pi^\ast,\mechname)$ is postponed to \Cref{sec:appendix regret and audits}.

\section{Hardness Results and Robustness Guarantees}\label{sec:lower bounds} \label{sec:noisy case}
In this section, we first introduce hardness results forming the red region in \Cref{fig:main_plot}, justifying the necessity of considering a trade-off between the regret and the expected number of audits. We then analyze our \mechname mechanism under two \emph{imperfect audit models}---where audit outcomes may be either adversarially manipulated or stochastically noisy---different from the perfect one in \Cref{assump:noiseless}. By keeping the \mechname unchanged and generalizing the analyses only, we demonstrate the robustness of our mechanism and our reduction-based auxiliary game analysis framework.

\subsection{Hardness Results: Unavoidability of Both Regret and Audits}
To complement the trade-off between regret and number of audits, in \Cref{thm:lower_bounds}, we give hardness results that: \textit{i)} one cannot make the regret completely independent to $K$, and \textit{ii)} to avoid $\text{poly}(T)$ regret, the expected number of audits must be $\Omega(1)$. The proof of \Cref{thm:lower_bounds} can be found in \Cref{sec:appendix lower bounds}.

\begin{theorem}[Hardness Results]
\label{thm:lower_bounds}
Fix $K\geq 2$. There exists utility distributions $\mV=\{\mV_i\}_{i\in [K]}$ satisfying \Cref{assumption:min report} with $c=\frac{1}{3}$, such that for any planner mechanism $\bm M$ and any corresponding PBE of agents $\pi$, we have $\mR_T(\pi,\bm M) =\Omega(K)$. Furthermore, even if $K=2$, there exists some $\{\mV_i\}_{i\in [K]}$ satisfying \Cref{assumption:min report} with $c=\frac 13$ and a constant $B>0$ such that if $\mB_T(\pi,\bm M) \leq B$ then $\mR_T(\pi,\bm M) =\Omega(\sqrt {T/\log T})$.
\end{theorem}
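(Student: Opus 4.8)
The plan is to prove the two lower bounds in \Cref{thm:lower_bounds} separately, in each case by exhibiting a ``nearly ambiguous'' instance (or symmetric family of instances) on which no mechanism can both keep incentives in check and allocate efficiently, and --- crucially --- by reasoning directly about an \emph{arbitrary} PBE, since without the revelation principle we cannot reduce to truthful reporting.

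For the $\Omega(K)$ regret bound, I would take the hard instance to consist of $K-1$ ``low-value'' agents with a small deterministic utility, say $\tfrac13$ (this also witnesses \Cref{assumption:min report} with $c=\tfrac13$), and one ``high-value'' agent whose utility is suitably randomized but always above $\tfrac13$, so that the low-value agents never win under truthful play; the randomness is there to defeat degenerate mechanisms that ignore reports and follow a fixed schedule. The key point is the one flagged in the two-agent example of the introduction: a low-value agent that is never allocated has nothing to lose, so in any PBE its continuation value is at least the value of the deviation ``over-report until you finally win a round,'' which I would lower-bound by a positive constant (essentially its own utility) by showing this deviation cannot be permanently blocked under any mechanism. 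Hence in any PBE each of the $K-1$ low-value agents is allocated $\Omega(1)$ rounds in expectation; since each such round has social welfare at most $\tfrac13$ while the hindsight optimum is bounded away from $\tfrac13$, each contributes $\Omega(1)$ to $\mathcal R_T$, and summing over the $K-1$ low-value agents gives $\mathcal R_T=\Omega(K)$. Note this argument never uses the audit budget: audits only ever reveal the \emph{winner}'s utility, so they cannot certify that a non-winner should have won.

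For the audit--regret trade-off with $K=2$, I would use a single instance: agent~$1$ (the witness for \Cref{assumption:min report}) has deterministic utility $\tfrac13$, and agent~$2$ has utility $\tfrac23$ with some constant probability $q$ and $\tfrac16$ otherwise, so the efficient allocation gives to agent~$2$ exactly when $u_{t,2}=\tfrac23$. Agent~$2$ then has a persistent temptation: whenever $u_{t,2}=\tfrac16$ it can report $\tfrac23$ --- indistinguishable from an honest high report --- to win and grab $\tfrac16>0$ at a social-welfare cost of $\tfrac13-\tfrac16$. The quantitative heart of the argument is a \emph{deterrence inequality}: in any PBE, to stop this deviation at round $t$ the planner must audit a winning agent~$2$ with probability $p_t\gtrsim 1/V_t$, where $V_t\le\tfrac23(T-t)$ is agent~$2$'s continuation value; summing, fully deterring the lie at every round costs $\Omega(\log T)$ audits. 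When $\mathcal B_T\le B=\mathcal O(1)$ this budget cannot cover all rounds, and I would argue that in $\Omega(\sqrt{T/\log T})$ rounds (in expectation) the planner can neither deter the lie via an audit threat nor reliably distinguish honest from dishonest high reports statistically --- any purely reputation-based cap on the number of agent-$2$ high reports must leave a slack of order $\sqrt{T/\log T}$ or an honest agent~$2$ is cut off too often --- so agent~$2$ exploits these rounds, each costing $\Omega(1)$ regret.

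The hard part, in both parts, is the lack of a revelation principle, which forces all reasoning to be about an arbitrary PBE. For the $\Omega(K)$ bound this means lower-bounding the ``grab-a-round'' deviation's value uniformly over \emph{all} mechanisms and \emph{all} equilibria --- in particular ruling out mechanisms that permanently ignore a low-value agent or pre-emptively compensate it more cheaply --- which is exactly what dictates the instance's structure (several low-value agents, a genuinely randomized optimum). For the audit bound the two delicate steps are (i) proving the deterrence inequality against an arbitrary PBE, which requires handling the possibility that the mechanism inflates agent~$2$'s continuation value to relax its incentive constraint, and (ii) turning the single scalar bound $\mathcal B_T\le B$ --- an expectation over the mechanism's randomness, the agents' randomness, and the entire history --- into a robust count of ``undeterred'' rounds despite the full adaptivity of the mechanism, where a naive union bound does not suffice.
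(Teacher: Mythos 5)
Your instance for the $\Omega(K)$ bound is broken. You take $K-1$ agents with deterministic utility $\tfrac13$ and one agent whose utility is \emph{always} strictly above $\tfrac13$, so that ``the low-value agents never win under truthful play.'' But in \Cref{thm:lower_bounds} the quantifier order is $\exists\,\mV\ \forall\,\bm M\ \forall\,\bm\pi$: the mechanism may depend on the distributions (the paper even remarks the lower bounds hold with full distributional knowledge). For your instance the mechanism that ignores all reports and always allocates to the high-value agent has zero regret, and since it is report-independent, every joint strategy (in particular truth-telling) is trivially a PBE under it. This also falsifies your central claim that the ``over-report until you win'' deviation ``cannot be permanently blocked under any mechanism'' --- it is blocked, at no cost, by exactly this mechanism. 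The paper's construction avoids this by making the hindsight optimum genuinely report-dependent: each agent $i\geq 2$ has utility $1$ with probability $\alpha/T$ (and $\tfrac13$ otherwise) while agent $1$ sits at $\tfrac23$, so the mechanism must be responsive to agent $i$'s reports to capture its rare high rounds, and that responsiveness is precisely what a low-value agent with ``nothing to lose'' can exploit (formalized via the error quantity $R_i$ counting both missed fair wins and spurious wins, and a deviation that resamples a fake utility sequence conditioned on having at least one high round). Your ``nothing to lose'' intuition is the right one, but it only bites on an instance where ignoring an agent is itself costly.

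For the audit--regret trade-off your instance and intuition are close to the paper's, but the quantitative core is missing and your framing makes it harder than necessary. First, the paper \emph{does} invoke the revelation principle here: since the hard distribution is fixed before the mechanism, one may assume WLOG that truth-telling is the PBE under $\bm M$; your insistence on arguing against an arbitrary PBE is not needed for a lower bound over a single instance. Second, your plan rests on a round-by-round ``deterrence inequality'' ($p_t\gtrsim 1/V_t$) plus a claim that an $\mathcal O(1)$ audit budget leaves $\Omega(\sqrt{T/\log T})$ undeterred rounds; both steps are only gestured at, and the second is where the exponent actually comes from. The paper instead analyzes a single global randomized deviation --- flip a low report to a high one independently with probability $\epsilon=1/\sqrt{T\log T}$ --- and uses a likelihood-ratio/change-of-measure bound (the KL divergence $T\,d_{KL}(p+\epsilon\,\|\,p)\lesssim T\epsilon^2/p=\mathcal O(1)$ is what forces $\epsilon\sim 1/\sqrt{T\log T}$) to transfer expectations between the honest and deviating report processes up to a constant factor. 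The PBE inequality then directly yields $0\le \mathcal O(\tfrac{\epsilon T}{p}\mathcal B_T+\mathcal R_T)-\Omega(\epsilon T)$, giving the result. Your ``reputation cap must leave slack'' heuristic points at the same indistinguishability phenomenon, but the natural scale of that heuristic is $\sqrt{T}$ (one standard deviation of the count), not $\sqrt{T/\log T}$; without the explicit change-of-measure computation the claimed rate is not substantiated, and the round-by-round deterrence inequality is difficult to establish against a mechanism free to shape continuation values arbitrarily.
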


Remarkably, \Cref{thm:lower_bounds} holds even if the planner has access to agents' utility distributions (which is easier than our distribution-free setup). Therefore, the hardness results in \Cref{thm:lower_bounds} are due to the lack of \emph{direct punishments} of untruthful behavior. Indeed, should distributional information be available and should the audits be \emph{ex-ante} (i.e., the planner audits a potential agent \emph{before} deciding the allocation, thus misreports are punished in full by non-allocation), the first-best allocation would be implementable in one-shot \citep{ben2014optimal}. (While a few works also consider ex-post audits with limited punishments, they assume the planner can \emph{revoke} part of the allocation and are easier than our setting; see \Cref{sec:related work CSV}.) It is an important research question to characterize the hardness due to unknown utility distributions.

\subsection{Robustness of \mechname: Adversarial Audit Model}\label{sec:adv noise model}
While \mechname is designed under the perfect audit model (\Cref{assump:noiseless}) where audits perfectly reveal true utilities of winners, our mechanism as well as our reduction-based analysis framework extend to \emph{imperfect audit models}, where the audit outcome may differ from the true utility. While ``noisy verification'' resonates with the probabilistic verification model studied by \citet{caragiannis2012mechanism} and \citet{ball2019probabilistic}, the two main challenges distinguishing our work from existing ones on costly state verification---namely the absence of distributional information and the lack of direct punishments to untruthful behavior---still apply.

The first model, stated as \Cref{assump:adv noise model}, captures the possible strategic behaviors when facing audits:
knowing they are being audited, agents in reality may distort the audit process to hide untruthful behavior. Nevertheless, their manipulation abilities are usually limited to a specific extent, which we capture by $\sigma$:
\begin{assumption}[Adversarial Audit Model]\label{assump:adv noise model}
In a round $t\in [T]$ where the planner decides to audit the winner (i.e., $o_t=1$), the winner $i_t\in [K]$ strategically---possibly with any randomization---decides an additive noise $\eta_t\in [-\sigma,\sigma]$ where $\sigma\ge 0$ is a fixed constant, making the audit outcome $w_t=o_t(u_{t,i_t}+\eta_t)$.
\end{assumption}

We remark that in \Cref{assump:adv noise model}, $\sigma\ge 0$ is a constant fixed before the start of the game; specifically, when $\sigma=0$, it recovers the noiseless case (\Cref{assump:noiseless}) that we studied in previous sections. The constant $\sigma$ is public information among all agents, but does \emph{not} need to be known by the planner.
The mechanism deployed by the planner is exactly the same \mechname, and the parameter $\sigma$ only appears in the analysis.

Perhaps surprisingly, adversarial noises are easier to confront compared to stochastic ones---another imperfect audit model we study in \Cref{sec:stoc noise model}---because, informally, every single agent will try their best to utilize the adversarial manipulation ability by reporting $v_{t,i}=u_{t,i}+\sigma$. Consequently, the relative order between reports is preserved and our mechanism remains effective. We formalize this claim in \Cref{thm:adv noise model}.
\begin{theorem}[\mechname under Adversarial Audit Model]\label{thm:adv noise model}
Consider the adversarial audit model in \Cref{assump:adv noise model}.
Under the mechanism \mechname defined in \Cref{alg:mechanism}, for any utility distributions $\{\mV_i\}_{i\in[K]}$ satisfying \Cref{assumption:min report}, there exists a PBE of agents' strategies $\pi^\ast$ such that
\begin{align*}
\mR_T(\pi^\ast,\mechname)&\leq K^2,\\\mB_T(\pi^\ast,\mechname)&= \O\left (\frac{K^3}{c} \log T\right ).
\end{align*}
\end{theorem}

To prove \Cref{thm:adv noise model}, instead of starting from scratch, we generalize the auxiliary-game analysis developed in \Cref{sec:sketch varying-p} by allowing a \emph{mark-up tolerance} in the auxiliary game (see \Cref{sec:appendix auxiliary game} for the formal definition). Formalizing the intuitive claims above, in \Cref{sec:appendix varying-p,sec:appendix regret and audits}, we show that not only our \mechname mechanism is robust to adversarial noises (and also stochastic ones discussed in the next section), our auxiliary-game analysis framework \emph{also} seamlessly extend to these imperfect audit models.

\subsection{Robustness of \mechname to Stochastic Noises}\label{sec:stoc noise model}
We now consider another realistic issue that audits---due to objective limitations---cannot be always precise. For example, with a low but non-negligible probability (captured by parameter $\epsilon$), expert accountants may miss some discrepancies in the account book. Such randomly appearing errors are captured in \Cref{assump:stoc noise model}:

\begin{assumption}[Stochastic Audit Model]\label{assump:stoc noise model}
In a round $t\in [T]$ where the planner decides to audit the winner (i.e., $o_t=1$), with probability $\epsilon\in [0,1)$ the outcome $w_t$ may lie anywhere between $[0,1]$; otherwise, the outcome $w_t$ equals the true utility of the winner $u_{t,i_t}$. We allow the outcome distribution to depend on any history (even unobservable) up to that moment, namely $\{(\bm u_\tau,\bm v_\tau,i_\tau,o_\tau,w_\tau)\}_{\tau<t}\cup \{(\bm u_t,\bm v_t,i_t,o_t)\}$.
\end{assumption}

In \Cref{assump:stoc noise model}, $\epsilon\in [0,1)$ is a constant fixed before the start of the game; when $\epsilon=0$, it recovers \Cref{assump:noiseless} which we studied in previous sections. Again, this $\epsilon$---as well as its existence---does \emph{not} need to be known by the planner. That is, the planner still deploys the same \mechname mechanism, which is crafted for the perfect audit model. The performance guarantees in \Cref{thm:adv noise model,thm:stoc noise model} demonstrate the robustness of our \mechname mechanism: Even if the environment is misspecified (in the sense that audits can be noisy), the social welfare is still high, and the expected number of audits remains logarithmic in $T$.

Different from the adversarial case which essentially shifts all utility distributions by $+\sigma$ and thus only has minimal impact on agents' incentives (and consequently regret and expected number of audits), under \Cref{assump:stoc noise model}, we aim for the weaker equilibrium concept of Approximate Bayesian Equilibrium (ABE):
\begin{definition}[Approximate Bayesian Equilibrium]\label{def:approx PBE}
Under mechanism $\bm M$, a joint strategy $\pi=(\pi_i)_{i\in [K]}$ is a $\Delta$-Approximate Bayesian Equilibrium ($\Delta$-ABE in short) if the unilateral deviation of any agent $i\in [K]$ to any alternative strategy $\pi_i'$ does not increase their value function $V_i$ by more than $\Delta$. That is, let $\pi'$ be the joint strategy where agent $i$ follows $\pi'_i$ and any other agent $j\ne i$ follows $\pi_j$, then we must have
\begin{equation*}
V_i^{\pi'}(\mH_t;\bm M)\le V_i^{\pi}(\mH_t;\bm M)+\Delta,~~ \forall t\in [T],\mH_t\in H_t.
\end{equation*}
\end{definition}
Perfect Bayesian Equilibrium (PBE) in \Cref{def:PBE} can be viewed as a special case where $\Delta=0$. 
In non-monetary resource allocation setups, many works---especially those based on the artificial currency framework---have targeted for the ABE guarantee. Specifically, with perfect knowledge on agents' utility distributions, \citet{gorokh2021monetary} designed a non-monetary mechanism such that truthful reporting is an $\O_T(\sqrt{T \log T})$-ABE ensuring $\O_T(1)$ regret.
For our \mechname, we have the following guarantee:
\begin{theorem}[\mechname under Stochastic Audit Model]\label{thm:stoc noise model}
Suppose that \Cref{assump:stoc noise model} holds.
Under the mechanism \mechname defined in \Cref{alg:mechanism}, for any utility distributions $\{\mV_i\}_{i\in[K]}$ satisfying \Cref{assumption:min report}, there exists an $\O(\epsilon \frac{K^3}{c} T\log T)$-Approximate Bayesian Equilibrium $\pi^\ast$ such that
\begin{align*}
\mR_T(\pi^\ast,\mechname)&=\O\left (K^2+\epsilon \frac{K^3}{c} T\log T\right ),\\
\mB_T(\pi^\ast,\mechname)&=\O\left (\frac{K^3}{c}\log T\right ).
\end{align*}
\end{theorem}

Specifically, when $\epsilon\le 1/\sqrt T$, \Cref{thm:stoc noise model} gives an $\Otil_T(\sqrt{T})$-ABE with $\mR_T=\Otil_T(\sqrt{T})$ and $\mB_T=\Otil_T(1)$, giving a performance guarantee similar to that of \citet{gorokh2021monetary} but without distributional priors.
Technically, \Cref{thm:stoc noise model} fails to obtain a PBE because under stochastic noises, the auxiliary game no longer \emph{exactly corresponds} to the original game. We first overview the key technical challenge in obtaining such an exact correspondence, and how we succeeded under the perfect audit model (\Cref{assump:noiseless}):

Specifically, \Cref{item:only simplified history} of the auxiliary game (\Cref{def:auxiliary game}) requires agents \emph{only} to adapt to the simplified history $\tilde \mH_t=(t,\mA_t)$, i.e., the agents must be \emph{history-independent}; this property is pivotal since it drastically narrows the auxiliary-game strategy space. However, to make this restriction without loss of generality, we must also restrict the planner to be history-independent (see the end of \Cref{sec:sketch varying-p Part I 3}).
Under the perfect audit model, the history-independent auxiliary game exactly corresponds to the history-dependent \mechname, because in each round $t$ either: \textit{i)} the winner $i_t$ is truth-telling and remains alive deterministically; or \textit{ii)} the winner $i_t$ lies, is audited with probability 1, and is then eliminated with probability 1 (from \Cref{item:no mark up unless p>=1}, an over-reporting agent is audited with probability $\hat p_{t,i}\ge \min(p_{t,i})=1$).
Both cases result in a deterministic next-round state, and thus the history-independent auxiliary game gives an exact correspondence.

When the planner faces stochastic noises, however, this argument fails: Under \Cref{assump:stoc noise model}, a truth-telling agent $i$ can be wrongly eliminated in round $t$ with probability $\hat p_{t,i_t}\times \epsilon$, which is naturally history-dependent due to the adaptive $\hat p_{t,i_t}$ (defined in \Cref{line:check prob} of \Cref{alg:mechanism}). Due to this discrepancy, a history-independent auxiliary-game mechanism cannot \emph{exactly} correspond to the original-game \mechname. To accommodate this, in \Cref{sec:appendix varying-p}, we generalize our auxiliary-game analysis by allowing \emph{misspecified} auxiliary games: Instead of \Cref{thm:equivalence of PBE} which uses the exact correspondence to prove that every auxiliary-game PBE results in an original-game PBE, we now prove that an auxiliary-game PBE gives an original-game ABE, where the approximation parameter $\Delta$ is the maximum difference between an agent's expected utilities in both games under the same joint strategy (see \Cref{thm:equivalence of PBE formal} in \Cref{sec:equivalence of PBE formal} for the formal statement).

\section{Conclusion and Future Research}
This paper studies the dynamic resource allocation problem with strategic agents, under the dual constraints of prohibited monetary transfers and unknown distributional priors. By equipping the planner with costly ex-post audits---but without the ability of directly revoking allocations---we design a mechanism \mechname that attains $\O_T(1)$ social welfare regret using only $\O_T(\log T)$ audits in expectation. This result demonstrates that first-best efficiency---the maximum social welfare when all utilities are directly available to the planner---can be recovered in prior-free environments; the only instrument the planner needs is a logarithmic number of ex-post audits that cannot revoke allocations or impose any direct punishment.

The absence of distributional priors not only presents great algorithmic challenges but also limits the analytical tools.
Algorithmically, the planner faces the inherent difficulties of coupling online estimation with strategic incentives. That is, to implement the \emph{adaptive audits scheme} based on agents' ex-ante first-best utilities, the planner must estimate them online from endogenous reports that are subject to strategic manipulation. To bridge this gap, we design an \emph{incentive-aligned flagging component} which allows agents to assist the planner in correcting biased estimates, ensuring the estimation process is robust against strategic interference.
Analytically, the prior-free setting presents a more fundamental barrier: the revelation principle becomes inapplicable when utility distributions are non-unique and unknown. To overcome this, we develop a reduction-based \emph{auxiliary-game framework} to help identify a PBE consisting only of near-truthful strategies; this PBE, therefore, naturally yields low social welfare regret and a small number of audits in expectation. This analytical framework---by mapping the stochastic and strategic original game to a tractable deterministic counterpart---provides a structured way to bound strategic deviations in complex mechanisms and can be of independent interest to other robust mechanism design problems where standard analytical foundations fail.

We also complement our positive result with both hardness results and robustness guarantees. Capturing the fact that ex-post audits only introduce very limited punishment, we prove \textit{i)} an $\Omega(K)$ lower bound on the regret, and \textit{ii)} an $\Omega(1)$ lower bound on the expected number of audits when the regret is of order $o(\sqrt{T/\log T})$. Thus neither regret nor expected number of audits can be avoided. We further present robustness guarantees when audit outcomes are either adversarially manipulated or contaminated by stochastic noise, indicating the robustness of our \mechname mechanism and the generality of the auxiliary-game analysis framework.

There are several promising directions for future research. First, a gap remains between our $\O(K^2)$ regret upper bound and the $\Omega(K)$ lower bound. Even in our auxiliary game without estimation or audit---that is, any misreporting agent is eliminated immediately---this gap still persists. Closing this gap requires further refining the analysis of multi-agent unilateral deviations within the auxiliary game, particularly in how these deviations \emph{couple} across the interdependent histories induced by different agents' unilateral deviations.

Additionally, the exact Pareto frontier between the number of audits and regret is not yet fully characterized. Specifically, while Arrow's impossibility result suggests that $0$ audits imply $\Omega(T)$ regret, and our \Cref{thm:lower_bounds} dictates the existence of a constant $B$ such that no more than $B$ audits imply $\text{poly}(T)$ regret, it remains open whether $\O_T(1)$ regret can be sustained using a larger but still constant (in expectation) number of audits.

Moreover, the robustness of our mechanism under other imperfect audit models, such as heterogeneous adversarial manipulations or continuous Gaussian noise, requires further investigation. Such noise structures complicate the correspondence between the original and auxiliary games. Meanwhile, when such correspondence is not exact but rather approximate---as we faced in \Cref{sec:stoc noise model}---developing new techniques to establish Perfect Bayesian Equilibria rather than approximate equilibria remains a promising direction.

\bibliography{references}

\onecolumn
\newpage
\crefalias{section}{appendix}
\crefalias{subsection}{appendix}
\appendix

\startcontents[section]
\printcontents[section]{l}{1}{\setcounter{tocdepth}{2}}

\section{Simple Mechanism Auditing with Fixed Probability (\Cref{thm:simple mech main theorem})}\label{sec:appendix fixed-p}

We first analyze the simple mechanism $\bm M^0(p)$ which audits with a fixed probability $p\in(0,1]$ in every round; its pseudo-code is in \Cref{alg:simple_mechanism}. We now prove its guarantee \Cref{thm:simple mech main theorem}:

\begin{proof}[Proof of \Cref{thm:simple mech main theorem}]
Fix an agents' joint strategy PBE $\pi$ under the mechanism $\bm M^0(p)$ defined in \Cref{alg:simple_mechanism}. Since the mechanism always independently audits with probability $p$ in each round,
\begin{equation*}
\mB_T(\pi,\bm M^0(p)) = \sum_{t=1}^T \Pr\{o_t=1\} = pT.
\end{equation*}

Next, we fix an agent $i\in [K]$ and prove a lower bound on their V-function $V_i^{\pi}(\mH_1;\bm M^0(p))$---defined in \Cref{eq:V-func general}---obtained under joint strategy $\pi$ and mechanism \Cref{alg:simple_mechanism}. Denote by $\truth_i$ the truthful strategy of agent $i$ which always honestly reports $v_{t,i}=u_{t,i}$. Since $\pi$ is a PBE, we have
\begin{align*}
&\quad V_i^{\pi}(\mH_1;\bm M^0(p)) \geq V_i^{\truth_i\circ \pi_{-i}}(\mH_1;\bm M^0(p))= \E_{\truth_i\circ \pi_{-i}}\sqb{ \sum_{t=1}^T u_{t,i} \1[i_t=i] }\\
&\geq \E_{\truth_i\circ \pi_{-i}}\sqb{ \sum_{t=1}^T u_{t,i} \1[i\in \mA_t] \paren{\1[u_{t,i}>u_{t,j},\forall j\in \mA_t\setminus \{i\}] -  \1[i_t\neq i]\1[v_{t,i_t}>u_{t,i_t}]}}.
\end{align*}

Note that under $\truth_i$, agent $i$ is never eliminated. Hence, the indicator $\1[i\in \mA_t]=1$ holds for all $t\in [T]$.
We now define the per-round fair share of agent $i$ as
\begin{equation*}
    \mu_i := \E_{\bm u\sim \mV}\sqb{  u_i \1[u_i>u_j,\forall j\ne i]  }.
\end{equation*}
Since $\mA_t\subseteq [K]$, we know $(u_{t,i}>u_{t,j},\forall j\ne i)$ infers $(u_{t,i}>u_{t,j},\forall j\in \mA_t\setminus \{i\})$ for any $t\in [T]$ and $i\in \mA_t$. Therefore, we can further lower bound the V-function as
\begin{align*}
    V_i^{\pi}(\mH_1;\bm M^0(p)) &\geq \mu_i T - \sum_{j\neq i}\E \sqb{\sum_{t=1}^T \1[i_t=j] \1[v_j>u_j]}.
\end{align*}

Now we fix any opponent $j\neq i$. By construction of the mechanism, for every round $t\in [T]$ in which agent $j$ wins ($i_t=j$) by over-reporting ($v_{t,j}>u_{t,j}$), with probability $p$ independent of the past history, agent $j$ gets eliminated. Hence, $\sum_{t=1}^T \1[i_t=j] \1[v_j>u_j]$ is stochastically dominated by a geometric variable $\text{Geo}(p)$. Applying this to all $j\ne i$'s yields
\begin{equation*}
V_i^{\pi}(\mH_1;\bm M^0(p)) \geq \mu_i T - \frac{K-1}{p},\quad \forall i\in [K].
\end{equation*}

Given the lower bounds on V-functions, we are now ready to control the regret as
\begin{align*}
\mR_T(\pi,\bm M^0(p)) &= T\E_{\bm u\sim \mV}\sqb{\max_{i\in[K]}u_i} - \sum_{i\in[K]}V_i^{\pi}(\mH_1;\bm M^0(p)) \\ &\le T\sum_{i\in[K]}\mu_i -\sum_{i\in[K]}\left (\mu_i T - \frac{K-1}{p}\right ) = \frac{K(K-1)}{p},
\end{align*}
where we used $\E_{\bm u\sim \mV}[\max_i u_i]=\sum_{i\in [K]} \E_{\bm u\sim \mV}[u_i\1[u_i>u_j,\forall j\ne i]]=\sum_{i\in [K]} \mu_i$.
\end{proof}

\section{General Framework for the Auxiliary Game Approach (\Cref{thm:equivalence of PBE})}\label{sec:appendix varying-p}
In this section, we build up the general framework of our auxiliary game approach, which characterizes the performance of our \mechname mechanism under the perfect audit model (\Cref{assump:noiseless}), under the adversarial audit model (\Cref{assump:adv noise model}), and under the stochastic audit model (\Cref{assump:stoc noise model}).

The key result we establish in this section is (a generalized version of) \Cref{thm:equivalence of PBE}, which claims a PBE of the simple auxiliary game correspond to an (approximate) equilibrium in the complex original game induced by \mechname.
Specifically, we introduce abstract parameters $(R,\Delta)$ to capture the behavior of \mechname under various audit models, which are formally defined in \Cref{def:auxiliary game formal} and \Cref{thm:equivalence of PBE formal}. We then prove the generalized version of \Cref{thm:equivalence of PBE} (namely \Cref{thm:equivalence of PBE formal}). In \Cref{sec:appendix regret and audits}, we derive the regret and audits guarantees of \mechname based on \Cref{thm:equivalence of PBE formal}.
In \Cref{sec:appendix deterministic main theorem}, we prove that under all three audit models that we consider, there are appropriate $(R,\Delta)$'s leading to our \Cref{thm:deterministic main thm,thm:adv noise model,thm:stoc noise model}.

\subsection{Additional Notations}\label{sec:appendix additional notations varying-p}
For convenience, we define a few additional notations for agent strategies:
\begin{itemize}
\item For any agent $i\in [K]$, in analog to agent $i$'s strategy throughout the game $\pi_i=(\pi_{t,i})_{t\in [T]}$, the notation $\pi_{-i}=(\pi_j)_{j\ne i}=(\pi_{t,j})_{t\in [T],j\ne i}$ denotes the joint strategy of all agents other than $i$.
\item For any agent $i\in [K]$, given their strategy $\pi_i^1=(\pi_{t,i}^1)_{t\in [T]}$ and their opponents' joint strategy $\pi_{-i}^2=(\pi_{t,j}^2)_{t\in [T],j\ne i}$, $\pi_i^1 \circ \pi_{-i}^2$ denotes the corresponding joint strategy obtained by concatenation, i.e., for every round $t\in [T]$, agent $i$ follows $\pi_{t,i}^1$ while any other agent $j\ne i$ follows $\pi_{t,j}^2$.
\item Given a round $t\in [T]$ together with two joint strategies $\pi^1$ and $\pi^2$, we use $\pi^1\diamond_t \pi^2$ to denote the joint strategy of following $\pi^1$ up to round $t$ and following $\pi^2$ afterwards, i.e., $\pi^1\diamond_t \pi^2=(\pi_{\tau,i})_{\tau\in [T],i\in [K]}$ where for all $i\in [K]$, $\pi_{\tau,i}=\pi_{\tau,i}^1$ for $\tau\leq t$ and $\pi_{\tau,i}=\pi_{\tau,i}^2$ for $\tau>t$.
\end{itemize}

For any strategy $\pi=(\pi^r,\pi^f)$ under \mechname, we can also rewrite the V-function defined in \Cref{eq:V-func general} through the Bellman form: For any eliminated agent $i\not \in \mA_t$, $V_i^{\pi}(\mH_t)=0$, $\forall \mH_t\in H_t$; for any alive agent $i\in \mA_t$, their V-function under joint strategy $\pi$ starting from history $\mH_t\in H_t$ is
\begin{align}
V_i^{\pi}(\mH_t)=\E\left [u_{t,i}\1[i_t=i]+V_i^{\pi}(\mH_{t+1})\middle \vert \mH_t\right ],\label{eq:V-func recursive}
\end{align}
where the randomness lies in the generation of agents' utilities $\bm u_t\sim \mV$, reports $\bm v_t\sim \pi_t^r$, central planer's decision $(i_t,o_t)\sim M_t$, audit feedback $w_t$, agents' flags $\bm f_t\sim \pi_t^f$, and the next-round history $\mH_{t+1}$. Also recall agents' ex-ante first-best utilities $\mu_{t,i}$ and ex-ante winning probabilities $q_{t,i}$ from \Cref{eq:fair share}:
\begin{align}\label{eq:fair share formal}
\mu_{t,i}&:= \E_{\bm u_t\sim \mV}\bigg [u_{t,i}\cdot \1[i\in \mA_t] \1[u_{t,i}\geq c] \1[u_{t,i}>u_{t,j},\forall j\in \mA_t\setminus \{i\}]\bigg ] \\
q_{t,i}&:=\Pr_{\bm u_t\sim \mV}\bigg \{(i\in \mA_t)\wedge (u_{t,i}\geq c) \wedge (u_{t,i}>u_{t,j},\forall j\in \mA_t\setminus \{i\})\bigg \},\quad \forall t\in [T],i\in [K].
\end{align}
While for notational simplicity, we write them as $\mu_{t,i}$ and $q_{t,i}$, we shall remark that they---in addition to the round number $t$ alone---depend on the specific realization of $\mA_t$ during the execution of \mechname.

Finally, to accommodate the various audit models in \Cref{assump:noiseless,assump:adv noise model,assump:stoc noise model}, we need to generalize the auxiliary game. Under the adversarial audit model (\Cref{assump:adv noise model}), it is inevitable that agents may strategize more. In \Cref{def:auxiliary game formal}, we introduce an abstract parameter $R$, the \emph{over-report tolerance}, which allows agents to over-report by $R$ without being eliminated. The main-text version, \Cref{def:auxiliary game}, is a special case when $R=0$.
Moreover, under the stochastic audit model (\Cref{assump:stoc noise model}), truth-telling agents may nevertheless be eliminated, hence the V-functions in both games may differ. This is captured by the \emph{approximation error} $\Delta$ defined in \Cref{thm:equivalence of PBE formal}. The main-text version, as proved in \Cref{lem:equiv between actual and no-flagging}, is a special case when $\Delta=0$.

\subsection{Auxiliary Game and Joint Strategies Therein}\label{sec:appendix auxiliary game}

\begin{definition}[Auxiliary Game with Over-Report Tolerance $R$]\label{def:auxiliary game formal}
Initially, alive agent set $\mA_1=[K]$. In round $t\in [T]$, each alive $i\in \mA_t$ reports a $v_{t,i}\in [0,1]$ that can only depend on their own private utility $u_{t,i}$, round number $t$, and set of alive agents $\mA_t$---but not any other historical information.
Further, if $p_{t,i}:= \frac{1+K^2}{(T-t)q_{t,i} c} \leq 1$, $v_{t,i}\le u_{t,i}+{\color{magenta}R}$ must hold with probability 1, where the fair winning probability $q_{t,i}$ (and also the fair share $\mu_{t,i}$) in the auxiliary game is defined the same as \Cref{eq:fair share formal} using $\mA_t$.
After observing reports $\bm v_t$, the mechanism allocates to the highest-report alive agent $i_t=\argmax_{i\in \mA_t} v_{t,i}$. If $v_{t,i_t}>u_{t,i_t}+{\color{magenta}R}$, agent $i_t$ gets eliminated, i.e., $\mA_{t+1}=\mA_t\setminus \{i_t\}$; otherwise, $\mA_{t+1}=\mA_t$ is unchanged.
\end{definition}
The auxiliary game defined in the main text (namely \Cref{def:auxiliary game}) is therefore a special case where $R=0$.
We specialize our definition of agents' strategies (recall \Cref{def:agent_strat_mechanism}) to the auxiliary game as follows.
\begin{definition}[Agents' Strategies in Generalized Auxiliary Game]\label{def:agents strategy auxiliary}
For any $t\in [T]$, we call $\tilde \mH_t:= (t,\mA_t)$ the simplified history at the beginning of round $t$, and $\tilde H_t:=\{t\}\times 2^{[K]}$ the corresponding simplified history space.
A {report strategy} in the generalized auxiliary game with tolerance $R$ (\Cref{def:auxiliary game formal}) for agent $i\in[K]$ is a sequence of measurable functions $\pi_i^r:=(\pi_{t,i}^r)_{t\in[T]}$ where
\begin{equation*}
\pi_{t,i}^r\colon [0,1]\times \tilde H_t\times \Xi \to \text{Range}_{t,i}(u_{t,i}),~\text{where }\text{Range}_{t,i}(u_{t,i}):=\begin{cases}
[0,u_{t,i}+R],&p_{t,i}\leq 1\\
[0,1],&p_{t,i}> 1
\end{cases},
\end{equation*}
such that their report in round $t\in[T]$ is $v_{t,i}:=\pi_{t,i}^r(u_{t,i},\mH_t^i,\xi^r_{t,i})$, where $\xi^r_{t,i}\sim\mD_\xi$ is sampled independently from other random variables. There is no answers in this auxiliary game, i.e., $F=\{0\}$.
A {joint strategy} for agents is a collection of all agent's report strategies $\pi^r:=(\pi_{i}^r)_{i\in[K]}$.
\end{definition}

In the auxiliary game, the V-function only depends on the simplified history $\tilde \mH_t$ since agents' and the mechanism' actions only depend on $\tilde \mH_t$. Therefore, for a joint strategy ${\pi}^r$, the agents' V-functions under ${\pi}^r$ starting from simplified history $\tilde \mH_t$ can be written as
\begin{equation*}
\tilde V_i^{{\pi}^r}(\tilde \mH_t)=\E\left [\sum_{\tau=t}^T u_{t,i}\1[i_t=i]\middle \vert \tilde \mH_t\right ],
\end{equation*}
where agents utilities $\bm u_t,\bm u_{t+1},\ldots,\bm u_T\overset{\text{i.i.d.}}{\sim}\mV$, reports $\bm v_t\sim {\pi}^r_t,\bm v_{t+1}\sim {\pi}^r_{t+1},\ldots,\bm v_T\sim {\pi}^r_T$, and histories $\tilde \mH_{t+1}=(t+1,\mA_{t+1}),\ldots,\tilde \mH_T=(T,\mA_T)$ are computed according to \Cref{def:auxiliary game formal}.

To summarize, the auxiliary game \Cref{def:auxiliary game formal} differs from the original game in three aspects:
\begin{enumerate}
\item \textbf{Reports only depend on simplified $\tilde \mH_t$.} Agents must decide their reports $v_{t,i}$ only using their own private utilities $u_{t,i}$, the current round number $t$, and the set of alive agents $\mA_t$. \label{item:only simplified history formal}
\item \textbf{No Over-Reporting Unless $p_{t,i}>1$.} When $p_{t,i}=\frac{1+K^2}{(T-t)q_{t,i} c} \leq 1$ (cf. the $\frac{4(1+K^2)}{(T-t)\hat q_{t,i} c}$ term in $\hat p_{t,i}$), agents must have their reports (roughly) upper bounded by utilities, i.e., $v_{t,i}\le u_{t,i}+R$. \label{item:no over-report unless p>=1 formal}
\item \textbf{Eliminate Immediately on Mark Up.} In case an agent $i$ wins by over-reporting, i.e., $i_t=i$ and their report $v_{t,i}>u_{t,i}+R$, they are eliminated for all future rounds. In comparison, in the original game, agent $i$ would be eliminated w.p. $\hat p_{t,i}$ that may be less than $1$ (\Cref{line:randomly_check} in \Cref{alg:mechanism}).\label{item:eliminate immediately formal}
\end{enumerate}

Following the sketch in \Cref{sec:sketch varying-p}, we first prove \Cref{lem:V lower and upper bound} in presence of the tolerance parameter $R$, and then generalize and prove \Cref{thm:equivalence of PBE} to capture the approximation error of auxiliary games.

\subsection{Lower and Upper Bounds on V-Functions (Generalized \Cref{lem:V lower and upper bound})}
\begin{lemma}[Upper and Lower Bounds on V-Functions]\label{lem:V lower and upper bound formal}
Let $\pi^{\ast,r}$ be a PBE in the auxiliary game from \Cref{def:auxiliary game formal}. Fix any auxiliary game history $\tilde \mH_t=(t,\mA_t)$ yielded by following $\pi^{\ast,r}$ in the past. Then:
\begin{align}
(T-t+1) \mu_{t,i}-(K-1) &\leq \tilde V_i^{{\pi}^{\ast,r}}(\tilde \mH_t)\leq (T-t+1) \mu_{t,i}+K(K-1),&&\quad \forall \tilde \mH_t=(t,\mA_t), i\in \mA_t,\label{eq:for alive formal}\\
(T-t+1) \mu_{1,i}-K &\leq \tilde V_i^{{\pi}^{\ast,r}}(\tilde \mH_t)=0,&&\quad \forall \tilde \mH_t=(t,\mA_t), i\notin \mA_t. \label{eq:for dead formal}
\end{align}
\end{lemma}
\begin{proof}

\paragraph{Lower Bounds for Alive Agents.} Fix an alive agent $i\in\mA_t$. We start by proving a lower bound on the utility of agent $i$ in the auxiliary game (\Cref{def:auxiliary game formal}). Consider the reporting strategy $\pi_i^r$ for agent $i$ that always reports $v_{\tau,i}=u_{\tau,i}+R$ for any round $\tau\in[T]$ (i.e., the maximum over-report tolerance allowed by \Cref{def:auxiliary game formal}). Since ${\pi}^{\ast,r}$ is a PBE in the auxiliary game and $\pi_i^r\circ {\pi}^{\ast,r}_{-i}$ is a valid joint strategy therein,
\begin{equation}\label{eq:pbe_for_lower_bound}
\tilde V_i^{{\pi}^{\ast,r}}(\tilde \mH_t)\ge \tilde V_i^{\pi_i^r\circ {\pi}^{\ast,r}_{-i}}(\tilde \mH_{t}).
\end{equation}

We now lower bound the RHS, where agent $i$ always reports $v_{t,i}=u_{t,i}+R$. If in any round $\tau\ge t$ agent $i$ would have won had other agents over-reported by no more than $R$ (i.e., agent $i$ had the largest true utility $u_{\tau,i}=\max_{j\in \mA_{\tau}} u_{\tau,j}$ and $u_{\tau,i}+R\ge c$), either agent $i$ won so that $i_{\tau}=i$, or the winner $i_{\tau}$ over-reported and was eliminated. Since each agent can only be eliminated once and agent $i$ is never eliminated, the latter case occurs for at most $K-1$ rounds. Making it formal, we get
\begin{align*}
&\quad \tilde V_i^{\pi_i^r\circ {\pi}^{\ast,r}_{-i}}(\tilde \mH_{t}) = \E_{\pi_i^r\circ {\pi}^{\ast,r}_{-i}}\left [ \sum_{\tau=t}^T u_{\tau,i} \1[i_{\tau}=i] \middle \vert \tilde \mH_t\right ]\\
&\geq \E_{\pi_i^r\circ {\pi}^{\ast,r}_{-i}} \sqb{ \sum_{\tau=t}^T u_{\tau,i} \paren{\1\sqb{u_{\tau,i}=\max_{j\in \mA_{\tau}}u_{\tau,j}} \1[u_{\tau,i}+R\ge c]  - \1[i_t\neq i] \1\sqb{v_{t,i_t}>u_{t,i_t}+R}} \1\sqb{i\in \mA_{\tau}}}\\
&\overset{(a)}{\geq} \E_{\pi_i^r\circ {\pi}^{\ast,r}_{-i}} \sqb{ \sum_{\tau=t}^T u_{\tau,i} \1\sqb{u_{\tau,i}=\max_{j\in \mA_{\tau}}u_{\tau,j}} \1[u_{\tau,i}\ge c]} - \E_{\pi_i^r\circ \pi_{-i}^{\ast,r}} \left [ \sum_{\tau=t}^T u_{\tau,i} \1[i_t\neq i] \1\sqb{v_{t,i_t}>u_{t,i_t}+R}\right ]\\
&\overset{(b)}{=}(T-t+1)\mu_{t,i} - \E_{\pi_i^r\circ {\pi}^{\ast,r}_{-i}} \sqb{ \sum_{t=1}^T |\mA_t|-|\mA_{t+1}|} \geq (T-t+1)\mu_{t,i} - (K-1).
\end{align*}
where (a) is because agent $i$ is always alive under $\pi_i^r$ (hence $\1[i\in \mA_\tau]=1$ for all $\tau\ge t$) and $\1[u_{\tau,i}+R\ge c]\ge \1[u_{\tau,i}\ge c]$, while (b) uses the definition of $\mu_{\tau,i}$ in \Cref{eq:fair share formal}, the fact that $\mA_{\tau}\subseteq \mA_t$ and $i\in \mA_{\tau}$ imply $\mu_{\tau,i}\ge \mu_{t,i}$, and the fact that $v_{t,i_t}>u_{t,i_t}+R$ implies $\mA_{t+1}=\mA_t\setminus \{i_t\}$ according to \Cref{def:auxiliary game formal}.

\paragraph{Lower Bound for Eliminated Agents.}
We make the following claim: For any round $\tau\in[T]$, simplified history $\tilde\mH_\tau$, and alive agent $i\in\mA_\tau$ such that $(T-\tau)\mu_{\tau,i} > K$, agent $i$ remains alive a.s. under the PBE strategy ${\pi}^{\ast,r}$.
To prove it, consider an alternative strategy $\pi_{\tau,i}^{\prime,r}$ that caps the report $v_{\tau,i}$ suggested by $\pi_{\tau,i}^{\ast,r}$ as $v_{\tau,i}':=\min(v_{\tau,i},u_{\tau,i}+R)$.
As ${\pi}^{\ast,r}$ is a PBE, agent $i$'s unilateral deviation from $\pi_{\tau,i}^{\ast,r}$ to $\pi_{\tau,i}^{\prime,r}$ is unprofitable.

The only case where $\pi_{\tau,i}^{\ast,r}$ and $\pi_{\tau,i}^{\prime,r}$ yield different outcomes is when \emph{i)} agent $i$ over-reports under $\pi_{s,i}^{\ast,r}$, i.e., $v_{\tau,i}>u_{\tau,i}+R$, and \emph{ii)} agent $i$ wins in round $\tau$ by over-reporting $v_{s,i}$. Denote by $\mE_{\tau,i}$ this event. Under $\mE_{\tau,i}$, if agent $i$ followed $\pi_{\tau,i}^{\ast,r}$ by reporting $v_{\tau,i}$, their total utility is $u_{\tau,i}\leq 1$ since they are automatically eliminated (\Cref{item:eliminate immediately formal}). On the other hand, if agent $i$ followed $\pi_{\tau,i}^{\prime,r}$, they obtain
\begin{equation*}
u_{\tau,i}\1[i_\tau=i]+\tilde V_i^{{\pi}^{\ast,r}}(\tilde \mH_{\tau+1})\geq 0+(T-\tau)\mu_{\tau+1,i}-(K-1)\ge (T-\tau)\mu_{\tau,i}-(K-1),
\end{equation*}
where the first inequality uses the just-proved lower bound in \Cref{eq:for alive formal}, and the second inequality is because $\mA_{\tau+1}\subseteq \mA_\tau$ and hence $\mu_{\tau+1,i}\ge \mu_{\tau,i}$. Therefore, if $(T-\tau)\mu_{\tau,i}>K$, to ensure the unilateral deviation from $\pi_{\tau,i}^{\ast,r}$ to $\pi_{\tau,i}^{\prime,r}$ is unprofitable, we must have $\Pr\{\mE_{\tau,i}\}=0$. Thus $i\in \mA_{\tau+1}$ a.s.
This means that for an eliminated agent in round $t$, they must be eliminated in some previous round $\tau<t$, and consequently $(T-\tau)\mu_{\tau,i}\le K$. Since $\mu_{1,i}\le \mu_{\tau,i}$ when $i\in \mA_\tau$, we arrive at the conclusion that $(T-t+1)\mu_{1,i}\le (T-\tau)\mu_{\tau,i}\le K$.

\paragraph{Upper Bounds for Alive Agents.}
Starting from $\tilde \mH_t$, only those agents in $\mA_t$ may receive allocations during rounds $\tau=t,t+1,\ldots,T$. Hence, the sum of V-functions of all agents can never exceed the first-best mechanism which allocates the resource to the alive agent with the highest utility, namely
\begin{align*}
\sum_{i=1}^K \tilde V_i^{{\pi}^{\ast,r}}(\tilde \mH_{t}) &\le \sum_{i\in\mA_t} (T-t+1)\E_{\bm u_t\sim \mV} \bigg[u_{t,i}\1[i\in \mA_t] \1[u_{t,i}>u_{t,j},\forall j\in \mA_t\setminus \{i\}] \bigg]\\
&\le (T-t+1)\left (\sum_{i\in\mA_t} \mu_{t,i} + c\Pr_{\bm u\sim \mV}\{u_i<c,\forall i\in\mA_t\}\right ),
\end{align*}
where the second inequality is because the definition of $\mu_{t,i}$ in \Cref{eq:fair share formal} contains an indicator $\1[u_{t,i}\ge c]$.
We recall that by \Cref{assumption:min report}, there exists some (albeit unknown to the planner) agent $i_0\in [K]$ that ensures $\Pr_{u_{i_0}\sim\mV_{i_0}} \{u_{i_0}\geq c\}=1$. Thus, if $i_0\in \mA_t$ is still alive, we already proved the desired upper bound since $\Pr_{\bm u\sim \mV}\{u_i<c,\forall i\in\mA_t\}\le \Pr_{u_{i_0}\sim \mV_{i_0}}\{u_{i_0}<c\}=0$. We now assume $i_0\notin\mA_t$, in which case
\begin{equation*}
\sum_{i\notin\mA_t}\mu_{1,i}=\E_{\bm u\sim \mV}\sqb{\max_{i\notin \mA_t} u_i \1\sqb{\max_{i\in \mA_t} u_i<\max_{i\not\in \mA_t} u_i}}\ge\E_{\bm u\sim \mV} \left [u_{i_0} \1\left [\max_{i\in \mA_t}u_i<u_{i_0}\right ]\right ]\ge c\Pr_{\bm u\sim \mV}\{u_i<c,\forall i\in\mA_t\},
\end{equation*}
where the first inequality uses the assumption that $i_0\notin \mA_t$ and the second inequality uses $u_{i_0}\ge c$ a.s. from \Cref{assumption:min report}.
Combining it with the second inequality of \Cref{eq:for dead formal}, which says $(T-t+1)\mu_{1,i}-K\leq 0$ for all $i\notin \mA_t$, we have
\begin{equation}\label{eq:upper_bound_max_welfare}
\sum_{i=1}^K \tilde V_i^{{\pi}^{\ast,r}}(\tilde \mH_{t}) \leq (T-t+1)\left (\sum_{i\in\mA_t} \mu_{t,i}+\sum_{i\notin\mA_t} \mu_{1,i}\right ) \leq (T-t+1)\sum_{i\in\mA_t} \mu_{t,i} + K\sum_{i=1}^K \1[i\notin \mA_t].
\end{equation}
Therefore, utilizing the lower bound for those alive agents in \Cref{eq:for alive formal}, we know
\begin{equation*}
\sum_{j\in\mA_t\setminus\{i\}} \tilde V_{j}^{{\pi}^\ast}(\mH_{t})\ge \sum_{j\in\mA_t\setminus\{i\}} \left ((T-t+1)\mu_{t,j}-K\right ),\quad \forall i\in \mA_t.
\end{equation*}
Subtracting the RHS from \Cref{eq:upper_bound_max_welfare}, we arrive at
\begin{align*}
\tilde V_i^{{\pi}^{\ast,r}}(\tilde \mH_{t})&\le (T-t+1)\mu_{t,i}+K \sum_{j\ne i} \big (\1[j\notin \mA_t]+\1[j\in \mA_t\setminus \{i\}]\big )\\&\le (T-t+1)\mu_{t,i}+K(K-1),\quad \forall i\in \mA_t.
\end{align*}
This proves the upper bound part of \Cref{eq:for alive formal}.
\end{proof}

\subsection{Correspondence of Equilibrium (Generalized \Cref{thm:equivalence of PBE} and \Cref{lem:min report with u is good})}\label{sec:equivalence of PBE formal}
\begin{theorem}[Correspondence of Equilibrium]\label{thm:equivalence of PBE formal}
Suppose that \Cref{assumption:min report} holds.
Consider the generalized auxiliary game with tolerance $R$ (\Cref{def:auxiliary game formal}). Let $\pi^{\ast,r}$ be a PBE therein, and let $\pi^{\ast,f}$ be the well-behaved reporting strategy \Cref{def:well-behaved flagging}.
For the original-game strategy $\pi^\ast=(\pi^{\ast,r},\pi^{\ast,f})$, let
\begin{equation}\label{eq:definition of Delta}
\Delta(\pi^{\ast,r}):=\max_{i\in [K]} \left \vert V_i^{\pi^\ast} (\mH_1)-\tilde V_i^{\pi^{\ast,r}}(\tilde \mH_1)\right \rvert
\end{equation}
be the maximum approximation error between the auxiliary game and the original game. Then, in the original game induced by \mechname, $\pi^\ast$ is a $\Delta(\pi^{\ast,r})$-Approximate Bayesian Equilibrium (see \Cref{def:approx PBE}).
\end{theorem}
\begin{proof}
Fix an agent $i\in[K]$ and any alternative strategy ${\pi}_i=({\pi}_i^r,{\pi}_i^f)$. We show the unilateral deviation does not increase their original-game utility by more than $\Delta(\pi^{\ast,r})$, i.e., $V_i^{\pi_i\circ \pi_{-i}^\ast}(\mH_1)\le V_i^{\pi^\ast}(\mH_1)+\Delta(\pi^{\ast,r})$.

Knowing that all other agents' are using $\pi_{-i}^\ast$ and the planner is committing to \mechname, agent $i$ essentially face a $T$-round Markov Decision Process (which we call the ``agent-$i$ MDP''), where the round-$t$ state is the observable history $\mH_t^i\in H_t^i$. Given state $\mH_t^i$, agent $i$ decides their actions---mappings from newly observable information to reports and flags---namely $\pi_{t,i}^r(\mH_t^i)\colon u_{t,i}\mapsto v_{t,i}$ and $\pi_{t,i}^f(\mH_t^i)\colon (u_{t,i},\bm v_t,i_t,o_t,w_t)\mapsto f_{t,i}$. For notational simplicity, we denote the actions of agent $i$ on state $\mH_t^i$ by $\pi_{t,i}(\mH_t^i)=\big (\pi_{t,i}^r(\mH_t^i),\pi_{t,i}^f(\mH_t^i)\big )$.

Similar to the agent-$i$ V-function defined in \Cref{eq:V-func recursive}, we define the agent-$i$ Q-function (the abbreviation for state-action-value function) for any state $\mH_t$ and action $\pi_{t,i}(\mH_t^i)$ as:
\begin{equation}\label{eq:Q func recursive}
Q_i^{\pi^\ast}(\mH_t,\pi_{t,i}(\mH_t^i)):=\E_{\pi_{t,i}\circ \pi_{t,-i}^\ast} \Big [u_{t,i}\1[i_t=i]+V_i^{\pi^\ast}(\mH_{t+1})~\Big \vert~ \mH_t\Big ],\quad \forall \mH_t,~\forall \pi_{t,i}(\mH_t^i).
\end{equation}
By definition, we know $V_i^{\pi^\ast}(\mH_t)=Q_i^{\pi^\ast}(\mH_t,\pi_{t,i}^\ast(\mH_t^i))$.
Applying the performance difference lemma \citep{kakade2002approximately} to the ``agent-$i$ MDP'' w.r.t. two agent-$i$ policies $\pi_i$ and $\pi_i^\ast$, we have:
\begin{equation}
V_i^{\pi_i\circ \pi_{-i}^\ast}(\mH_1)-V_i^{\pi^\ast}(\mH_1)=\sum_{t=1}^T \E_{\mH_t\sim \pi_i\circ \pi_{-i}^\ast}\left [Q_i^{\pi^\ast}(\mH_t,\pi_{t,i}(\mH_t^i))-V_i^{\pi^\ast}(\mH_t)\right ].\label{eq:performance difference lemma}
\end{equation}

Fixing $\mH_t\sim \pi_i\circ \pi_{-i}^\ast$. Since all other agents were using $\pi_{<t,-i}^{\ast,f}$ in the past, we immediately have $\hat q_{t,j}\le 4q_{t,j}$ for all $j\in [K]$. 
We now construct another strategy $\pi_i'(\mH_t^i)$ that \emph{i)} well-approximates $\pi_i$ in the original game; and \emph{ii)} is a valid auxiliary-game strategy and thus inferior than $\pi_i^\ast$ therein:
\begin{enumerate}
\item \textbf{Eliminating the Deviation in Flagging Strategies.} Regardless what $\pi_i^f$ is, we always set $\pi_i^{\prime,f}$ as the well-behave flagging strategy $\pi_i^{\ast,f}$ in \Cref{def:well-behaved flagging}. We argue that $\mH_{t+1}$ always ensures $\hat q_{t+1,j}\le 4q_{t+1,j}$ for all $j\in [K]$ under either $\pi_{t,i}^f$ or ${\pi}_{t,i}^{\ast,f}$, which already suffices to impose enough threats: If $i_t$ gets eliminated, then $\hat q_{t+1,j}=0$ for all $j$; otherwise, for any $j\ne i_t$, we have $\hat q_{t+1,j}=\hat q_{t,j}\le 4q_{t,j}=4q_{t+1,j}$, and it only remains to consider the case where $\hat q_{t+1,i_t}>4q_{t+1,i_t}$---but due to ${\pi}_{t,-i}^{\ast,f}$, such $\hat q_{t+1,i_t}$ would be reset to $0$.
\item \textbf{Eliminating the Deviation to Over-Reporting.} For any private utility $u_{t,i}$, we define a coupling between $v_{t,i}\sim \pi_{t,i}^r(u_{t,i};\mH_t^i)$ and another random variable $v_{t,i}'$: If $p_{t,i}\le 1$, then $v_{t,i}'=\min(v_{t,i},u_{t,i}+R)$; otherwise $v_{t,i}'=v_{t,i}$. Let $\pi_i^{\prime,r}(u_{t,i};\mH_t^i)$ be the marginal of $v_{t,i}'$, then it complies with \Cref{item:no over-report unless p>=1 formal} of \Cref{def:auxiliary game formal}. As we informally argued in \Cref{lem:min report with u is good} in the main text, due to our carefully chosen audit probability $\hat p_{t,i}$, this strategy is always better off. This is formalized shortly when controlling the first term in \Cref{eq:Q diff decomposition}.\label{item:coupling of over-reports}
\item \textbf{Eliminating the History Dependency.} It only remains to eliminate the dependency of $\pi_i'$ on the full history $\mH_t^i$, which is not allowed according to \Cref{item:only simplified history formal} of \Cref{def:auxiliary game formal}. While $\pi_i'$ may not be directly reducible to a valid auxiliary-game strategy (the detailed argument is presented when controlling the ``Second Term in \Cref{eq:Q diff decomposition}''), we manage to bypass it using another performance difference lemma.
\end{enumerate}
After constructing the $\pi_i'$, we fix a history $\mH_t$ sampled from $\pi_i\circ \pi_{-i}^\ast$ and study the expectation in \Cref{eq:performance difference lemma}:
\begin{align}
&\quad Q_i^{\pi^\ast}(\mH_t,\pi_{t,i}(\mH_t^i))-V_i^{\pi^\ast}(\mH_t)\\
&=\left [Q_i^{\pi^\ast}(\mH_t,\pi_{t,i}(\mH_t^i))-Q_i^{\pi^\ast}(\mH_t,\pi_{t,i}'(\mH_t^i))\right ]+\left [Q_i^{\pi^\ast}(\mH_t,\pi_{t,i}'(\mH_t^i))-Q_i^{\pi^\ast}(\mH_t,\pi_{t,i}^\ast(\mH_t^i))\right ]. \label{eq:Q diff decomposition}
\end{align}

\paragraph{First Term in \Cref{eq:Q diff decomposition}.}
For notational simplicity, from now on, all expectations are taken conditional on the history $\mH_t$, fixing other agents' strategies as $\pi_{-i}^\ast$, and fixing the mechanism as \mechname.
For the first term in \Cref{eq:Q diff decomposition}, we generalize the arguments of \Cref{lem:min report with u is good}: By definition of Q-function in \Cref{eq:Q func recursive}, we compare immediate rewards and future gains separately, i.e.,
\begin{align*}
&\quad Q_i^{\pi^\ast}(\mH_t,\pi_{t,i}(\mH_t^i))-Q_i^{\pi^\ast}(\mH_t,\pi_{t,i}'(\mH_t^i))\\
&\overset{(a)}{=}\E_{\pi_{t,i}\circ \pi_{t,-i}^{\ast}} \left [u_{t,i}\1[i_i=i]+V_i^{\pi^\ast}(\mH_{t+1})\middle \vert \mH_t\right ]-\E_{\pi_{t,i}'\circ \pi_{t,-i}^{\ast}} \left [u_{t,i}\1[i_i=i]+V_i^{\pi^\ast}(\mH_{t+1})\middle \vert \mH_t\right ]\\
&\overset{(b)}{\le} 1+\E_{\pi_{t,i}\circ \pi_{t,-i}^{\ast}}\left [\tilde V_i^{\pi^{\ast,r}}(\tilde \mH_{t+1})\middle \vert \mH_t\right ]+\E_{\pi_{t,i}\circ \pi_{t,-i}^{\ast}}\left [V_i^{\pi^\ast}(\mH_{t+1})-\tilde V_i^{\pi^\ast}(\tilde \mH_{t+1})\middle \vert \mH_t\right ]\\
&\qquad -\E_{\pi_{t,i}'\circ \pi_{t,-i}^{\ast}}\left [\tilde V_i^{\pi^{\ast,r}}(\tilde \mH_{t+1})\middle \vert \mH_t\right ]-\E_{\pi_{t,i}'\circ \pi_{t,-i}^{\ast}}\left [V_i^{\pi^\ast}(\mH_{t+1})-\tilde V_i^{\pi^\ast}(\tilde \mH_{t+1})\middle \vert \mH_t\right ]\\
&\overset{(c)}{\le} \E_{\pi_{t,i}\circ \pi_{t,-i}^{\ast}}\left [V_i^{\pi^\ast}(\mH_{t+1})-\tilde V_i^{\pi^\ast}(\tilde \mH_{t+1})\middle \vert \mH_t\right ]-\E_{\pi_{t,i}'\circ \pi_{t,-i}^{\ast}}\left [V_i^{\pi^\ast}(\mH_{t+1})-\tilde V_i^{\pi^\ast}(\tilde \mH_{t+1})\middle \vert \mH_t\right ],
\end{align*}
where we remark that all $\tilde \mH_{t+1}$ here means the simplified history corresponding to $\mH_{t+1}$ (which is generated by \mechname from $\mH_t$, instead of the one generated by the auxiliary game from $\tilde \mH_t$). Here, (a) used \Cref{eq:Q func recursive}, (b) used the fact that $u_{t,i}\in [0,1]$, and we now prove (c), the following inequality:
\begin{equation}\label{lem:min report with u is good formal}
1+\E_{\pi_{t,i}\circ \pi_{t,-i}^\ast}\left [\tilde V_i^{\pi^{\ast,r}}(\tilde \mH_{t+1})\middle \vert \mH_t\right ]-\E_{\pi_{t,i}'\circ \pi_{t,-i}^\ast}\left [\tilde V_i^{\pi^{\ast,r}}(\tilde \mH_{t+1})\middle \vert \mH_t\right ]\le 0.
\end{equation}
It is essentially the generalized version of \Cref{lem:min report with u is good}, which claims that over-reporting is always unprofitable.

To prove \Cref{lem:min report with u is good formal}, we fix the realization of the true utility of agent $i$ as $u_{t,i}$.
Consider the coupling between $v_{t,i}\sim \pi_{t,i}^{r}(u_{t,i};\mH_t^i)$ and $v_{t,i}'\sim \pi_{t,i}^{\prime,r}(u_{t,i};\mH_t^i)$ defined in Item~\ref{item:coupling of over-reports} above. When $v_{t,i}=v_{t,i}'$, they trivially induce the same distribution of next-round simplified history $\tilde \mH_{t+1}$ (depending on the realization of other agents' reports). Otherwise, we have $v_{t,i}\ne v_{t,i}'$, which, by definition of the coupling, ensures $p_{t,i}=\frac{1+K^2}{(T-t)q_{t,i}c}\le 1$ and $v_{t,i}>u_{t,i}+R$. Furthermore, if neither $v_{t,i}$ nor $v_{t,i}'$ wins (given other agents' realizations of reports $\bm u_{t,-i}$), we also trivially induce the same $\tilde \mH_{t+1}$. Noticing that $v_{t,i}>u_{t,i}+R=v_{t,i}'$ in this case, we arrive at:
\begin{align*}
&\quad 1+\E_{\pi_{t,i}\circ \pi_{t,-i}^\ast}\left [\tilde V_i^{\pi^{\ast,r}}(\tilde \mH_{t+1})\middle \vert \mH_t\right ]-\E_{\pi_{t,i}'\circ \pi_{t,-i}^\ast}\left [\tilde V_i^{\pi^{\ast,r}}(\tilde \mH_{t+1})\middle \vert \mH_t\right ]\\
&=1+\E_{v_{t,i},v_{t,i}',\bm v_{t,-i}}\Bigg [\1[p_{t,i}\le 1] \1[v_{t,i}>u_{t,i}+R] \1[v_{t,i}>v_{t,j},\forall j\in \mA_t\setminus \{i\}] \\
&\quad \qquad \left (\E[\tilde V_i^{\pi^{\ast,r}}(\tilde \mH_{t+1})\mid \mH_t,v_{t,i},\bm v_{t,-i}]-\E[\tilde V_i^{\pi^{\ast,r}}(\tilde \mH_{t+1})\mid \mH_t,v_{t,i}',\bm v_{t,-i}]\right )\Bigg ],
\end{align*}
where we again remark that the $\tilde \mH_{t+1}$ here means the simplified history corresponding to $\mH_{t+1}$ (induced by \mechname), instead of the one induced by the auxiliary game starting from $\tilde \mH_t$.
We now compare the $\tilde \mH_{t+1}$'s induced by $v_{t,i}$ and $v_{t,i}'$, under the conditions \textit{i)} $p_{t,i}\le 1$, \textit{ii)} $v_{t,i}>u_{t,i}+R$, and \textit{iii)} $v_{t,i}>v_{t,j},\forall j\in \mA_t\setminus \{i\}$:
\begin{itemize}
\item When reporting $v_{t,i}$, agent $i$ is the winner because of \textit{iii)}. With probability $\hat p_{t,i}$, \mechname audit agent $i$, who is then eliminated because of \textit{ii)}. Thus in this case, we arrive at a simplified history $(t+1,\mA_{t}\setminus \{i\})$ such that $\tilde V_i^{\pi^{\ast,r}}(t+1,\mA_{t}\setminus \{i\})=0$. For the remaining probability of $1-\hat p_{t,i}$, no audit or elimination happen and we arrive at a simplified history $(t+1,\mA_{t})$. In this case, $\mu_{t+1,i}=\mu_{t,i}$; from \Cref{lem:V lower and upper bound formal} (generalized \Cref{lem:V lower and upper bound}), $\tilde V_i^{\pi^{\ast,r}}(t+1,\mA_t)\le (T-t)\mu_{t,i}+K(K-1)$. Hence we have
\begin{equation*}
\E\left [\tilde V_i^{\pi^{\ast,r}}(\tilde \mH_{t+1})\middle \vert \mH_t,v_{t,i},\bm v_{t,-i}\right ]\le (1-\hat p_{t,i}) \cdot \big ((T-t)\mu_{t,i}+K(K-1)\big ).
\end{equation*}
We further upper bound the coefficient $(1-\hat p_{t,i})$. Since $\mH_t$ ensures $\hat q_{t,i}\le 4q_{t,i}$, we have
\begin{equation*}
\hat p_{t,i}=\min\left (\frac{4(1+K^2)}{(T-t)\hat q_{t,i} c},1\right )\ge \min\left (\frac{1+K^2}{(T-t)q_{t,i} c},1\right )=\min(p_{t,i},1)=p_{t,i},
\end{equation*}
where the last step uses \textit{i)}. Further plugging in the definition that $p_{t,i}=\frac{1+K^2}{(T-t)q_{t,i} c}$ gives
\begin{equation*}
\E\left [\tilde V_i^{\pi^{\ast,r}}(\tilde \mH_{t+1})\middle \vert \mH_t,v_{t,i},\bm v_{t,-i}\right ]\le \left (1-\frac{1+K^2}{(T-t)q_{t,i} c}\right )\big ((T-t)\mu_{t,i}+K(K-1)\big ).
\end{equation*}
\item Otherwise, when near-truthfully reporting $v_{t,i}'=u_{t,i}+R$, by definition of the tolerance parameter $R$ in \Cref{def:auxiliary game formal}, agent $i$ is never eliminated by \mechname and thus the new simplified history $\tilde \mH_{t+1}=(t+1,\mA_{t+1})$ ensures $i\in \mA_{t+1}$. Consequently, $\mu_{t+1,i}\ge \mu_{t,i}$. Hence \Cref{lem:V lower and upper bound formal} gives
\begin{equation*}
\E\left [\tilde V_i^{\pi^{\ast,r}}(\tilde \mH_{t+1})\middle \vert \mH_t,v_{t,i}',\bm v_{t,-i}\right ]\ge (T-t)\mu_{t,i}-K.
\end{equation*}
\end{itemize}

Therefore, putting the two parts together, under \textit{i)}, \textit{ii)}, and \textit{iii)}, we have
\begin{align*}
&\quad \E\left [\tilde V_i^{\pi^{\ast,r}}(\tilde \mH_{t+1})\middle \vert \mH_t,v_{t,i},\bm v_{t,-i}\right ]-\E\left [\tilde V_i^{\pi^{\ast,r}}(\tilde \mH_{t+1})\middle \vert \mH_t,v_{t,i}',\bm v_{t,-i}\right ]\\
&\le \left (1-\frac{1+K^2}{(T-t)q_{t,i} c}\right )\big ((T-t)\mu_{t,i}+K(K-1)\big ) - \big ((T-t)\mu_{t,i}-K\big ) \\
&\le K^2 - (1+K^2)\frac{(T-t)\mu_{t,i}}{(T-t)q_{t,i} c} \le -1,
\end{align*}
where the last step uses \Cref{assumption:min report} which gives $\mu_{t,i}\ge q_{t,i} c$. This proves \Cref{lem:min report with u is good formal}, and hence
\begin{align}\label{eq:first term Q diff}
&\quad Q_i^{\pi^\ast}(\mH_t,\pi_{t,i}(\mH_t^i))-Q_i^{\pi^\ast}(\mH_t,\pi_{t,i}'(\mH_t^i))\\
&\le \E_{\pi_{t,i}\circ \pi_{t,-i}^{\ast}}\left [V_i^{\pi^\ast}(\mH_{t+1})-\tilde V_i^{\pi^\ast}(\tilde \mH_{t+1})\middle \vert \mH_t\right ]-\E_{\pi_{t,i}'\circ \pi_{t,-i}^{\ast}}\left [V_i^{\pi^\ast}(\mH_{t+1})-\tilde V_i^{\pi^\ast}(\tilde \mH_{t+1})\middle \vert \mH_t\right ].
\end{align}

\paragraph{Second Term in \Cref{eq:Q diff decomposition}.}
To compare $\pi_{t,i}'$ to $\pi_{t,i}^\ast$, we utilize the condition that $\pi_{t,i}^{\ast,r}$ is a PBE in the auxiliary game, hence any unilateral deviation therein is unprofitable. However, there is a technical subtlety: $\pi_{t,i}^{\prime,r}$ is \emph{not} a valid auxiliary-game strategy, because different $\mH_t$'s that correspond to the same $\tilde \mH_t$ may be assigned for different reports (because $\pi_{t,i}^r$ does; recall the construction of $\pi_{t,i}^{\prime,r}$ from Item~\ref{item:coupling of over-reports} above).

To bypass this, we view the auxiliary game as another MDP---which we call ``agent-$i$ auxiliary MDP''---by fixing all other agents' strategies as $\pi^{\ast,r}_{-i}$ and fixing the planner's mechanism as in the auxiliary game (\Cref{def:auxiliary game formal}). Thus, the state in this MDP is the simplified history $\tilde \mH_t\in H_t$ and the action is any $\pi_{t,i}^r\colon u_{t,i}\mapsto v_{t,i}$ that is valid on $\tilde \mH_t$ (recall \Cref{def:agents strategy auxiliary}). Similar to \Cref{eq:Q func recursive}, we define the Q-function for this MDP as $\tilde Q_i^{\pi^{\ast,r}}(\tilde H_t,\pi_{t,i}^r):=\E_{\pi_{t,i}^r\circ \pi_{t,-i}^{\ast,r}}[u_{t,i}\1[i_t=i]+\tilde V_i^{\pi^{\ast,r}}(\tilde \mH_{t+1})\mid \tilde \mH_t]$. Since $\pi^{\ast,r}$ is an auxiliary-game PBE, we have $\tilde Q_i^{\pi^{\ast,r}}(\tilde \mH_t,\pi_{t,i}^r)\le \tilde V_i^{\pi^{\ast,r}}(\tilde \mH_t)$ for any valid $\pi_{t,i}^r$ and any $\tilde \mH_t\in \tilde H_t$.
Therefore, we have
\begin{equation*}
\begin{aligned}
&\quad Q_i^{\pi^\ast}(\mH_t,\pi_{t,i}'(\mH_t^i))
=\tilde Q_i^{\pi^{\ast,r}}(\tilde \mH_t,\pi_{t,i}^{\prime,r}(\mH_t^i))+\left (Q_i^{\pi^\ast}(\mH_t,\pi_{t,i}'(\mH_t^i))-\tilde Q_i^{\pi^{\ast,r}}(\tilde \mH_t,\pi_{t,i}^{\prime,r}(\tilde \mH_t))\right )\\
&\le \tilde V_i^{\pi^{\ast,r}}(\tilde \mH_t)+\left (Q_i^{\pi^\ast}(\mH_t,\pi_{t,i}'(\mH_t^i))-\tilde Q_i^{\pi^{\ast,r}}(\tilde \mH_t,\pi_{t,i}'(\mH_t^i))\right )\\
&=V_i^{\pi^\ast}(\mH_t)+\left (\tilde V_i^{\pi^{\ast,r}}(\tilde \mH_t)-V_i^{\pi^\ast}(\mH_t)\right )+\left (Q_i^{\pi^\ast}(\mH_t,\pi_{t,i}'(\mH_t^i))-\tilde Q_i^{\pi^{\ast,r}}(\tilde \mH_t,\pi_{t,i}'(\mH_t^i))\right ),
\end{aligned}
\end{equation*}
where the inequality is because $\pi_{t,i}^{\prime,r}(\mH_t^i)\colon u_{t,i}\mapsto v_{t,i}$ is valid on state $\tilde \mH_t$ since it complies with \Cref{item:no over-report unless p>=1 formal} of \Cref{def:auxiliary game formal}. Furthermore, for the last term on the RHS, since the allocation rule is the same between \mechname (\Cref{alg:mechanism}) and the auxiliary-game mechanism (\Cref{def:auxiliary game formal}), we have
\begin{equation*}
Q_i^{\pi^\ast}(\mH_t,\pi_{t,i}'(\mH_t^i))-\tilde Q_i^{\pi^{\ast,r}}(\tilde \mH_t,\pi_{t,i}'(\mH_t^i))=\E_{\pi_{t,i}'\circ \pi_{t,-i}^\ast} \left [V_i^{\pi^\ast}(\mH_{t+1})-\tilde V_i^{\pi^\ast}(\tilde \mH_{t+1})\middle \vert \mH_t\right ].
\end{equation*}

Further recalling that $V_i^{\pi^\ast}(\mH_t)=Q_i^{\pi^\ast}(\mH_t,\pi_{t,i}^\ast(\mH_t^i))$ by definition, we arrive at
\begin{align}\label{eq:second term Q diff}
&\quad Q_i^{\pi^\ast}(\mH_t,\pi_{t,i}'(\mH_t^i))-Q_i^{\pi^\ast}(\mH_t,\pi_{t,i}^\ast(\mH_t^i))\\
&\le \left (\tilde V_i^{\pi^{\ast,r}}(\tilde \mH_t)-V_i^{\pi^\ast}(\mH_t)\right )+\E_{\pi_{t,i}'\circ \pi_{t,-i}^\ast} \left [V_i^{\pi^\ast}(\mH_{t+1})-\tilde V_i^{\pi^\ast}(\tilde \mH_{t+1})\middle \vert \mH_t\right ].
\end{align}

\paragraph{Finishing the Proof.} For any $\mH_t\sim \pi_i\circ \pi_{-i}^\ast$, plugging \Cref{eq:first term Q diff,eq:second term Q diff} into \Cref{eq:Q diff decomposition} gives
\begin{equation*}
\begin{aligned}
&\quad Q_i^{\pi^\ast}(\mH_t,\pi_{t,i}(\mH_t^i))-V_i^{\pi^\ast}(\mH_t)\\
&\le \E_{\pi_{t,i}\circ \pi_{t,-i}^\ast}\left [V_i^{\pi^\ast}(\mH_{t+1})-\tilde V_i^{\pi^\ast}(\tilde \mH_{t+1})\middle \vert \mH_t\right ]-\E_{\pi_{t,i}'\circ \pi_{t,-i}^\ast}\left [V_i^{\pi^\ast}(\mH_{t+1})-\tilde V_i^{\pi^\ast}(\tilde \mH_{t+1})\middle \vert \mH_t\right ]\\
&\quad +(\tilde V_i^{\pi^{\ast,r}}(\tilde \mH_t)-V_i^{\pi^\ast}(\mH_t))+\E_{\pi_{t,i}'\circ \pi_{t,-i}^\ast}\left [V_i^{\pi^\ast}(\mH_{t+1})-\tilde V_i^{\pi^\ast}(\tilde \mH_{t+1})\middle \vert \mH_t\right ]\\
&=(\tilde V_i^{\pi^{\ast,r}}(\tilde \mH_t)-V_i^{\pi^\ast}(\mH_t))-\E_{\pi_{t,i}\circ \pi_{t,-i}^\ast}\left [\tilde V_i^{\pi^\ast}(\tilde \mH_{t+1})-V_i^{\pi^\ast}(\mH_{t+1})\middle \vert \mH_t\right ].
\end{aligned}
\end{equation*}
Further plugging it into the performance difference lemma \Cref{eq:performance difference lemma} and using telescoping sums, we yield
\begin{align*}
&\quad V_i^{\pi_i\circ \pi_{-i}^\ast}(\mH_1)-V_i^{\pi^\ast}(\mH_1)\\
&=\sum_{t=1}^T \E_{\mH_t\sim \pi_i\circ \pi_{-i}^\ast}\left [(\tilde V_i^{\pi^{\ast,r}}(\tilde \mH_t)-V_i^{\pi^\ast}(\mH_t))-\E_{\pi_{t,i}\circ \pi_{t,-i}^\ast}\left [\tilde V_i^{\pi^\ast}(\tilde \mH_{t+1})-V_i^{\pi^\ast}(\mH_{t+1})\middle \vert \mH_t\right ]\right ]\\
&=\sum_{t=1}^T \E_{\mH_t\sim \pi_i\circ \pi_{-i}^\ast}\left [(\tilde V_i^{\pi^{\ast,r}}(\tilde \mH_t)-V_i^{\pi^\ast}(\mH_t))\right ]-\sum_{t=2}^T \E_{\mH_t\sim \pi_i\circ \pi_{-i}^\ast}\left [(\tilde V_i^{\pi^{\ast,r}}(\tilde \mH_t)-V_i^{\pi^\ast}(\mH_t))\right ]\\
&=\tilde V_i^{\pi^{\ast,r}}(\tilde \mH_1)-V_i^{\pi^\ast}(\mH_1)\le \Delta(\pi^{\ast,r}),
\end{align*}
where the last inequality is due to the definition of $\Delta(\pi^{\ast,r})$ in \Cref{eq:definition of Delta}. This finishes the proof.
\end{proof}
\section{Regret and Audits Guarantees (Generalized \Cref{thm:deterministic main thm})}\label{sec:appendix regret and audits}
The objective of this section is to prove \Cref{thm:regret and audits formal}, a generalized version of \Cref{thm:deterministic main thm} under $R$ and $\Delta(\pi^{\ast,r})$. While the $\mR_T$ part almost directly comes from \Cref{lem:V lower and upper bound formal}, the $\mB_T$ part is highly non-trivial.

\begin{theorem}[Regret and Audits Guarantees of \mechname]\label{thm:regret and audits formal}
Suppose that \Cref{assumption:min report} holds. Consider the generalized auxiliary game with tolerance $R$ (\Cref{def:auxiliary game formal}). Any PBE $\pi^{\ast,r}$ therein induces a $\Delta(\pi^{\ast,r})$-Approximate Bayesian Equilibrium $\pi^\ast$ in the original game (see \Cref{thm:equivalence of PBE formal}), such that
\begin{align*}
\mR_T(\pi^\ast,\mechname)&\le K^2+\sum_{i=1}^K\left (\tilde V_i^{\pi^{\ast,r}}(\tilde \mH_1)-V_i^{\pi^\ast}(\mH_1)\right ),\\
\mB_T(\pi^\ast,\mechname)&=\O\left (\frac{K^3}{c}\log T\right )+\frac{16 K^2}{c} \E_{\pi^\ast}\left [\sum_{t=1}^T \1[(i_t\text{ not eliminated})\wedge (v_{t,i_t}>u_{t,i_t}+R)\wedge o_t]\right ].
\end{align*}
\end{theorem}
The second terms on the both RHS's are solely for the stochastic audit model. Under the perfect audit model (\Cref{assump:noiseless}) and adversarial audit model (\Cref{assump:adv noise model}), they are both zero, which gives $\mR_T(\pi^\ast,\mechname)\le K^2$ and $\mB_T(\pi^\ast,\mechname)=\O(\frac{K^3}{c} \log T)$, as claimed in \Cref{thm:deterministic main thm,thm:adv noise model}.
Under the stochastic audit model (\Cref{assump:stoc noise model}) where each audit can be wrong with probability $\epsilon$, these two RHS's become those in \Cref{thm:stoc noise model}. Before proving this theorem, we first give a generalized version of \Cref{lem:number of under-reporting bound}, which controls the expected number of rounds where an agent \emph{under-reports}.

\subsection{Number of Under-Reports (Generalized \Cref{lem:number of under-reporting bound})}\label{sec:number of under-reporting bound formal}
\begin{lemma}[Number of Under-Reports]\label{lem:number of under-reporting bound formal}
Suppose that the generalized auxiliary game is a $(R,\Delta)$-approximation and that \Cref{assumption:min report} holds. Consider the $\pi^\ast$ suggested by \Cref{thm:equivalence of PBE formal}.
For any round $t\in [T]$ and agent $i\in [K]$, let $D_{t,i}$ be the indicator for the event that agent $i$ is alive in round $t$, does not win, but would have won had they reported the maximum report allowed by \Cref{def:auxiliary game formal}:
\begin{equation*}
D_{t,i}:=\1[i\in \mA_t]\1[i_t\ne i]\1[u_{t,i}\geq c]\1[u_{t,i}+R>v_{t,j},\forall j\in \mA_t\setminus\{i\}].
\end{equation*}
We claim that under $\pi^\ast$, $D_{t,i}$ happens rarely (in expectation) throughout the game, namely
\begin{equation*}
\E_{\pi^\ast}\left [\sum_{i=1}^K \sum_{t=1}^T D_{t,i}\right ]\le \frac{2K^3}{c} + \frac{2K^2}{c} \E_{\pi^\ast}\left [\sum_{t=1}^T \1[(i_t\text{ not eliminated})\wedge (v_{t,i_t}>u_{t,i_t}+R)\wedge o_t]\right ],
\end{equation*}
where we recall that $o_t$ is the indicator for whether an audit is requested in round $t$.
\end{lemma}
\begin{proof}
Let $\mH_t\sim \pi^\ast$ be the abbreviation for ``sampling round-$t$ history $\mH_t$ from $\pi^\ast$ under mechanism \mechname'', and let $\tilde \mH_t$ be the corresponding simplified history of such a $\mH_t$. Since the definition of $D_{t,i}$ only involves the set of alive agents $\mA_t$, the allocation $i_t$, the reports $\bm u_t$, and the true values $\bm v_t$, it remains a valid (measurable) indicator in the auxiliary game (\Cref{def:auxiliary game formal}). Therefore, we can rewrite the LHS as
\begin{equation}\label{eq:sum D original game}
\E_{\pi^\ast}\left [\sum_{i=1}^K \sum_{t=1}^T D_{t,i}\right ]=\sum_{t=1}^T \E_{\mH_t\sim \pi^\ast} \left [\E_{\pi_t^\ast}\left [\sum_{i=1}^K D_{t,i}\middle \vert \mH_t\right ]\right ]=\sum_{t=1}^T \E_{\mH_t\sim \pi^\ast} \left [\E_{\pi_t^{\ast,r}}\left [\sum_{i=1}^K D_{t,i}\middle \vert \tilde \mH_t\right ]\right ].
\end{equation}

Utilizing agents' incentives in the auxiliary game, we aim to argue that under-reporting is nonprofitable and thus $D_{t,i}$ happens rarely. To do so, analogous to the no-over-report strategy in the proof of \Cref{thm:equivalence of PBE formal} (see Item~\ref{item:coupling of over-reports} therein), we construct \emph{no-under-report} version of $\pi_{t,i}^{\ast,r}$. Specifically, for any round-$t$ simplified history $\tilde \mH_t\in \tilde H_t$ and alive agent $i\in \mA_t$, define a no-under-report strategy $\pi_{t,i}^{r}$ via a coupling with $\pi_{t,i}^{\ast,r}$:
\begin{equation*}
v_{t,i}:=\max(v_{t,i}^\ast,u_{t,i}+R),\quad \text{where }v_{t,i}^\ast\sim \pi_{t,i}^{\ast,r}(u_{t,i};\tilde \mH_{t}),\quad \forall t\in [T],\tilde \mH_t\in \tilde H_t,i\in \mA_t.
\end{equation*}
Since $\pi^{\ast,r}$ is a PBE in the auxiliary game, agent-$i$'s unilateral deviation to $\pi_{t,i}^r$ must be unprofitable, i.e.,
\begin{equation*}
0\leq \tilde V_i^{\pi^{\ast,r}}(\tilde \mH_{t}) - \tilde V_i^{(\pi_{t,i}^r\circ \pi_{t,-i}^{\ast,r})\diamond_t \pi^{\ast,r}}(\tilde \mH_{t}),
\end{equation*}
where recall that $\pi\diamond_t\pi'$ means following strategy $\pi$ up to round $t$ and following strategy $\pi'$ afterward.

In the auxiliary game, the only case where $\pi_{t,i}^{\ast,r}$ and $\pi_{t,i}^r$ make a difference is when $D_{t,i}=1$.
Suppose that indeed $D_{t,i}=1$. When unilaterally deviating to $\pi_{t,i}^r$, agent $i$ reports $v_{t,i}=u_{t,i}+R$, wins in round $t$ (thus achieving utility $u_{t,i}$), stays alive (according to \Cref{item:no over-report unless p>=1 formal} of \Cref{def:auxiliary game formal}), and faces a next-round history of $\tilde\mH_{t+1}^{(0)}:=(t+1,\mA_t)$. Therefore, the unilateral deviation yields an auxiliary-game utility of
\begin{equation*}
u_{t,i} + \tilde V_i^{{\pi}^{\ast,r}}(\tilde\mH_{t+1}^{(0)}).
\end{equation*}

On the other hand, sticking to the PBE strategy $\pi_{t,i}^{\ast,r}$ means that agent $i$ reports $v_{t,i}^\ast<u_{t,i}+R$ and loses round $t$. There are two possibilities for the next-round simplified history: either the winner $i_{t}\ne i$ is caught lying and consequently eliminated---resulting in $\tilde\mH_{t+1}^{(i_t)}:=(t+1,\mA_t\setminus\{i_{t}\})$---or the winner stays alive and gives the same $\tilde\mH_{t+1}^{(0)}=(t+1,\mA_t)$. Thus sticking to $\pi_{t,i}^{\ast,r}$ yields an auxiliary-game utility of
\begin{align*}
\tilde V_i^{{\pi}^{\ast,r}}(\tilde\mH_{t+1}^{(0)}) + \1[i_{t}\text{ eliminated}] \paren{ \tilde V_i^{{\pi}^{\ast,r}}(\tilde\mH_{t+1}^{(i_t)}) - \tilde V_i^{{\pi}^{\ast,r}}(\tilde\mH_{t+1}^{(0)}) }.
\end{align*}

As $\pi^{\ast,r}$ is an auxiliary-game PBE and these two strategies give identical outcomes when $D_{t,i}=0$,
\begin{align*}
0&\leq \tilde V_i^{\pi^{\ast,r}}(\tilde \mH_{t}) - \tilde V_i^{(\pi_{t,i}^r\circ \pi_{t,-i}^{\ast,r})\diamond_t \pi^{\ast,r}}(\tilde \mH_{t})\\
&= \E_{\pi_{t,i}^{\ast,r}} \left [u_{t,i}\1[i_t=i]+\tilde V^{\pi^{\ast,r}}(\tilde \mH_{t+1}) \middle \vert \tilde \mH_t,D_{t,i}=1\right ] \E_{\pi^{\ast,r}}\left [D_{t,i}=1\middle \vert \tilde \mH_{t}\right ]\\
&\quad - \E_{\pi_{t,i}^r\circ \pi_{t,-i}^{\ast,r}} \left [u_{t,i}\1[i_t=i]+\tilde V^{\pi^{\ast,r}}(\tilde \mH_{t+1}) \middle \vert \tilde \mH_t,D_{t,i}=1\right ] \E_{\pi^{\ast,r}}\left [D_{t,i}=1\middle \vert \tilde \mH_{t}\right ] \\
&=\E_{\pi^{\ast,r}}\left [D_{t,i}=1\middle \vert \tilde \mH_{t}\right ] \Bigg (\tilde V_i^{{\pi}^{\ast,r}}(\tilde\mH_{t+1}^{(0)}) + \E\left [\1[i_{t}\text{ eliminated}] \paren{ \tilde V_i^{{\pi}^{\ast,r}}(\tilde\mH_{t+1}^{(i_t)}) - \tilde V_i^{{\pi}^{\ast,r}}(\tilde\mH_{t+1}^{(0)}) }\right ]\\
&\quad \qquad \qquad \qquad \qquad - \E\left [u_{t,i}\middle \vert \tilde \mH_t,D_{t,i}=1\right ] - \tilde V_i^{{\pi}^{\ast,r}}(\tilde\mH_{t+1}^{(0)})\Bigg ).
\end{align*}
Canceling out the two $\tilde V_i^{{\pi}^{\ast,r}}(\tilde\mH_{t+1}^{(0)})$'s, using the fact that $D_{t,i}=1$ infers $u_{t,i}\ge c$, and summing up for all $i\in \mA_t$, we have the following inequality which formalizes the arguments in \Cref{lem:number of under-reporting bound}:
\begin{align}
\E\sqb{\1[i_t \text{ eliminated}]\sum_{i\in \mA_t,D_{t,i}=1} \paren{   \tilde V_i^{{\pi}^{\ast,r}}(\tilde\mH_{t+1}^{(i_t)}) - \tilde V_i^{{\pi}^{\ast,r}}(\tilde\mH_{t+1}^{(0)}) }\middle \vert \tilde \mH_t}\ge c\E\sqb{\sum_{i\in\mA_t}D_{t,i}\middle \vert \tilde \mH_t}.\label{eq:summed_PBE_equations}
\end{align}
This inequality roughly says that, every time $D_{t,i}=1$, the expected increase in agent $i$'s V-function---due to a potential elimination of the winner $i_t$---must compensate the immediate utility loss of $u_{t,i}\ge c$.

It only remains to upper bound the LHS of \Cref{eq:summed_PBE_equations}.
Fix a realization of utilities $\bm u_t$ and reports $\bm v_t$ such that the winner $i_t$ was eliminated. Let $S:=\{i\in\mA_t\mid D_{t,i}=1\}$. Then we \emph{i)} lower bound $\sum_{i\in S} \tilde V_i^{{\pi}^{\ast,r}}(\tilde\mH_{t+1}^{(0)})$, and \emph{ii)} upper bound $\sum_{i\in S} \tilde V_i^{{\pi}^{\ast,r}}(\tilde\mH_{t+1}^{(i_t)})$. From the lower bound for alive agents in \Cref{lem:V lower and upper bound formal},
\begin{equation}\label{eq:lower_bound_sum_values}
\sum_{i\in S} \tilde V_i^{{\pi}^{\ast,r}}(\tilde\mH_{t+1}^{(0)}) \geq (T-t)\sum_{i\in S}\mu_{t+1,i}^{(0)} - \lvert S\rvert \cdot K = (T-t)\sum_{i\in S}\mu_{t,i} - \lvert S\rvert \cdot K,
\end{equation}
where $\mu_{t+1,i}^{(0)}$ denotes the ex-ante first-best utility of agent $i$ in $\tilde\mH_{t+1}^{(0)}$, and the equality $\mu_{t+1,i}^{(0)}=\mu_{t,i}$ utilizes the fact that the set of alive agents in $\tilde\mH_{t+1}^{(0)}$ is still $\mA_t$. On the other hand, we have
\begin{align}
\sum_{i\in S} \tilde V_i^{{\pi}^{\ast,r}}(\tilde\mH_{t+1}^{(i_t)})&=\sum_{i\in \mA_{t+1}}\tilde V_i^{{\pi}^{\ast,r}}(\tilde\mH_{t+1}^{(i_t)})-\sum_{i\in \mA_{t+1}\setminus S} \tilde V_i^{{\pi}^{\ast,r}}(\tilde\mH_{t+1}^{(i_t)})\\
&\overset{(a)}{\le} \left ((T-t) \sum_{i\in\mA_{t+1}}\mu_{t+1,i}^{(i_t)} + K^2 \right ) - \left ((T-t)\sum_{i\in\mA_{t+1}\setminus S}\mu_{t+1,i}^{(i_t)} -|\mA_{t+1}\setminus S|\cdot K\right ) \\
&= (T-t)\sum_{i\in S}\mu_{t+1,i}^{(i_t)} + \Big (K+ (|\mA_t|-|S|-1)\Big )\cdot K,\label{eq:new_upper_bound_sum_S}
\end{align}
where (a) applied the intermediate technical result \Cref{eq:upper_bound_max_welfare} (proved in \Cref{lem:V lower and upper bound formal}) to the first sum, and the lower bound part of \Cref{lem:V lower and upper bound formal} to the second sum. Here, $\mu_{t+1,i}^{(i_t)}$ denotes the ex-ante first-best utility of agent $i$ in $\tilde\mH_{t+1}^{(i_t)}$, which can differ from $\mu_{t,i}$ only if agent $i_t$ is the winner; formally we have
\begin{align*}
\sum_{i\in S}\mu_{t+1,i}^{(i_t)} 
&= \E_{\bm u\sim \mV}\sqb{\max_{i\in S}u_i \cdot \1\sqb{\max_{i\in S}u_i> \max_{i\in \mA_t\setminus(S\cup\{i_t\})}u_i}}\\
&= \E_{\bm u\sim \mV}\sqb{\max_{i\in S}u_i \cdot \paren{ \1\sqb{\max_{i\in S}u_i> \max_{i\in \mA_t\setminus S}u_i} + \1\sqb{u_{i_t}>\max_{i\in S}u_i> \max_{i\in \mA_t\setminus S}u_i }}} \\
&\leq \sum_{i\in S}\mu_{t,i} + \E_{\bm u\sim \mV}\sqb{u_{i_t} \cdot \1\sqb{u_{i_t}>\max_{i\in S}u_i> \max_{i\in \mA_t\setminus (S\cup\{i_t\})}u_i }} \leq \mu_{t,i_t}+\sum_{i\in S}\mu_{t,i}.
\end{align*}
Plugging it back into \Cref{eq:new_upper_bound_sum_S} and combining it with \Cref{eq:lower_bound_sum_values} gives
\begin{align*}
\sum_{i\in S} \tilde V_i^{{\pi}^{\ast,r}}(\tilde\mH_{t+1}^{(i_t)})-\sum_{i\in S} \tilde V_i^{{\pi}^{\ast,r}}(\tilde\mH_{t+1}^{(0)})&\le (T-t)\left (\mu_{t,i_t}+\sum_{i\in S}\mu_{t,i}\right ) + \Big (K+ (|\mA_t|-|S|-1)\Big )\cdot K\\
&\quad - \left ((T-t)\sum_{i\in S}\mu_{t,i} - \lvert S\rvert \cdot K\right )\\
&=(T-t) \mu_{t,i_t} + \Big (K+\lvert \mA_t\rvert-1\Big )\cdot K.
\end{align*}

Finally, due to the lower bound for eliminated agents in \Cref{lem:V lower and upper bound formal}, the elimination of $i_t$ (i.e., $i_t\not \in \mA_{t+1}$) implies $(T-t)\mu_{t,i_t}\le K$. Henceforth the RHS is simply bounded by $2K^2$. Plugging back into \Cref{eq:summed_PBE_equations},
\begin{equation*}
\E_{\pi_t^{\ast,r}}\left [\sum_{i=1}^K D_{t,i}\middle \vert \tilde \mH_t\right ]\le \frac{2K^2}{c} \E_{\pi_t^{\ast,r}}\left [\1[i_t\text{ eliminated}]\middle \vert \tilde \mH_t\right ]\overset{(a)}{=}\frac{2K^2}{c} \E_{\pi_t^{\ast,r}}\left [\1[v_{t,i_t}>u_{t,i_t}+R]\middle \vert \tilde \mH_t\right ],
\end{equation*}
where (a) uses \Cref{item:eliminate immediately formal} of \Cref{def:auxiliary game formal}. Moving back to the original game using \Cref{eq:sum D original game}:
\begin{align*}
&\quad \E_{\pi^\ast}\left [\sum_{i=1}^K \sum_{t=1}^T D_{t,i}\right ]=\sum_{t=1}^T \E_{\mH_t\sim \pi^\ast} \left [\E_{\pi_t^\ast}\left [\sum_{i=1}^K D_{t,i}\middle \vert \mH_t\right ]\right ]=\sum_{t=1}^T \E_{\mH_t\sim \pi^\ast} \left [\E_{\pi_t^{\ast,r}}\left [\sum_{i=1}^K D_{t,i}\middle \vert \tilde \mH_t\right ]\right ]\\
&\le \frac{2K^2}{c} \sum_{t=1}^T \E_{\mH_t\sim \pi^\ast} \left [\E_{\pi^{\ast,r}}\left [\1[v_{t,i_t}>u_{t,i_t}+R]\middle \vert \tilde \mH_t\right ]\right ]\overset{(a)}{=} \frac{2K^2}{c} \sum_{t=1}^T \E_{\mH_t\sim \pi^\ast} \left [\E_{\pi^{\ast}} \left [\1[v_{t,i_t}>u_{t,i_t}+R]\middle \vert \mH_t\right ]\right ]\\
&\overset{(b)}{\le} \frac{2K^2}{c}\E_{\pi^\ast} \left [\sum_{t=1}^T \left (\1[i_t\text{ eliminated}]+\1[(i_t\text{ not eliminated})\wedge (v_{t,i_t}>u_{t,i_t}+R)\wedge o_t]\right )\right ]\\
&\overset{(c)}{\le} \frac{2K^3}{c} + \frac{2K^2}{c} \E_{\pi^\ast}\left [\sum_{t=1}^T \1[(i_t\text{ not eliminated})\wedge (v_{t,i_t}>u_{t,i_t}+R)\wedge o_t]\right ].
\end{align*}
In (a), we used the facts that $\pi^\ast=(\pi^{\ast,r},\pi^{\ast,f})$ and $\tilde \mH_t$ is the simplified history corresponding to $\mH_t$. In (b), we used the following facts: from \Cref{item:no over-report unless p>=1 formal} of \Cref{def:auxiliary game formal}, $v_{t,i_t}>u_{t, i_t}+R$ implies $p_{t,i_t}=\frac{1+K^2}{(T-t) cq_{t,i_t}}\ge 1$; due to the well-behaved flagging strategy $\pi^{\ast,f}$, any $\mH_t\sim \pi^\ast$ satisfies $\hat q_{t,j}\le 4 q_{t,j}$ for all $j\in [K]$; and these together imply $\hat p_{t,i_t}=\min(\frac{4(1+K^2)}{(T-t) c \hat q_{t,i_t}},1)=1$, hence $o_t=1$ (according to \Cref{line:check prob} in \Cref{alg:mechanism}), i.e., agent $i_t$ is always audited. In (c), we used the fact that $\1[i_t\text{ eliminated}]=\lvert \mA_t\rvert-\lvert \mA_{t+1}\rvert$. This finishes the proof.
\end{proof}

\subsection{Controlling Regret and Number of Audits (\Cref{thm:regret and audits formal})}

\begin{proof}[Proof of \Cref{thm:regret and audits formal}]
For the regret $\mR_T$, using \Cref{lem:V lower and upper bound formal}, we have for all $i\in [K]$:
\begin{equation*}
V_i^{\pi^\ast}(\mH_1)=\tilde V_i^{\pi^{\ast,r}}(\tilde \mH_1)-\left (\tilde V_i^{\pi^{\ast,r}}(\tilde \mH_1)-V_i^{\pi^\ast}(\mH_1)\right )\ge T \mu_{1,i}-K-\left (\tilde V_i^{\pi^{\ast,r}}(\tilde \mH_1)-V_i^{\pi^\ast}(\mH_1)\right ).
\end{equation*}
Recall the definition of $\mu_{1,i}$ from \Cref{eq:fair share formal} and the fact that $\mA_1=[K]$, we have
\begin{align*}
\sum_{i=1}^K \mu_{1,i}&=\E_{\bm u\sim \mV} \left [\sum_{i=1}^K u_i \cdot \1[i\in \mA_1] \1[u_i\ge c]\1[u_i>u_j,\forall j\in \mA_1\setminus \{i\}]\right ]\\
&=\E_{\bm u\sim \mV} \left [\max_{i\in [K]} u_i \cdot \1\left [\max_{i\in [K]} u_i\ge c\right ]\right ]\overset{(a)}{=}\E_{\bm u\sim \mV} \left [\max_{i\in [K]} u_i\right ],
\end{align*}
where (a) is due to the existence of some $i_0\in [K]$ such that $u_{i_0}\ge c$ a.s. (from \Cref{assumption:min report}). Therefore, $\sum_{i=1}^K \mu_{1,i}$ is exactly the optimal per-round social welfare, and consequently the regret is bounded as
\begin{equation*}
\mR_T(\pi^\ast,\mechname)= T \E_{\bm u\sim \mV} \left [\max_{i\in [K]} u_i\right ] - \sum_{i=1}^K V_i^{\pi^\ast}(\mH_1) \le K^2 + \sum_{i=1}^K \left (\tilde V_i^{\pi^{\ast,r}}(\tilde \mH_1)-V_i^{\pi^\ast}(\mH_1)\right ).
\end{equation*}

We now control $\mB_T$.
To formalize the {estimation phase} and {incentive-compatible phase} in \Cref{sec:sketch varying-p Part II}, we formally define the \emph{estimation epochs}: For each $\ell=1,2,\ldots,K$, let $s_\ell$ be the round where we eliminate the $(\ell-1)$-th agent and reset all empirical estimations $\hat q_{t,i}$'s to $0$ (\Cref{line:elimination} in \Cref{alg:mechanism}), namely
\begin{equation*}
s_\ell=\min \Big \{t\in[T]\mid \lvert \mA_t\rvert =T-i+1\Big \}\cup\{+\infty\}, \quad \ell\in[K],
\end{equation*}
The $\ell$-th estimation epoch $\mE_\ell$ is all the rounds $\mE_\ell:=\{s_\ell,s_\ell+1,\ldots,s_{\ell+1}-1\}$. For each epoch $\ell\in [L]$ and alive agent $i\in \mA_{s_\ell}$, we decompose the rounds in which agent $i$ wins---namely $\mE_{\ell,i}:=\{t\in \mE_\ell\mid i_t=i\}$---into two phases: The \emph{estimation} phase $\mE_{\ell,i}^E$ contains the rounds where $\hat q_{t,i}=0$, and the remaining ones forms the \emph{incentive-compatible} phase $\mE_{\ell,i}^I$.
To simplify notations, we denote by $\hat q_{\ell,i}$ the final estimation $\hat q_{t,i}$ in epoch $\ell$ (which no agent answer with $f_{t,i}=1$; recall \Cref{line:flagging} of \Cref{alg:mechanism}). This $\hat q_{\ell,i}$ is then fixed throughout agent $i$'s incentive-compatible phase $\mE_{\ell,i}^I$. In the case where there is no incentive-compatible phase during the $\ell$-th epoch (i.e., agent $i$'s estimation hasn't been finalized when someone gets eliminated), we write $\hat q_{\ell,i}=0$.

\paragraph{Estimation Phase.}
Same as \Cref{lem:number of under-reporting bound formal}, let $D_{t,i}=\1[i\in \mA_t]\1[i_t\ne i]\1[u_{t,i}\geq c]\1[u_{t,i}+R>v_{t,j},\forall j\in \mA_t\setminus\{i\}]$. When agent $i$ wins in round $t\in \mE_{\ell,i}^E$ during their estimation phase, at least one of these holds:
\begin{itemize}
\item Agent $i$ over-reports $v_{t,i}>u_{t,i}+R$. Thus $i$ is eliminated and $s_{\ell+1}=t+1$. This can only happen once.
\item It's the first-best for agent $i$ to win. Let $F_{t,i}$ be its indicator, i.e., $F_{t,i}:=\1[u_{t,i}>\max_{j\in \mA_{s_\ell}\setminus \{i\}} u_{t,j}]$.
\item It's the first-best for some other agent $j\in \mA_{s_\ell}\setminus \{i\}$ to win, but they under-reported (so $D_{t,j}=1$).
\end{itemize}
Therefore, we may upper bound the number of wins of agent $i$ from $s_\ell$ and until some $\tau\ge s_\ell$ as
\begin{align}
\Big \lvert \set{ t\in[s_\ell,\tau]\mid i_{t}=i }\Big \rvert
&\leq 1+\sum_{t=s_\ell}^{\tau} F_{t,i} + \sum_{t\in \mE_\ell} \1[i_{t}=i] \cdot \bigvee_{j\in\mA_{s_\ell}\setminus\{i\}} D_{t,j}.\label{eq:upper_bound_nb_wins}
\end{align}
In words, the number of rounds that agent $i$ actually wins (LHS) is at most its first-best winning rounds (those $t$'s with $F_{t,i}=1$) plus those other agents miss (those $t$'s where any other $j\ne i$ has $D_{t,j}=1$).

On the other hand, we lower bound the same quantity in another way. In a round $t\in \mE_\ell$ where agent $i$ would have won if \emph{all} agents reported truthfully (so $F_{t,i}=1$), one of these scenarios holds
\begin{itemize}
\item Agent $i$ indeed wins, i.e., $i_t=i$.
\item Agent $i$ does not win because themselves are under-reporting, i.e., we have $D_{t,i}=1$.
\item Some other agent $i_t$ won by over-reporting. This means $s_{\ell+1}=t+1$ and can happen at most once.
\end{itemize}
Therefore, for any $\tau\ge s_\ell$, we also have a lower bound on the number of wins of agent $i$:
\begin{equation}\label{eq:lower_bound_nb_wins}
\Big \lvert \set{ t\in[s_\ell,\tau]\mid i_{t}=i }\Big \rvert \geq \sum_{t=s_\ell}^{\tau} F_{t,i} - \sum_{t\in \mE_\ell}D_{t,i} - 1.
\end{equation}
In words, the number of rounds that agent $i$ actually wins (LHS) is at least their first-best winning rounds (those $t$'s with $F_{t,i}=1$) minus the rounds themselves miss (those $t$'s with $D_{t,i}=1$).

Comparing these two inequalities with the well-behaved flagging strategy $\pi^{\ast,f}$ (\Cref{def:well-behaved flagging}), we can control the end of $\mE_{\ell,i}^E$ (the estimation phase of agent $i$ in epoch $\ell$) as follows:
\begin{claim}\label{claim:end of estimation phase}
For $\ell\in [K]$ and $i\in \mA_{s_\ell}$, let ${\mathfrak t}_{\ell,i}$ be the first round $t\in \mE_\ell$ such that agent $i$ wins for at least $\max(4+4\sum_{t\in \mE_\ell} \1[i_t=i] \cdot \bigvee_{j\in \mA_{s_\ell}\setminus \{i\}} D_{t,j}, 3+3 \sum_{t\in \mE_\ell} D_{t,i})$ times during epoch $\mE_\ell$ and that
\begin{equation}\label{eq:constraint_F}
\frac{1}{t-s_\ell+1}\sum_{\tau=s_\ell}^t F_{t,i} \in\sqb{\frac{q_{s_\ell,i}}{3}, 3q_{s_\ell,i}}.
\end{equation}
If such $t\in \mE_\ell$ does not exist, we write ${\mathfrak t}_{\ell,i}:=s_{\ell+1}-1$. Then $\mE_{\ell,i}^E$ must end \emph{on or before} ${\mathfrak t}_{\ell,i}$.
\end{claim}
Intuitively, the definition of $\mathfrak t_{\ell,i}$ roughly means the estimation $\hat q_{\mathfrak t_{\ell,i},i}$ generated using outcomes of rounds $s_\ell,s_\ell+1,\ldots,\mathfrak t_{\ell,i}$ is close to the actual $q_{i}$. Therefore, no agent will flag this generated estimation (recall the flagging component from \Cref{line:flagging} in \Cref{alg:mechanism} and the well-behaved flagging strategy from \Cref{def:well-behaved flagging}), which means $\mE_{\ell,i}^E$ must end.
The formal proof of \Cref{claim:end of estimation phase} is presented immediately after this proof.

Carefully analyzing ${\mathfrak t}_{\ell,i}$ and applying concentration inequalities, we control the number of audits spent on agent $i$ during $\mE_{\ell,i}^E$ using the number of under-reports. The proof of \Cref{claim:audits in estimation phase} succeeds that of \Cref{claim:end of estimation phase}:
\begin{claim}\label{claim:audits in estimation phase}
The number of audits that we spend on agent $i$ during epoch $\ell$, namely $\sum_{t\in \mE_{\ell,i}^E} o_t$, satisfies
\begin{equation*}
\E\left [\sum_{i\in\mA_{s_\ell}} \sum_{t\in \mE_{\ell,i}^E} o_t\middle \vert \mH_{s_\ell}\right ]\le 589 K+7 \E\left [\sum_{i\in\mA_{s_\ell}} \sum_{t\in \mE_\ell} D_{t,i}\middle \vert \mH_{s_\ell}\right ],\quad \forall \ell\in [L].
\end{equation*}
\end{claim}

Summing up $\ell\in [K]$ and combining with the bound on $\E[\sum_{t=1}^T \sum_{i\in \mA_t} D_{t,i}]$ from \Cref{lem:number of under-reporting bound formal},
\begin{align*}
&\quad \E\sqb{\sum_{t=1}^T o_t \1[\hat q_{t,i_t}=0] }=\E\sqb{\sum_{\ell=1}^K \sum_{i\in\mA_{s_\ell}} \sum_{t\in \mE_{\ell,i}^E} o_t}=589 K^2+7 \E\left [\sum_{t=1}^T \sum_{i\in \mA_t} D_{t,i}\right ]\\
&=\O\left (K^2+\frac{K^3}{c}\right ) + \frac{14 K^2}{c} \E_{\pi^\ast}\left [\sum_{t=1}^T \1[(i_t\text{ not eliminated})\wedge (v_{t,i_t}>u_{t,i_t}+R)\wedge o_t]\right ].
\end{align*}

\paragraph{Incentive-Compatible Phase.} We now turn to the incentive-compatible phase, where an alive agent $i\in \mA_{s_\ell}$'s winning probability has been estimated as $\hat q_{\ell,i}\in [\frac 14 q_{\ell,i},4q_{\ell,i}]$ (thanks to the well-behaved flagging strategy $\pi^{\ast,f}$ in \Cref{def:well-behaved flagging}). Consider the number of audits spent on agent $i$'s incentive-compatible phase $\mE_{\ell,i}^I$:
\begin{equation*}
\E\Bigg [\sum_{t\in \mE_{\ell,i}^I} o_t\Bigg ]=\E\sqb{\sum_{t\in \mE_\ell}o_t \1[i_t=i] \1[\hat q_{t,i}>0] } = \E\sqb{\sum_{t\in \mE_\ell}\hat p_{i,t} \1[i_t=i] \1[\hat q_{t,i}>0]},
\end{equation*}
where we recall that $\hat p_{t,i}=\frac{4(1+K^2)}{(T-t) c \hat q_{t,i}}$ is the audit probability used by \mechname. Similar to \Cref{eq:upper_bound_nb_wins}, for each round $t\in \mE_\ell$, we decompose the term $\1[i_t=i]$ into three cases: \textit{i)} agent $i$ is the first-best agent (hence $F_{t,i}=1$), \textit{ii)} agent $i$ themselves over-reported some $v_{t,i}>u_{t,i}+R$, and \textit{iii)} another agent $j\in \mA_{s_\ell}\setminus \{i\}$ is the first-best but chooses to under-report (hence $D_{t,j}=1$). Therefore,
\begin{align*}
\E\Bigg [\sum_{t\in \mE_{\ell,i}^I} o_t\Bigg ]&\le \E\left [\sum_{t\in \mE_\ell} \frac{4(1+K^2)}{(T-t) c \hat q_{t,i}} \left (F_{t,i} + \1[i_t=i]\cdot \Big (\1[v_{t,i}>u_{t,i}+R]+\bigvee_{j\in \mA_{s_\ell}\setminus \{i\}} D_{t,j}\Big )\right ) \1[\hat q_{t,i}>0]\right ]\\
&\le \E\left [\sum_{t\in \mE_\ell} \frac{4(1+K^2)}{(T-t) c \hat q_{t,i}} q_{t,i} \1[\hat q_{t,i}>0]\right ]+\E\left [1+\sum_{t\in \mE_\ell} \1[i_t=i] \bigvee_{j\in \mA_{s_\ell}\setminus \{i\}} D_{t,j}\right ],
\end{align*}
where the second inequality uses the following facts: by definition, the first-best indicators $(F_{t,i})_{t\in \mE_\ell}$ form an i.i.d. sequence of $\text{Ber}(q_{\ell,i})$; $\hat p_{t,i}$ is upper bounded by 1; and from \Cref{item:no mark up unless p>=1} of \Cref{def:auxiliary game formal} (the auxiliary game), agent $i$ can only win by reporting $v_{t,i}>u_{t,i}+R$ once, after which they are eliminated a.s.

Due to the well-behaved flagging strategy ${\pi}^{\ast,f}$, $\1[\hat q_{t,i}>0]$ infers $\hat q_{t,i}\geq q_{t,i}/4$. Thus the first term on the RHS is further upper bounded as
\begin{equation*}
\E\left [\sum_{t\in \mE_\ell} \frac{4(1+K^2)}{(T-t) c \hat q_{t,i}} q_{t,i} \1[\hat q_{t,i}>0]\right ]\le \frac{16(1+K^2)}{c} \sum_{t\in \mE_\ell} \frac{1}{T-t}.
\end{equation*}
For the second term, using an intermediate result proved in \Cref{claim:audits in estimation phase}, namely \Cref{eq:useful_inequality}, we have
\begin{equation*}
\sum_{i\in \mA_{s_\ell}} \left (1+\sum_{t\in \mE_\ell}\1[i_t=i]\cdot \bigvee_{j\in \mA_{s_\ell} \setminus \{i\}} D_{t,j} \right )\le K+\sum_{t\in \mE_\ell} \sum_{i\in \mA_{s_\ell}} D_{t,i}.
\end{equation*}
Therefore, summing over all epochs $\ell\in[K]$ and agents $i\in\mA_{s_\ell}$ gives
\begin{align*}
&\quad \E\Bigg [\sum_{t=1}^T o_t \1[\hat q_{t,i_t}>0]\Bigg ]=\E\Bigg [\sum_{\ell=1}^K \sum_{t\in \mE_{\ell,i}^I} o_t\Bigg ]\\
&\le \frac{32K^3}{c} \E\sqb{\sum_{t=1}^T \frac{1}{T-t} } + \E\sqb{\sum_{\ell=1}^K \left (K+\sum_{t\in \mE_\ell} \sum_{i\in \mA_{s_\ell}} D_{t,i}\right )} =\O\left (\frac{K^3}{c}\log T\right )+K^2+\E\left [\sum_{i=1}^K \sum_{t=1}^T D_{t,i}\right ]\\
&\le \O\left (K^2 + \frac{K^3}{c}\log T\right )+\frac{2 K^2}{c} \E_{\pi^\ast}\left [\sum_{t=1}^T \1[(i_t\text{ not eliminated})\wedge (v_{t,i_t}>u_{t,i_t}+R)\wedge o_t]\right ],
\end{align*}
where the last inequality again uses \Cref{lem:number of under-reporting bound formal}.
Summing up the audit bounds for the estimation phase and the incentive-compatible phase gives the claimed bound on $\mB_T(\pi^\ast,\mechname)$.
\end{proof}

\begin{proof}[Proof of \Cref{claim:end of estimation phase}]
For notational simplicity, let
\begin{equation}\label{eq:M and N}
M_{\ell,i}:=1+\sum_{t\in \mE_\ell} \1[i_t=i]\cdot \bigvee_{j\in \mA_{s_\ell}\setminus \{i\}} D_{t,j},\quad N_{\ell,i}:=1+\sum_{t\in \mE_\ell} D_{t,i}.
\end{equation}
Then the definition of $\mathfrak t_{\ell,i}$ is such that agent $i$ has already won for at least $\max(4M_{\ell,i},3N_{\ell,i})$ times during $\mE_\ell$ and that \Cref{eq:constraint_F} holds.
From \Cref{line:estimate winning prob} of \Cref{alg:mechanism}, the estimation phase for agent $i$ ends in some round $t \in \mE_\ell$ once no agent flags the estimated $\hat q_{t,i}$. Recalling the well-behaved flagging policy ${\pi}^{\ast,f}$ from \Cref{def:well-behaved flagging}, this corresponds to the first $t\in \mE_\ell$ such that
\begin{equation}\label{eq:condition of ending estimation}
\hat q_{t,i} = \frac{ \abs{\set{ \tau\in[s_\ell,t]\mid i_{\tau}=i }} }{t-s_\ell+1} \in \left [ \frac{q_{t,i}}{4}, 4 q_{t,i}\right ]=\left [ \frac{q_{s_\ell,i}}{4}, 4 q_{s_\ell,i}\right ].
\end{equation}

We now show this happens before $\mathfrak t_{\ell,i}$.
The claim is immediate if ${\mathfrak t}_{\ell,i}=s_{\ell+1}-1$.
Otherwise, we show ${\mathfrak t}_{\ell,i}$ satisfies \Cref{eq:condition of ending estimation}: Since agent $i$ won at least $4M_{\ell,i}$ times by round $\mathfrak t_{\ell,i}$, we have
\begin{equation*}
\sum_{\tau=s_\ell}^{{\mathfrak t}_{\ell,i}} F_{t,i}\geq \abs{\set{ \tau\in[s_\ell,{\mathfrak t}_{\ell,i}]\mid i_{\tau}=i }}-M_{i,j} \geq \frac{3\abs{\set{ \tau\in[s_\ell,{\mathfrak t}_{\ell,i}]\mid i_{\tau}=i }}}{4},
\end{equation*}
where we used \Cref{eq:upper_bound_nb_wins} in the first inequality. Further using the property of $\mathfrak t_{\ell,i}$ in \Cref{eq:constraint_F},
\begin{equation*}
\frac{ \abs{\set{ \tau\in[s_\ell,{\mathfrak t}_{\ell,i}]\mid i_{\tau}=i }} }{{\mathfrak t}_{\ell,i}-s_\ell+1} \leq \frac{4}{3({\mathfrak t}_{\ell,i}-s_\ell+1)} \sum_{\tau=s_\ell}^{{\mathfrak t}_{\ell,i}} F_{t,i} \leq 4q_{\ell,i}.
\end{equation*}

For the other direction, because agent $i$ won at least $3N_{\ell,i}$ times by round ${\mathfrak t}_{\ell,i}$, we have
\begin{equation*}
    \sum_{\tau=s_\ell}^{{\mathfrak t}_{\ell,i}} F_{t,i} \leq \abs{\set{ \tau\in[s_\ell,{\mathfrak t}_{\ell,i}]\mid i_{\tau}=i }} + N_{\ell,i}\leq \frac{4}{3}\abs{\set{ \tau\in[s_\ell,{\mathfrak t}_{\ell,i}]\mid i_{\tau}=i }},
\end{equation*}
where we used \Cref{eq:lower_bound_nb_wins} in the first inequality.
Again using the property of $\mathfrak t_{\ell,i}$ in \Cref{eq:constraint_F},
\begin{equation*}
    \frac{ \abs{\set{ \tau\in[s_\ell,{\mathfrak t}_{\ell,i}]\mid i_{\tau}=i }} }{{\mathfrak t}_{\ell,i}-s_\ell+1} \geq \frac{3}{4({\mathfrak t}_{\ell,i}-s_\ell+1)} \sum_{\tau=s_\ell}^{{\mathfrak t}_{\ell,i}} F_{t,i} \geq \frac{q_{\ell,i}}{4}.
\end{equation*}
Hence $\mathfrak t_{\ell,i}$ satisfies \Cref{eq:condition of ending estimation}. Thus the estimation phase $\mE_{\ell,i}^E$ ends at or before $\mathfrak t_{\ell,i}$.
\end{proof}

\begin{proof}[Proof of \Cref{claim:audits in estimation phase}]
We now define $\tilde {\mathfrak t}_{\ell,i}$ identical to $\mathfrak t_{\ell,i}$ except for removing \Cref{eq:constraint_F}, i.e., $\tilde {\mathfrak t}_{\ell,i}$ is the first round $t\in \mE_\ell$ when agent $i$ wins for the $\max(4M_{\ell,i}, 3N_{\ell,i})$-th time, where recall from \Cref{eq:M and N}
\begin{equation*}
M_{\ell,i}=1+\sum_{t\in \mE_\ell} \1[i_t=i]\cdot \bigvee_{j\in \mA_{s_\ell}\setminus \{i\}} D_{t,j},\quad N_{\ell,i}=1+\sum_{t\in \mE_\ell} D_{t,i}.
\end{equation*}
If no such $t\in \mE_\ell$ exists, we set $\tilde {\mathfrak t}_{\ell,i}:=s_{\ell+1}-1$ as usual.

If $\tilde {\mathfrak t}_{\ell,i}$ already satisfies \Cref{eq:constraint_F}, then $\mathfrak t_{\ell,i}=\tilde{\mathfrak t}_{\ell,i}$ and we obtain directly from \Cref{claim:end of estimation phase} that
\begin{equation}\label{eq:case_1}
\sum_{t\in \mE_{\ell,i}^E} o_t\leq \Big \lvert \Big \{ \tau\in[s_\ell,{\mathfrak t}_{\ell,i}]\mid i_{\tau}=i \Big \} \Big \rvert=\Big \lvert \Big \{ \tau\in[s_\ell,\tilde {\mathfrak t}_{\ell,i}]\mid i_{\tau}=i \Big \} \Big \rvert = \max(4M_{\ell,i}, 3N_{\ell,i}).
\end{equation}

Otherwise, we have $\tilde{\mathfrak t}_{\ell,i}<{\mathfrak t}_{\ell,i}$. Using the upper bound on the number of wins derived in \Cref{eq:upper_bound_nb_wins},
\begin{align}\label{eq:case_2a}
\sum_{t\in \mE_{\ell,i}^E} o_t\leq \Big \lvert \Big \{ \tau\in[s_\ell,{\mathfrak t}_{\ell,i}]\mid i_{\tau}=i \Big \} \Big \rvert\overset{\text{\Cref{eq:upper_bound_nb_wins}}}\le M_{\ell,i} +\sum_{\tau=s_\ell}^{{\mathfrak t}_{\ell,i}} F_{t,i}  \overset{\text{\Cref{eq:constraint_F}}}\leq   M_{\ell,i} + 3q_{\ell,i}(\mathfrak t_{\ell,i}-s_\ell+1).
\end{align}

We analyze the second term on the RHS of \Cref{eq:case_2a}. Let $k_0$ be the integer such that $2^{k_0}\leq \mathfrak t_{\ell,i}-s_\ell+1 <2^{k_0+1}$ (which is unique). Since $\tilde {\mathfrak t}_{\ell,i}$ does not satisfy \Cref{eq:constraint_F}, it must violate either the lower or the upper bound. If $\tilde{\mathfrak t}_{\ell,i}$ violates the upper bound of \Cref{eq:constraint_F}, then (supposing that $s_\ell+2^{k_0+1}-1\in \mE_l$)
\begin{equation}\label{eq:upper_bound_power_2}
    \frac{1}{ 2^{k_0+1}}\sum_{\tau=s_\ell}^{s_\ell + 2^{k_0+1}-1} F_{t,i} \geq \frac{1}{2(\tilde{\mathfrak t}_{\ell,i} -s_\ell + 1)}\sum_{\tau=s_\ell}^{\tilde{\mathfrak t}_{\ell,i}} F_{t,i} > \frac{3}{2}q_{\ell,i}.
\end{equation}
Otherwise, if $\tilde{\mathfrak t}_{\ell,i}$ violates the lower bound of \Cref{eq:constraint_F}, then
\begin{equation}\label{eq:lower_bound_power_2}
    \frac{1}{ 2^{k_0}}\sum_{\tau=s_\ell}^{s_\ell + 2^{k_0}-1} F_{t,i} < \frac{2}{\tilde{\mathfrak t}_{\ell,i} -s_\ell + 1}\sum_{\tau=s_\ell}^{\tilde{\mathfrak t}_{\ell,i}} F_{t,i} < \frac{2}{3}q_{\ell,i}.
\end{equation}

Now define $k_{\ell,i}$ as the smallest integer such that for all $k\geq k_{\ell,i}$, if $s_\ell+2^k-1\in \mE_\ell$ then
\begin{equation}\label{eq:def_k_il}
    \frac{1}{2^k}\sum_{\tau=s_\ell}^{s_\ell+2^k-1} F_{t,i} \in \sqb{\frac{2}{3}q_{\ell,i}, \frac{3}{2}q_{\ell,i} }.
\end{equation}
\Cref{eq:upper_bound_power_2,eq:lower_bound_power_2} show that $k_0$ violates \Cref{eq:def_k_il} and thus $k_{\ell,i}>k_0$. Hence, furthering \Cref{eq:case_2a} gives $\sum_{t\in \mE_{\ell,i}^E} o_t \leq M_{\ell,i} + 6 q_{\ell,i} 2^{k_{\ell,i}}$.
Together with \Cref{eq:case_1} this shows that in all cases one has
\begin{equation}\label{eq:full_case}
\sum_{t\in \mE_{\ell,i}^E} o_t \leq 1 + 4M_{\ell,i} + 3N_{\ell,i} + 6 q_{\ell,i} 2^{k_{\ell,i}}.
\end{equation}

It only remains to bound $2^{k_{\ell,i}}$, or its expectation $\E[2^{k_{\ell,i}}\mid\mH_{s_\ell}]$.
Note that conditionally on the history $\mH_{s_\ell}$ at the beginning of epoch $\mE_\ell$, the sequence of $F_{t,i}$ is an independent sequence of Bernoulli $\text{Ber}(q_{\ell,i})$. Hence, for any $k\geq 3+\log_2(1/q_{\ell,i})$, Chernoff's bound gives
\begin{equation*}
    \Pr\{k_{\ell,i} \geq k\} \leq \sum_{k'\geq k}\Pr\left \{\frac{1}{2^{k'}}\sum_{t=s_\ell}^{s_\ell+2^{k'}-1} F_{t,i} \notin \sqb{\frac{2}{3}q_{\ell,i},\frac{3}{2}q_{\ell,i}}\right \} \leq 2\sum_{k'\geq k}e^{-2^{k'}q_{\ell,i}/18} \leq 6e^{-2^{k}q_{\ell,i}/18}.
\end{equation*}

As a result,
\begin{equation*}
    \E[2^{k_{\ell,i}}\mid\mH_{s_\ell}] \leq \frac{8}{q_{\ell,i}} + \sum_{k\geq 3+\log_2(1/q_{\ell,i})}2^k \Pr \{k_{\ell,i} = k\} \leq \frac{98}{q_{\ell,i}}.
\end{equation*}

Plugging this, as well as the definition of and $N_{\ell,i}$ in \Cref{eq:M and N}, into \Cref{eq:full_case}, we get
\begin{equation}\label{eq:full_case 2}
\E\Bigg [\sum_{t\in \mE_{\ell,i}^E} o_t\mathrel{\Bigg \vert} \mH_{s_{\ell}}\Bigg ]\le \E\left [1+4 M_{\ell,i}+3\left (1+\sum_{t\in \mE_\ell} D_{t,i}\right )\middle \vert \mH_{s_{\ell}}\right ]+588.
\end{equation}
Further observing that, by the definition of $M_{\ell,i}$ in \Cref{eq:M and N}, we have
\begin{align}\label{eq:useful_inequality}
\sum_{i\in \mA_{s_\ell}} M_{\ell,i} &= \sum_{i\in \mA_{s_\ell}} \left (1+\sum_{t\in \mE_\ell} \1[i_t=i]\cdot \bigvee_{j\in \mA_{s_\ell}\setminus \{i\}} D_{t,j}\right )\\
&\leq K + \sum_{t\in \mE_\ell} \max_{j\in\mA_{s_\ell}} D_{t,j}\paren{ \sum_{i\in \mA_{s_\ell}} \1[i_{\tau}=i] } \\
&\leq K+\sum_{t\in \mE_\ell} \sum_{i\in \mA_{s_\ell}} D_{t,i}.
\end{align}
Therefore, summing up \Cref{eq:full_case 2} for all $i\in \mA_{s_\ell}$ gives
\begin{equation*}
\E\sqb{\sum_{i\in\mA_{s_\ell}} \sum_{t\in \mE_{\ell,i}^E} o_t \middle \vert \mH_{s_\ell}} \leq 589 K + 7 \E\sqb{\sum_{i\in \mA_{s_\ell}}\sum_{t\in \mE_\ell} D_{t,i} \middle \vert \mH_{s_\ell}}.
\end{equation*}
This completes the proof.
\end{proof}

\section{Main Theorems under Different Audit Models}\label{sec:appendix deterministic main theorem}
After establishing the general reduction-based analysis framework in \Cref{sec:appendix varying-p,sec:appendix regret and audits}, we plug in the various audit models in \Cref{assump:noiseless,assump:adv noise model,assump:stoc noise model}. This finishes the proof of \Cref{thm:deterministic main thm,thm:adv noise model,thm:stoc noise model}.

\subsection{Perfect Audit Model (Proof of \Cref{thm:deterministic main thm})}\label{sec:equiv between actual and no-flagging formal}
\begin{lemma}[Correspondence between Auxiliary and Original Games]\label{lem:equiv between actual and no-flagging formal}
Under \Cref{assump:noiseless}, consider the auxiliary game with tolerance $0$ (which reduces to \Cref{def:auxiliary game}). For any PBE $\pi^{\ast,r}$ therein, let $\pi^\ast=(\pi^{\ast,r},\pi^{\ast,f})$ is the corresponding original-game strategy.
Then $\Delta(\pi^{\ast,r})=0$, i.e., $V_i^{\pi^\ast}(\mH_1)=\tilde V_i^{\pi^{\ast,r}}(\tilde \mH_1)$.
\end{lemma}
\begin{proof}
We prove the following claim by backward induction: for any round $t\in [T]$ and any history $\mH_t\in H_t$ such that $\hat q_{t,j}\le 4q_{t,j}$ for all $j\in \mA_t$, we have
\begin{equation}\label{eq:equiv between actual and no-flagging}
V_i^{\pi^\ast}(\mH_t)=\tilde V_i^{\pi^{\ast,r}}(\tilde \mH_t),\quad \forall i\in \mA_t.
\end{equation}
We first justify that backward induction is applicable, namely any possible next-round history $\mH_{t+1}$ resulted from $\mH_t$ also ensures $\hat q_{t+1,j}\le 4q_{t+1,j}$ for all $j\in \mA_{t+1}$:
\begin{itemize}
\item If someone is eliminated in round $t$, then $\hat q_{t+1,j}=0$ for all $j\in \mA_{t+1}$, which is no more than $4q_{t+1,j}$.
\item If no new $\hat q_{t+1,i_t}$ is generated or the new $\hat q_{t+1,i_t}\le 4q_{t+1,i_t}$, then the property holds naturally.
\item If a new $\hat q_{t+1,i_t}$ is generated and $\hat q_{t+1,i_t}>4q_{t+1,i_t}$, then ${\pi}_{t,-i_t}^{\ast,f}$ would flag them. In this case $\hat q_{t+1,i_t}$ is replaced with 0 and the property also holds.
\end{itemize}

Let \Cref{eq:equiv between actual and no-flagging} hold for all $\mH_{t+1}$'s such that $\hat q_{t+1,j}\le 4q_{t+1,j}$, $\forall j$. We now prove it for a fixed $\mH_t\in H_t$ such that $\hat q_{t,j}\le 4q_{t,j}$, $\forall j$. By defintion of $V_i$ and $\tilde V_i$, we write
\begin{align*}
V_i^{\pi^\ast}(\mH_t)&=\E_{\pi^\ast}\Big [u_{t,i}\1[i_t=i]\mid \mH_t\Big ]+\E_{\pi^\ast}\Big [V_i^{\pi^\ast}(\mH_{t+1})\mid \mH_t\Big ],\\\tilde V_i^{\pi^{\ast,r}}(\tilde \mH_t)&=\E_{\pi^{\ast,r}}\Big [u_{t,i}\1[i_t=i]\mid \tilde \mH_t\Big ]+\E_{\pi^{\ast,r}}\Big [V_i^{\pi^\ast}(\tilde \mH_{t+1})\mid \tilde \mH_t\Big ].
\end{align*}
Since the allocation rule is the same between \mechname and the auxiliary game, and $\pi^\ast=(\pi^{\ast,r},\pi^{\ast,f})$, the first terms on the RHS coincide. It only remains to construct a coupling between $\mH_{t+1}$ (induced by $\pi^\ast$ from $\mH_t$) and $\tilde\mH_{t+1}$ (induced by $\pi^{\ast,r}$ from $\tilde \mH_t$).
As the report policies and allocation mechanism are identical for both games, we natually couple the reports $\bm v_t$ and winning agent $i_t$ to be the same for both games.

Recall the $\hat p_{t,j}$ defined in \Cref{line:check prob} of \Cref{alg:mechanism} and the $p_{t,j}$ defined in \Cref{def:auxiliary game formal}, we have
\begin{equation*}
\hat p_{t,j}=\min\left (\frac{4(1+K^2)}{(T-t) \hat q_{t,j} c},1\right )\overset{(a)}{\ge} \min\left (\frac{1+K^2}{(T-t) q_{t,j} c},1\right ) = \min(p_{t,j},1),\quad \forall j\in [K],
\end{equation*}
where (a) used the property that $\hat q_{t,j}\le 4q_{t,j}$ for all $j\in [K]$.
We discuss whether the winner $i_t$ (the same in both games) over-reported, i.e., whether $v_{t,i_t}>u_{t,i_t}$.
\begin{itemize}
\item If so, they are eliminated immediately in the auxiliary game according to \Cref{item:eliminate immediately}. Thus, the next-round simplified history generated by ${\pi}^{r,\ast}$ in the auxiliary game is $\tilde \mH_{t+1}=(t+1,\mA_t\setminus \{i_t\})$.
For the original game, because $i_t$ marked up using $\pi^{\ast,r}_{t,i_t}$, \Cref{item:no over-report unless p>=1 formal} implies that $\hat p_{t,i_t}\ge \min(p_{t,i_t},1)=1$, which means $i_t$ is audited w.p. 1 in the original game. This implies $i_t$ is eliminated a.s., and thus $\mA_{t+1}=\mA_t\setminus \{i_t\}$. The original game history $\mH_{t+1}$ hence corresponds to the same simplified history $\tilde \mH_{t+1}$ a.s.
\item Otherwise, $i_t$ does not over-report. Thus, the next-round simplified history in the auxiliary game is $\tilde\mH_{t+1}=(t+1,\mA_t)$. Meanwhile, regardless of whether they are audited in the original game, $i_t$ stays alive because $v_{t,i_t}\le u_{t,i_t}+R_{t}$. Thus the next-round histories are again identical.
\end{itemize}

In summary, there is a coupling between games, in which the $\mH_{t+1}$ generated by the original game always matches the $\tilde \mH_{t+1}$ generated by the auxiliary game.
Hence, our claim---namely \Cref{eq:equiv between actual and no-flagging} holds for all $\mH_t$ resulted from $\pi$---follows from the backward induction on $t=T,T-1,\ldots,1$.
\end{proof}

In fact, \Cref{eq:equiv between actual and no-flagging} holds not only to the auxiliary-game PBE $\bm \pi^{\ast,r}$ but also \emph{any} auxiliary-game strategy $\pi^r$ (in the proof, we didn't use any PBE property) and therefore recovers \Cref{lem:equiv between actual and no-flagging} in the main text. Nevertheless, according to \Cref{thm:equivalence of PBE formal}, due to the performance difference lemma, we only require \Cref{eq:equiv between actual and no-flagging} for the PBE strategy $\pi^{\ast,r}$. This is critical for the stochastic audit model that we will analyze in \Cref{sec:stoc noise model appendix}.

\begin{proof}[Proof of \Cref{thm:deterministic main thm}]
Under the perfect audit model (\Cref{assump:noiseless}), let $\pi^{\ast,r}$ be a PBE in the auxiliary game with $R=0$ and let $\pi^\ast=(\pi^{\ast,r},\pi^{\ast,f})$ (where $\pi^{\ast,f}$ is the well-behaved flagging strategy from \Cref{def:well-behaved flagging}).
From \Cref{lem:equiv between actual and no-flagging formal}, $\Delta(\pi^{\ast,r})=0$. Therefore, \Cref{thm:equivalence of PBE formal} reveals $\pi^\ast$ is a 0-ABE---hence a PBE---in the original game. \Cref{thm:regret and audits formal} further characterizes the regret and number of audits of $\pi^\ast$ as
\begin{align*}
\mR_T(\pi^\ast,\mechname)&\le K^2+\sum_{i=1}^K\left (\tilde V_i^{\pi^{\ast,r}}(\tilde \mH_1)-V_i^{\pi^\ast}(\mH_1)\right )=K^2,\\
\mB_T(\pi^\ast,\mechname)&=\O\left (\frac{K^3}{c}\log T\right )+\frac{16 K^2}{c} \E_{\pi^\ast}\left [\sum_{t=1}^T \1[(i_t\text{ not eliminated})\wedge (v_{t,i_t}>u_{t,i_t}+R)\wedge o_t]\right ]\\
&=\O\left (\frac{K^3}{c} \log T\right ).
\end{align*}
The first equality uses \Cref{lem:equiv between actual and no-flagging formal}, i.e., $V_i^{\pi^\ast}(\mH_1)=\tilde V_i^{\pi^{\ast,r}}(\tilde \mH_1)$ for all $i\in [K]$, and the second is because under the perfect audit model, audit outcome $w_t=u_{t,i_t}$. Hence $o_t=1$ (audit requested) and $v_{t,i_t}>u_{t,i_t}$ (winner over-reported) imply the elimination of $i_t$ (\Cref{line:elimination} of \Cref{alg:mechanism}). This completes the proof.
\end{proof}

\subsection{Adversarial Audit Model (Proof of \Cref{thm:adv noise model})}\label{sec:adv noise model appendix}

\begin{proof}[Proof of \Cref{thm:adv noise model}]
Under the adversarial audit model (\Cref{assump:adv noise model}), let $\pi^{\ast,r}$ be a PBE in the auxiliary game with $R=\sigma$ and let $\pi^\ast=(\pi^{\ast,r},\pi^{\ast,f})$ (where $\pi^{\ast,f}$ is the well-behaved flagging strategy from \Cref{def:well-behaved flagging}). We claim that $\Delta(\pi^{\ast,r})=0$: the proof is almost identical to that of \Cref{lem:equiv between actual and no-flagging formal}, except that we now consider whether the winner $i_t$ over-reported \emph{by more than $\sigma$}, i.e., whether $v_{t,i_t}>u_{t,i_t}+\sigma$.
\begin{itemize}
\item If so, they are eliminated in the auxiliary game with tolerance $R:=\sigma$. In the original game, \Cref{item:no over-report unless p>=1 formal} implies that $\hat p_{t,i_t}\ge \min(p_{t,i_t},1)=1$, and thus they are also eliminated in the original game.
\item Otherwise, the over-report is within the tolerance of the auxiliary game, hence $i_t$ remains alive therein. In the original game, agents always prefer staying alive to being eliminated. Therefore, due to the power of manipulating audit outcomes within the radius of $\sigma$ (recall \Cref{assump:adv noise model}), $i_t$ can ensure $w_t=u_{t,i_t}$ (since $\lvert u_{t,i_t}-v_{t,i_t}\rvert\le \sigma$) and thus remains alive. Therefore, in both games agent $i_t$ remains alive.
\end{itemize}

This proves that both games still exactly correspond to the other, i.e., \Cref{eq:equiv between actual and no-flagging} holds (but now with tolerance parameter $R=\sigma$) and therefore $\Delta(\pi^{\ast,r})=0$. Same as \Cref{sec:equiv between actual and no-flagging formal}, \Cref{thm:equivalence of PBE formal,thm:regret and audits formal} then dictate a PBE $\pi^\ast$ such that $\mR(\pi^\ast,\mechname)\le K^2$ and $\mB(\pi^\ast,\mechname)=\O(\frac{K^3}{c} \log T)$.
\end{proof}

\subsection{Stochastic Audit Model (Proof of \Cref{thm:stoc noise model})}\label{sec:stoc noise model appendix}
\begin{proof}[Proof of \Cref{thm:stoc noise model}]
If $\epsilon\ge \frac{c}{32K^2}$, which means $\epsilon \frac{K^3}{c} T \log T\ge \frac{K}{32} T\log T$, the conclusion is trivial as any strategy is a $T$-ABE and ensures $\mR_T\le T$. Therefore, we assume that $\epsilon<\frac{c}{32K^2}$, which gives $1-\epsilon \frac{16K^2}{c}\ge \frac 12$.

Under the stochastic audit model (\Cref{assump:stoc noise model}), let $\pi^{\ast,r}$ be a PBE in the auxiliary game with $R=0$ and let $\pi^\ast=(\pi^{\ast,r},\pi^{\ast,f})$ (where $\pi^{\ast,f}$ is the well-behaved flagging strategy from \Cref{def:well-behaved flagging}). Instead of working out $\Delta(\pi^{\ast,r})$ directly as in \Cref{sec:equiv between actual and no-flagging formal,sec:adv noise model appendix}, we first invoke \Cref{thm:equivalence of PBE formal,thm:regret and audits formal}. They together suggest that $\pi^\ast$ is a $\Delta(\pi^{\ast,r})$-ABE such that
\begin{align*}
\mR_T(\pi^\ast,\mechname)&\le K^2+\sum_{i=1}^K\left (\tilde V_i^{\pi^{\ast,r}}(\tilde \mH_1)-V_i^{\pi^\ast}(\mH_1)\right ),\\
\mB_T(\pi^\ast,\mechname)&=\O\left (\frac{K^3}{c}\log T\right )+\frac{16 K^2}{c} \E_{\pi^\ast}\left [\sum_{t=1}^T \1[(i_t\text{ not eliminated})\wedge (v_{t,i_t}>u_{t,i_t}+R)\wedge o_t]\right ]
\end{align*}

For the second term on the RHS of $\mB_T$, we have
\begin{align*}
&\quad \E_{\pi^\ast}\left [\sum_{t=1}^T \1[(i_t\text{ not eliminated})\wedge (v_{t,i_t}>u_{t,i_t}+R)\wedge o_t]\right ]\\
&=\sum_{t=1}^T \E_{\mH_t\sim \pi^\ast}\left [\1[v_{t,i_t}>u_{t,i_t}+R]o_t \Pr\{i_t\text{ not eliminated}\mid (v_{t,i_t}>u_{t,i_t}+R)\wedge o_t\}\right ]\\
&\overset{(a)}{\le} \epsilon \sum_{t=1}^T \E_{\mH\sim \pi^\ast} \Big [\1[v_{t,i_t}>u_{t,i_t}+R]o_t\Big ]\overset{(b)}{\le} \epsilon \mB_T(\pi^\ast,\mechname),
\end{align*}
where (a) uses \Cref{assump:stoc noise model} and (b) uses the definition of $\mB_T$. This gives a self-bounding form of $\mB_T$:
\begin{align*}
\mB_T(\pi^\ast,\mechname)&\le \O\left (\frac{K^3}{c}\log T\right )+\epsilon \frac{16K^2}{c}\mB_T(\pi^\ast,\mechname)\\ \Longrightarrow \mB_T(\pi^\ast,\mechname)&=\O\left (\frac{K^3}{c}\log T \middle / \left (1-\epsilon \frac{16K^2}{c}\right )\right )=\O\left (\frac{K^3}{c}\log T\right ),
\end{align*}
where we used $1-\epsilon \frac{16K^2}{c}\ge \frac 12$.
We now utilize this bound to find out $\Delta(\pi^{\ast,r})$ and $\sum_{i=1}^K (\tilde V_i^{\pi^{\ast,r}}(\tilde \mH_1)-V_i^{\pi^\ast}(\mH_t))$.
Specifically, for any $(r_\tau\colon \mH_\tau\to [0,1])_{\tau\in [T]}$ that is well-defined in the auxiliary game, and any history $\mH_t$ such that $\hat q_{t,j}\le 4q_{t,j}$ for all $j\in [K]$, we compare the two trajectories starting from $\mH_t$ in the original game (under \mechname) and that from $\tilde \mH_t$ in the auxiliary game (under \Cref{def:auxiliary game formal}). Viewing the transitions of $\mH_t$ and that of $\tilde \mH_t$ as two Markov Decision Processes (MDP), we slightly abuse the notations and define
\begin{equation*}
V^{\pi^\ast}_r(\mH_t):=\E_{\pi^\ast,\mechname}\left [\sum_{\tau\ge t} r_\tau(\mH_\tau)\middle \vert \mH_t\right ],\quad \tilde V^{\pi^{\ast,r}}_r(\tilde \mH_t):=\E_{\pi^{\ast,r},\text{\Cref{def:auxiliary game formal}}}\left [\sum_{\tau\ge t} r_\tau(\tilde \mH_\tau)\middle \vert \tilde \mH_t\right ].
\end{equation*}
Invoking the simulation lemma \citep[see][Exercise 2.1 for the exact form we use below]{Sun},
\begin{align}\label{eq:simulation lemma}
&\quad \left \lvert \E_{\pi^\ast,\mechname}\left [\sum_{\tau\ge t} r_\tau(\mH_\tau)\middle \vert \mH_t\right ]-\E_{\pi^{\ast,r},\text{\Cref{def:auxiliary game formal}}}\left [\sum_{\tau\ge t} r_\tau(\tilde \mH_\tau)\middle \vert \tilde \mH_t\right ] \right \rvert=\lvert V_r^{\pi^\ast}(\mH_t)-\tilde V_r^{\pi^{\ast,r}}(\tilde \mH_t)\rvert\\
&\overset{(a)}{=}\sum_{\tau\ge t} \E_{\mH_\tau\sim \mechname} \left [\E_{\mH_{\tau+1}\sim \mechname} [\tilde V_r^{\pi^{\ast,r}}(\tilde \mH_{\tau+1})]-\E_{\tilde \mH_{\tau+1}\sim \text{\Cref{def:auxiliary game formal}}} [\tilde V_r^{\pi^{\ast,r}}(\tilde \mH_{\tau+1})]\right ]\\
&\le \sum_{\tau\ge t} \E_{\mH_\tau\sim \mechname} \left [\Pr_{\mH_{\tau+1}\sim \mechname,\tilde \mH_{\tau+1}\sim \text{\Cref{def:auxiliary game formal}}} \left \{\mH_{\tau+1}\text{ and }\tilde \mH_{\tau+1}\text{ mismatch}\right \}\middle \vert \mH_\tau\right ]\cdot (T-\tau)\\
&\le \mB_T(\pi^\ast,\mechname)\cdot \epsilon \cdot (T-\tau)=\O\left (\epsilon \frac{K^3}{c} T \log T\right ),\quad \forall t\in [T],
\end{align}
where we remark that in (a), the $\tilde \mH_{\tau+1}$ in the first expectation means the simplified history corresponding to $\mH_{\tau+1}$ (induced by \mechname from full history $\mH_\tau$), while the $\tilde \mH_{t+1}$ in the second expectation stands for the simplified history induced by \Cref{def:auxiliary game formal} from $\tilde \mH_\tau$ (the simplified history corresponding to $\mH_\tau$).

To control $\Delta(\pi^{\ast,r})$, for any fixed $i\in [K]$, we pick $r_t(\mH_t):=u_{t,i}\1[i_t=i]$ (well-defined in the auxiliary game due to the identical allocation rules). To control $\sum_{i=1}^K (\tilde V_i^{\pi^{\ast,r}}(\tilde \mH_1)-V_i^{\pi^\ast}(\mH_1))$, we pick $r_t(\mH_t):=u_{t,i_t}$ (again well-defined). Therefore, both are bounded by $\O(\epsilon \frac{K^3}{c} T\log T)$ according to \Cref{eq:simulation lemma}. To summarize,
\begin{align*}
&\pi^\ast\text{ is an }\O\left (\epsilon \frac{K^3}{c} T \log T\right )\text{-ABE in the original game,}\\
&\mR_T(\pi^\ast,\mechname)=\O\left (K^2+\epsilon \frac{K^3}{c} T \log T\right ),\\
&\mB_T(\pi^\ast,\mechname)=\O\left (\frac{K^3}{c} \log T\right ),
\end{align*}
as claimed in \Cref{thm:stoc noise model}.
\end{proof}

\section{Lower Bounds on Regret and Number of Audits (\Cref{thm:lower_bounds})}\label{sec:appendix lower bounds}

In this section, we prove the hardness results in \Cref{thm:lower_bounds}. We start by proving the lower bound $\Omega(K)$ on the regret, then prove the lower bound $\Omega(1)$ on the number of audits for having low regret.

\begin{theorem}[Lower Bound on Regret]\label{thm:lower_bound_regret}
Let $K\geq 2$ and $T\geq K$. Then, there exists $K$ utility distributions $\mV_1,\mV_2,\ldots,\mV_K$ over $[0,1]$ satisfying \Cref{assumption:min report} with $c=\frac 13$, such that for any central planner mechanism $\bm M$ and corresponding agents' joint strategy PBE $\pi=(\pi_{t,i})_{t\in [T],i\in[K]}$,
\begin{equation*}
\mR_T(\pi,\bm M) \geq \frac{K-1}{64}=\Omega(K).
\end{equation*}
\end{theorem}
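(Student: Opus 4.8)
The plan is to exhibit one collection of (identically distributed) utilities for which, on a block of $\Theta(K)$ rounds at the very end of the horizon, every mechanism is forced to behave essentially as in a one‑shot allocation, and then to invoke an Arrow‑type inefficiency of size $\Omega(1)$ on each such round. \textbf{The instance.} Take $\mV_1=\cdots=\mV_K=\mD$, where $\mD$ returns the value $1$ with probability $\theta:=1/K$ and, with probability $1-\theta$, a value drawn uniformly from a tiny interval $[\tfrac13,\tfrac13+\delta]$ with $\delta$ a small absolute constant. Each $\mV_i$ is supported on $[\tfrac13,1]$, so \Cref{assumption:min report} holds with $c=\tfrac13$; by symmetry the common per‑round fair share is $\mu=\tfrac1K\E[\max_{j\in[K]}u_{t,j}]=\Theta(1/K)$, and $\E[\max_{j}u_{t,j}]-\E[u_{t,1}]\ge c_0$ for an absolute $c_0>0$ and all $K\ge2$ (adding i.i.d.\ agents only raises the max).

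\textbf{Step 1: a block of ``frozen'' agents near the end.} Fix a mechanism $\bm M$ and a PBE $\bm\pi$. For any round $t$ and history $\mH_t$, future welfare obeys $\sum_{j\in[K]}V_j^{\bm\pi}(\mH_t)\le(T-t+1)\E[\max_j u_{t,j}]\le T-t+1$. Put $W=\{T-\lfloor\kappa K\rfloor+1,\dots,T\}$ for a small absolute $\kappa$; since $T\ge K$ we have $W\subseteq[T]$, and on $W$ the sum above is $\le\kappa K+1$, so at most $3\kappa K+3\le K/2$ agents can have $V_j^{\bm\pi}(\mH_t)\ge\tfrac13$. Call the remaining $\ge K/2$ agents \emph{frozen} at $\mH_t$. (If some agent was eliminated before $W$, then a win‑by‑bluffing preceded that audit, and one shows this route already forces $\Omega(K)$ regret; hence we may assume $\mA_t=[K]$ throughout $W$. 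The small‑$K$ regime is handled directly, since there a single inefficient last round already gives $\Omega(K)=\Omega(1)$.)

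\textbf{Step 2 (crux) and conclusion.} A frozen agent $j$ with the low realization $u_{t,j}\approx\tfrac13$ gains $\approx\tfrac13$ by winning round $t$, whereas the only deterrent---an audit of the winner followed by elimination, which occurs with probability at most $j$'s winning probability---costs it only $V_j^{\bm\pi}(\mH_{t+1})<\tfrac13$. Hence bluffing (reporting aggressively, e.g.\ $1$) is weakly---and, off a negligible tie set, strictly---at least as profitable as any informative report, so in the PBE the report of a frozen agent cannot separate its low realization from its (probability‑$\theta$) high realization; equivalently, for a frozen $j$, either the mechanism's posterior that $j$ is high given $\bm v_t$ equals the prior $\theta$, or the mechanism never allocates to $j$. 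Since $\bm u_t$ is independent of $\mH_t$, $\argmax_j u_{t,j}$ is uniform on $[K]$ given $\mH_t$, hence lies in the frozen set with probability $\ge\tfrac12$; and because $\theta=1/K$, with probability $\Omega(1)$ the round's configuration is ``the unique high agent is frozen and no non‑frozen agent is high.'' On that event the mechanism realizes at most $\tfrac12$ (it allocates blindly among indistinguishable frozen candidates, or to a non‑frozen agent, all of which are low), while the first best is $1$; so the conditional expected regret of round $t$ is at least an absolute $c_1>0$. Summing over $t\in W$ gives $\mathcal R_T(\bm\pi,\bm M)\ge c_1\lfloor\kappa K\rfloor=\Omega(K)$, and tuning $\theta,\delta,\kappa$ yields the stated $\tfrac{K-1}{64}$.

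\textbf{Main obstacle.} The hard step is Step 2, and it is precisely where the failure of the revelation principle is felt: one cannot posit a convenient reporting equilibrium but must show that \emph{no} PBE of the fully strategic game makes frozen agents' reports informative. The argument has a fixed‑point flavour---a mechanism that ignores a frozen agent's report kills its incentive to bluff (seemingly restoring separation), while one that rewards an aggressive report makes bluffing strictly profitable (destroying separation)---and is closed by comparing, for each frozen agent and each realization, the continuation value of following $\bm\pi$ with that of mimicking its own high‑realization report, using $V_j^{\bm\pi}(\mH_{t+1})<\tfrac13\le u_{t,j}$ together with the cap that the elimination probability never exceeds the winning probability. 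A secondary point, that eliminations within $W$ shrink the alive set, is harmless (fewer, still i.i.d.\ agents) but must be tracked along the realized trajectory.
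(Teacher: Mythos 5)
Your construction and decomposition are genuinely different from the paper's: you use a symmetric i.i.d.\ instance and try to extract $\Omega(1)$ regret per round from an endgame block of $\Theta(K)$ rounds in which most agents have small continuation value, whereas the paper uses an asymmetric instance ($\mV_1=\delta_{2/3}$ and $\mV_i\sim \frac13+\frac23\mathrm{Ber}(\alpha/T)$ for $i\ge 2$) in which each agent $i\ge 2$ has total expected value $O(\alpha)=O(1)$ over the \emph{entire} horizon. That global smallness lets the paper run a clean one-shot deviation: agent $i$ simulates the PBE strategy on a fake utility sequence drawn from $\mV_i^{\otimes T}\mid\mE_i$ (conditioned on ever being high), and a likelihood/measure-change comparison shows that unless the per-agent misallocation mass $R_i$ is $\Omega(\alpha)$, this deviation strictly improves on the equilibrium value. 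Summing $R_i$ over $i\ge 2$ gives $\Omega(K)$ with no per-round incentive analysis at all.

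The problem with your proposal is that Step 2 --- which you yourself identify as the crux --- is asserted rather than proved, and the heuristic you give for it does not hold as stated. You argue that the only deterrent against a frozen agent bluffing is ``audit $+$ elimination,'' costing at most $V_j^{\bm\pi}(\mH_{t+1})<\tfrac13$. But $\bm M$ is an arbitrary mechanism: it can condition \emph{all future allocations} on the round-$t$ reports, with or without auditing, so the relevant quantity is not the on-path expected continuation $\E[V_j^{\bm\pi}(\mH_{t+1})\mid\mH_t]<\tfrac13$ but the \emph{difference} between the continuation value after a ``low'' report and after a ``high'' report. Since the high report occurs with probability only $\theta=1/K$ on path, the Bellman identity $(1-\theta)A+\theta B\le V_j^{\bm\pi}(\mH_t)$ constrains $B$ (the continuation after a high report) very weakly --- it can be as large as $\Theta(K)$ times $V_j^{\bm\pi}(\mH_t)$ --- so ``$V_j<\tfrac13$'' alone does not rule out a separating reward/punishment scheme. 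A correct version of Step 2 has to run the two-type incentive-compatibility computation explicitly (bounding $A-B$ against the gain $(p_H-p_L)\cdot\tfrac13$, handling mixed/continuous report strategies, and handling the fact that an audit reveals the true utility and thus lets the mechanism punish a mimicking low type by more than $A-B$); this yields a threshold close to $(1-\theta)/3$ rather than $\tfrac13$, and only then does the ``posterior equals prior or never allocate'' dichotomy, and the per-round $\Omega(1)$ loss, follow. Your parenthetical dispositions of the small-$K$ case and of pre-$W$ eliminations are similarly unsubstantiated (a single early elimination does not obviously force $\Omega(K)$ regret in your symmetric instance). In short, the route is plausible and genuinely different, but the central lemma it rests on is missing, and the one-line justification offered for it is not correct as written.
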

\begin{proof}
    We consider the following distributions: $\mV_1 = \delta_{2/3}$, that is, $u_1=\frac{2}{3}$ almost surely, and for all $i\geq 2$, we have $\mV_i \sim \frac{1}{3} + \frac{2}{3}\text{Ber}(\frac{\alpha}{T})$ where $\alpha=\frac{1}{8}$, that is, $u_i=1$ with probability $\alpha/T$ and $u_i=\frac{1}{3}$ otherwise.
    Now fix any PBE $\pi$ corresponding to these utility distributions. We fix an agent $i\geq 2$ and aim to show that
    \begin{equation}\label{eq:bound_error_i}
        R_i:=\E_{\pi}\sqb{\sum_{t=1}^T \1[u_{t,i}=1]\1[i_t\neq i] + \1[u_{t,i}=1/3]\1[i_t=i]} \geq \frac{\alpha}{2}.
    \end{equation}
    
    Suppose by contradiction that \Cref{eq:bound_error_i} does not hold. We then have
    \begin{equation}\label{eq:upper_bound_gain_i}
        V_i^{\pi}(\mH_1) \leq \E\sqb{\sum_{t=1}^T \frac{1}{3}\1[u_{t,i}=1/3]\1[i_t=i] + \1[u_{t,i}=1]} \leq \alpha+\frac{R_i}{3}\leq \frac{7\alpha}{6}.
    \end{equation}
    On the other hand, we can also bound the expected gain of agent $i$ conditional on the event $\mE_i\colon \exists t\in[T]\text{ s.t. } u_{t,i}=1$ which excludes the case when agent $i$ always has utility $1/3$.
    \begin{align*}
        \E\sqb{\paren{\sum_{t=1}^T u_{t,i}\1[i_t=i]}\1[\mE_i]} &\geq \E\sqb{\paren{\sum_{t=1}^T \1[u_{t,i}=1]}\1[\mE_i]}- R_i\\
        &= \E\sqb{\sum_{t=1}^T \1[u_{t,i}=1]}- R_i =\alpha-R_i \geq \frac{\alpha}{2}.
    \end{align*}
    As a result,
    \begin{equation*}
        \E\sqb{\sum_{t=1}^T u_{t,i}\1[i_t=i] \middle \vert \mE_i} \geq \frac{\alpha}{2\Pr\{\mE_i\}} = \frac{\alpha}{2(1-(1-\alpha/T)^T)}\geq \frac{1}{2}.
    \end{equation*}
    We now define a new strategy $\tilde{\pi}_i$ for agent $i$ which proceeds as follows: sample a "fake" sequence $\tilde{\bm u}_i=(\tilde u_{t,i})_{t\in[T]}$ from the distribution $\mV_i^{\otimes T}\mid\mE_i$, then run strategy $\pi_i$ throughout the whole allocation process by bidding as if the true utility of agent $i$ as $\tilde{\bm u}$. Then,
    \begin{align*}
        V_i^{(\pi_{-i},\tilde\pi_i)}(\mH_1) &\geq \frac{1}{3}\Pr_{(\pi_{-i},\tilde\pi_i)}\{\exists t\in[T]\text{ s.t. }i_t=i\} \geq \frac{1}{3} \E_{\pi}\sqb{\sum_{t=1}^T u_{t,i}\1[i_t=i] \middle \vert \mE_i} \geq \frac{1}{6} > V_i^{\pi}(\mH_1),
    \end{align*}
    where in the last inequality we used \Cref{eq:upper_bound_gain_i} and the fact that $\alpha<1/7$.
    This contradicts the fact that $\pi$ is a PBE. In summary, we proved that \Cref{eq:bound_error_i} holds for all $i\geq 2$.

    We can then bound the regret of the mechanism at the PBE $\pi$. First, note that any time $t$ for which $u_{t,i_t}=1/3$ induces a regret $1/3$ since agent $1$ always had utility $1$. Second, if at time $t$, there is a unique agent $i\geq 2$ which has utility $u_{t,i}=1$, the mechanism also incurs a regret at least $1/3$ if it does not allocate to agent $i$, that is if $i_t\neq i$. Formally,
    \begin{align*}
        \mR_T(\pi,\bm M)
        &\geq \E_{\bm u,\pi}\sqb{\sum_{t=1}^T \sum_{i=2}^K\frac{1}{3}\paren{\1\left [u_{t,i}=\frac 13\right ]\1[i_t=i]  + \1[u_{t,i}=1]\1[i_t\neq i]\1\left [u_{j,t}=\frac 13,\forall j\notin\{1,i\}\right ]} }\\
        &\geq \frac{1}{3}\sum_{i=2}^K R_i - \frac{1}{3}\E\sqb{\sum_{t=1}^T \paren{\sum_{i=2}^K\1[u_{t,i}=1]}\1\sqb{\sum_{i=2}^K\1[u_{t,i}=1] >1}}\\
        &\geq \frac{(K-1)\alpha}{6} - \frac{T}{3}\paren{\frac{K\alpha}{T}-K\frac{\alpha}{T}\paren{1-\frac{\alpha}{T}}^{K-1}}\geq \frac{(K-1)\alpha}{6}\paren{1-2\alpha},
    \end{align*}
    where in the last inequality, we used the fact that $1-(1-\alpha/T)^{K-1}\leq 1- (1-\alpha/T)^{T-1} \leq 1-e^{-\alpha}\leq \alpha$. Plugging in the utility $\alpha=1/8$ gives the desired result.
\end{proof}

\begin{theorem}[Lower Bound on Tradeoff Between Regret and Number of Audits]\label{thm:lower_bound_nb_checks}
Let $K\geq 2$ and $T\geq K$. Then, there exists universal constants $c_1,c_2>0$ and $K$ utility distributions $\mV_1,\mV_2,\ldots,\mV_K$ over $[0,1]$ satisfying \Cref{assumption:min report} with $c=\frac 13$, such that for any central planner mechanism $\bm M$ and corresponding agents' joint strategy PBE $\pi=(\pi_{t,i})_{t\in [T],i\in[K]}$,
\begin{equation*}
    \mB_T(\pi,\bm M)\leq c_1\Longrightarrow \mR_T(\pi,\bm M)\ge c_2\sqrt {\frac{T}{\log T}}.
\end{equation*} 
\end{theorem}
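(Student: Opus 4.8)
The plan is to prove the contrapositive via a single two‑agent construction. Take $K=2$ (for $K>2$, append $K-2$ dummy agents with $\mV_i=\delta_0$, which can never be allocated without regret). Let $\mV_1=\delta_{2/3}$ and let $\mV_2$ place mass $p$ on $1$ and $1-p$ on $1/3$, where $p\asymp\sqrt{(\log T)/T}$ (for small $T$ the claim is vacuous after shrinking $c_2$); \Cref{assumption:min report} holds with $c=1/3$ since agent $1$ always has utility $2/3$. The key elementary observations: (a) a round allocated non‑ideally (to agent $1$ when $u_{t,2}=1$, or to anyone other than agent $1$ when $u_{t,2}=1/3$) loses at least $1/3$ in welfare, so $\mathcal R_T\ge\tfrac13\,\E[\#\text{non-ideal rounds}]$; in particular, if $\mathcal R_T$ is small then agent $2$ must be allocated on almost all of its $\approx pT$ high‑utility rounds, so the equilibrium "high" report of agent $2$ wins the resource with probability $\bar\gamma\ge 1-O(\mathcal R_T/(pT))$ on average; and (b) also $\E[\#\{t:\,u_{t,2}=1/3,\ i_t=2\}]\le 3\mathcal R_T$, which pins $V_2^{\bm\pi}(\mH_1)$ to within $O(\mathcal R_T)$ of $pT$.

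Fix a mechanism $\bm M$ and a corresponding PBE $\bm\pi$, and suppose towards a contradiction that $\mathcal B_T(\bm\pi,\bm M)\le c_1$ and $\mathcal R_T(\bm\pi,\bm M)<c_2\sqrt{T/\log T}$, with $c_1,c_2$ small universal constants. I would analyze the following deviation of agent $2$ — a refinement of the "play on a fake utility sequence" trick used in \Cref{thm:lower_bound_regret}: draw i.i.d. coins $\zeta_t\sim\mathrm{Ber}(\epsilon)$ with $\epsilon\asymp\sqrt{p/T}$; report exactly as $\bm\pi$ prescribes whenever $u_{t,2}=1$, or whenever $u_{t,2}=1/3$ and $\zeta_t=0$; but when $u_{t,2}=1/3$ and $\zeta_t=1$, report as $\bm\pi$ would on a high round (a "sneaked markup") and henceforth treat one's private history as if $u_{t,2}$ had been $1$. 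There are $\approx\epsilon T\asymp\sqrt{pT}$ sneaked markups in expectation. One then argues two things. First (gain): since the mechanism never sees $u_{t,2}$, a sneaked markup is, up to the moment of a possible audit, indistinguishable from a genuine high round, so conditionally on not being caught it wins with the same probability $\bar\gamma$, which by observation (a) and $\mathcal R_T<c_2\sqrt{T/\log T}$ is at least $1/2$ for large $T$; hence the deviation collects extra expected utility $\Omega(\epsilon T)=\Omega(\sqrt{pT})$. Second (risk): agent $2$ is "caught" only when an audit lands on a sneaked‑markup round; conditionally on agent $2$ reporting high in a round, that round is a sneaked markup with probability $\le\epsilon/p$, independently of the mechanism's audit decision (which cannot observe $u_{t,2}$), and under $\bm\pi$ agent $2$ reports high only with probability $\approx p$ so the total audit mass on agent‑$2$‑high rounds is $O(c_1/p)$; combining these gives $\P[\text{agent 2 caught}]=O\!\big(\tfrac{\epsilon}{p}c_1\big)=O(c_1/\sqrt{pT})$, and since being caught costs at most the whole continuation value $\le pT+O(\mathcal R_T)$, the expected loss is $O(c_1\epsilon T)=O(c_1\sqrt{pT})$, strictly below the gain once $c_1$ is small. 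This makes the deviation profitable, contradicting that $\bm\pi$ is a PBE; hence $\mathcal R_T\ge c_2\sqrt{T/\log T}$.

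The main obstacle is the risk bound: $\mathcal B_T\le c_1$ constrains the mechanism's auditing only under equilibrium play, whereas one needs to control both how often it audits and how likely it catches agent $2$ under the deviation, and the deviation shifts the transcript law by a total‑variation amount that is a small constant — not $o(1/T)$ — so a naive change of measure on $\E[\#\text{audits}]$ is useless. The right route is to bound $\P[\text{caught}]$ directly by revealing the sneaked‑markup rounds in chronological order $\tau_1<\tau_2<\cdots$: before $\tau_1$ the transcript is exactly equilibrium‑distributed; "not caught at $\tau_j$" coincides with "not audited at $\tau_j$", so survival conditions only on audit non‑events, each of probability $1-o(1)$ and jointly $1-o(1)$ because there are only $\approx\sqrt{pT}$ such rounds while each relevant audit probability is $O(c_1/(pT))$; thus at every $\tau_j$ the conditional history is within $1-o(1)$ of its equilibrium law, and summing the per‑round audit probabilities (each weighted by the $\le\epsilon/p$ chance that an audited high‑report round is actually a markup) yields $\P[\text{caught}]=O(\tfrac{\epsilon}{p}c_1)$. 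A secondary subtlety is that the "report‑high‑and‑lose" branch of the deviation lands on an off‑equilibrium history; but by observation (a) that branch has probability $1-\bar\gamma=o(1)$, so its value contribution is absorbed, and the mechanism's (possibly lenient) reaction to catching a liar is handled by the crude "caught costs at most the continuation value" bound. This slack — the fact that the invisible perturbation size $\epsilon$ must be kept small enough to defeat the mechanism's adaptive response — is precisely what produces the extra $\sqrt{\log T}$ factor in the stated bound.
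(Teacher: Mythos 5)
Your overall strategy is the same as the paper's: fix the PBE, have agent~$2$ occasionally ``sneak'' a high report on a low round by running the equilibrium strategy on a perturbed utility sequence, and derive a contradiction by comparing the extra wins against the expected cost of being caught. The paper also instantiates this with $\mV_1=\delta_{2/3}$ and a Bernoulli $\mV_2$. However, your parameter choices and your justification of the risk bound diverge from the paper's in a way that leaves a genuine gap.

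The gap is the claim $\P[\text{caught}]=O(c_1\epsilon/p)$. The constraint $\mathcal B_T\le c_1$ bounds the \emph{total} expected number of audits under equilibrium play; it gives no per-round bound of the form $O(c_1/(pT))$, because the mechanism may concentrate its entire audit budget adaptively on report histories that look like deviations. Your sequential-revelation argument is meant to fix this, but its central claim --- that at each sneaked-markup round the conditional history is within $1-o(1)$ (in total variation) of its equilibrium law --- is false for your parameters: the deviation shifts agent~$2$'s per-round high-report probability by $(1-p)\epsilon$, and over $T$ rounds the resulting product measures are at KL distance $\Theta(T\epsilon^2/p)=\Theta(1)$ since you set $\epsilon\asymp\sqrt{p/T}$, hence at total-variation distance bounded away from $0$. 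Sitting exactly at this detectability threshold, a mechanism can weight its audit probability by (powers of) the running likelihood ratio of agent~$2$'s report stream, keeping $\mathcal B_T\le c_1$ under equilibrium while inflating the catch probability under the deviation by a super-constant factor; nothing in your argument rules this out. A useful sanity check: if your bound held, the contradiction would force $\mathcal R_T=\Omega(pT)=\Omega(\sqrt{T\log T})$, strictly stronger than the claimed $\sqrt{T/\log T}$, which should itself raise suspicion. (The same total-variation issue also undercuts the gain step, i.e.\ the claim that a sneaked markup wins with the same probability $\bar\gamma$ as a genuine high round, since the allocation rule may also react to the shifted report history.) The paper resolves exactly this difficulty by keeping $p$ constant and taking $\epsilon=1/\sqrt{T\log T}$, i.e.\ $T\epsilon^2/p=O(1/\log T)$, so that the trajectory likelihood ratio is bounded by $e^{O(1)}$ except on an event of probability $1/\mathrm{poly}(T)$; this permits a clean change of measure $\E_{\bm v}[X]\le e^{5}\E_{\bm u}[X]+\|X\|_\infty\P[\neg\mathcal E]$ applied to both the audit count and the lost-allocation count, and it is precisely this ``stay a $\sqrt{\log T}$ factor below the detection threshold'' requirement that produces the $\sqrt{T/\log T}$ rate. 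Your closing sentence correctly identifies this mechanism-adaptivity issue as the source of the $\sqrt{\log T}$, but the parameters you chose ($\epsilon^2T/p=\Theta(1)$) do not implement the fix, and the sequential argument offered in its place does not close the hole.
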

\begin{proof}
    We consider similar distributions as in the proof of \Cref{thm:lower_bound_regret}, but using only the first two agents. Precisely, we set $\mV_1=\delta_{2/3}$,  $\mV_2\sim\delta_{1/3} + \frac{2}{3}\text{Ber}(p)$, where $p=\frac{1}{3}$, and $\mV_i=\delta_{1/4}$ for $i\geq 3$. That is, $u_1=\frac{2}{3}$ always, $u_2$ equals $\frac{1}{3}$ with probability $1-p$ or $1$ with probability $p$, and $u_i=\frac{1}{4}$ for all $i\geq 3$ (so their ex-ante first-best utilities are all 0 as $u_i<u_1$ and $u_i<u_2$ a.s.). By the revelation principle, we assume without loss of generality that under this specific utility distributions setup $\{\mV_i\}_{i\in [K]}$ and mechanism $\bm M$ truthful reporting $\truth$ is the considered PBE. 

    By design, the central planner incurs a constant regret $\frac 13$ whenever they do not allocate the resource to agent $2$ if they had utility $1$. Also, because agent $1$ always has utility $\frac 23$, we have
    \begin{equation}\label{eq:lower_bound_regret}
        \mR_T(\truth,\bm M) = \frac{1}{3}\E\sqb{\sum_{t=1}^T \1[u_{t,2}=1]\1[i_t\neq 2] + \1\sqb{u_{t,2}=\frac{1}{3} }\1[i_t\neq 1]}.
    \end{equation}

    From now, we focus on the agent $i=2$. For simplicity, for any function $f\colon [0,1]^T\to [0,1]^T$, we abbreviate
    \begin{equation*}
    \E_{f(\bm u_i),\pi}[X]:=\E_{\text{i.i.d. }\bm u_i=(u_{t,i})_{t\in [T]}\text{ from }\mV_i} \left [\E_{\text{agents follow joint strategy }\pi}\left [X\middle \vert \text{agent $i$ has true utilities }f(\bm u_i)\right ]\right ]
    \end{equation*}
    We similarly define $\Pr_{f(\bm u_i),\pi}\{X\}$. For example, if agent $i$'s true utilities are i.i.d. samples from $\mV_i$ and all agents follow $\truth$, then the corresponding expectation operator is denoted by $\E_{\bm u_i,\truth}$.
    We start by upper bounding the gain of agent $i$ under truthful reporting as follows.
    \begin{align}
        V_i^{\truth}(\mH_1;\bm M) &\leq \E_{\bm u_i,\truth} \sqb{\sum_{t=1}^T \1[u_{t,i}=1] + \frac{1}{3} \1\sqb{u_{t,i}=\frac{1}{3}}\1[i_t=i] } \notag\\
        &= Tp + \frac{1}{3} \E_{\bm u_i,\truth} \sqb{\sum_{t=1}^T  \1\sqb{u_{t,i}=\frac{1}{3}}\1[i_t=i] }. \label{eq:upper_bound_truth}
    \end{align}
    
    We next construct an alternative strategy $\pi_i$ for agent $i$ as follows. Define a parameter $\epsilon:=(\sqrt{T\log T})^{-1}$. For $T$ sufficiently large, we have $p+\epsilon\leq \frac{1}{2}$. Let $\epsilon'$ be such that $p+(1-p)\epsilon'=p+\epsilon$.
    Then in round $t\in[T]$, after observing their utility $u_{t,i}$, $\pi_i$ suggests
    \begin{itemize}
    \item if $u_{t,i}=1$, truthfully reporting $v_{t,i}=1$ as in \truth,
    \item otherwise, independently w.p. $\epsilon'$ reporting $v_{t,i}=1$ (i.e., pretend that they had true utility $u_{t,i}=1$), and otherwise reporting $v_{t,i}=\frac 13$ as in \truth.
    \end{itemize}
    Thus overall, when adopting strategy $\pi_i$, agent $i$ reports $v_{t,i}=1$ w.p. $p+(1-p)\epsilon'=p+\epsilon$.

    As a side note, the resource allocation setup defined in \Cref{sec:setup} allows the central planner to ask further information from the agents than only their utility report (see Step~\ref{item:additional_questions}). To these questions at round $t\in[T]$, $\pi$ answers identically as what the PBE strategy $\truth$ would have replied had their true utility sequence been the reported $(v_{\tau, i})_{\tau\leq t}$, and for the same public history. 
    In summary, the strategy $\pi_i$ corresponds to using the PBE strategy $\truth$ for the pretended utilities $(v_{t,i})_{t\in[T]}$. Note that by construction, they form an i.i.d. sequence of distribution $\tilde {\mathcal U}_i\sim\frac{1}{3} + \frac{2}{3}\text{Ber}(p+\epsilon)$.
    
    Importantly, the allocation process is identical under the following two scenarios:
    \begin{enumerate}
        \item[S1.] agent $i$ had true utilities $\bm u_i=(u_{t,i})_{t\in [T]}$ but used strategy $\pi_i$ to report $\bm v_i=(v_{t,i})_{t\in [T]}$, while all other agents follow $\truth_{-i}$ (this corresponds to expectation operator $\E_{\bm u_i,\pi_i\circ \truth_{-i}}$), and
        \item[S2.] agent $i$ had true utilities as the aforementioned $\bm v_i=(v_{t,i})_{t\in [T]}$ and used the strategy $\truth_i$, while all other agents still follow $\truth_{-i}$ (we denote this case by $\E_{\bm v_i,\truth}$); we remark that \truth{} is not necessarily a PBE under the new configuration of distributions $(\mV_1,\tilde{\mV}_2,\mV_3,\ldots,\mV_K)$, since the revelation principle was only applied to the original $\{\mV_i\}_{i\in [K]}$,
    \end{enumerate}
    unless the following three conditions are simultaneously true for some round $t\in [T]$:
\begin{itemize}
    \item agent $i$ flips their report from $u_{t,i}=\frac 13$ to $v_{t,i}=1$ under strategy $\pi_{t,i}$ in round $t$, whose indicator we denote by $F_{t,i}:=\1[u_{t,i}=\frac 13,v_{t,i}=1]$;
    \item agent $i$ wins in round $t$, i.e., $i_t=i$; and
    \item the central planner audits $i$ in round $t$, namely $o_t=1$.
\end{itemize}

Only when all three conditions hold, i.e., $1=F_{t,i} \1[i_t=i] o_t$, the planner would observe $v_{t,i}\ne u_{t,i}$ in S1 but $v_{t,i}=u_{t,i}$ in S2. Hence S1 and S2 collide if $0=\bigvee_{t\in [T]} F_{t,i} \1[i_t=i]o_t$, where $\vee$ is logical or.
    
    Putting everything together, we obtain
    \begin{align}
        V_i^{\pi_i\circ \truth_{-i}}(\mH_1;\bm M) &= \E_{\bm u_i,\pi_i\circ \truth_{-i}} \sqb{\sum_{t=1}^T u_{t,i} \1[i_t=i] } \notag \\
        &\geq \E_{\bm v_i,\truth} \sqb{ \1\left [0=\bigvee_{t\in [T]} F_{t,i}\1[i_t=i]o_t\right ]\sum_{t=1}^T u_{t,i} \1[i_t=i] } \notag  \\
        &\geq \E_{\bm v_i,\truth} \sqb{\sum_{t=1}^T u_{t,i} \1[i_t=i] } - T\E_{\bm v_i,\truth}\sqb{\sum_{t=1}^T F_{t,i}\1[i_t=i]o_t}. \label{eq:bound_V_i_deviation_1}
    \end{align}

    For the first term in the RHS of \Cref{eq:bound_V_i_deviation_1}, we lower bound it as follows:
    \begin{align}
        \E_{\bm v_i,\truth} \sqb{\sum_{t=1}^T u_{t,i} \1[i_t=i] } &\geq \E_{\bm v_i,\truth} \sqb{\sum_{t=1}^T u_{t,i} \1[v_{t,i}=1] - \1[v_{t,i}=1]\1[i_t\neq i] } \notag \\
        &=Tp +\frac{\epsilon T}{3}  - \E_{\bm v_i,\truth} \sqb{\sum_{t=1}^T  \1[v_{t,i}=1]\1[i_t\neq i] }. \label{eq:bound_V_i_deviation_2}
    \end{align}
    To bound the second term in the RHS of \Cref{eq:bound_V_i_deviation_1}, note that in a round $t\in[T]$ in which the central planner allocated the resource to $i_t=i$, when the central planner makes the decision to audit agent $i$ or not, they only have access to the past history $\mH_t$, the current-round reports $\bm v_t$, and the winner $i_t$. Conditioning on these, the value of $u_{t,i}$ is in fact only dependent on the report $v_{t,i}$ as
    \begin{equation*}
        \Pr_{\bm v_i,\truth}\set{ u_{t,i}=\frac{1}{3} \middle \vert \mH_t,\bm v_t, i_t, o_t} = \Pr\set{ u_{t,i}=\frac{1}{3} \middle \vert  v_{t,i}} = \begin{cases}
            1 & v_{t,i}=\frac{1}{3},\\
            \frac{\epsilon}{p}  & v_{t,i}=1.
        \end{cases}
    \end{equation*}
    In the last inequality, we used the construction of the variable $v_{t,i}$. 
    As a result, 
    \begin{align}
         \E_{\bm v_i,\truth}\sqb{\sum_{t=1}^T F_{t,i}\1[i_t=i]o_t} &= \E_{\bm v_i,\truth}\sqb{\sum_{t=1}^T \1[v_{t,i}=1,i_t=i]o_t \1\sqb{u_{t,i}=\frac 13}} \notag \\
         &= \frac{\epsilon}{p} \E_{\bm v_i,\truth}\sqb{\sum_{t=1}^T \1[v_{t,i}=1,i_t=i]o_t}. \label{eq:bounding_nb_checks_for_catch}
    \end{align}
    
    In the remaining, we compare the expectation operators $\E_{\bm u_i,\truth}$ and $\E_{\bm v_i,\truth}$ which only differ in the fact that the true utility of agent $i$ was sampled i.i.d. following $\mathcal U_i$ and $\tilde{\mathcal U}_i$ respectively, that is (notice the difference between $\bm u\sim {{\mathcal U}_i^{\otimes T}}$ and $\bm u\sim {\tilde{\mathcal U}_i^{\otimes T}}$),
    \begin{align}
    \E_{\bm u_i,\truth}[X] &= \sum_{\bm z\in\set{\frac{1}{3},1}^T} \Pr_{\bm u\sim {{\mathcal U}_i^{\otimes T}}} \set{\bm u = \bm z} \E_{\truth}[X\mid \text{agent $i$ has true utility }\bm z],  \\
    \E_{\bm v_i,\truth}[X] &= \sum_{\bm z\in\set{\frac{1}{3},1}^T} \Pr_{\bm u\sim {\tilde{\mathcal U}_i^{\otimes T}}} \set{\bm u = \bm z} \E_{\truth}[X\mid \text{agent $i$ has true utility }\bm z].\label{eq:comparing_expectations}
    \end{align}
    For any fixed $\bm z \in \set{\frac 13,1}^T$, denoting by $k_{\bm z}$ the number of components equal to $1$ within $\bm z$, we have
    
     \begin{align*}
         \log \frac{\Pr_{\bm u\sim \tilde{\mathcal U}_i^{\otimes T}} \set{\bm u = \bm z} }{\Pr_{\bm u\sim {\mathcal U}_i^{\otimes T}} \set{\bm u = \bm z} } &= k_{\bm z} \log\paren{\frac{p+\epsilon}{p}} +(T-k_{\bm z})\log \paren{\frac{1-p}{1-p-\epsilon}}\\
         &= T\cdot \text{kl}(p+\epsilon\parallel p) + (k_{\bm z}-T(p+\epsilon)) \log\paren{\frac{(p+\epsilon)(1-p)}{p(1-p-\epsilon)}}\\
         &\overset{(a)}{\leq} 1 + (k_{\bm z}-T(p+\epsilon)) \frac{\epsilon}{p(1-p-\epsilon)} \overset{(b)}{\leq} 1 + \frac{2\epsilon}{p}   (k_{\bm z}-T(p+\epsilon)),
     \end{align*}
     where $\text{kl}(a\parallel b):=a\ln\frac{a}{b} + (1-a)\ln\frac{1-a}{1-b}$ is the binary KL divergence. In (a) we used the fact that because $p+\epsilon\leq \frac{1}{2}$, one has $\text{kl}(p+\epsilon \parallel p)\leq \frac{\epsilon^2}{p}$ \citep[Lemma 16]{blanchard2024tight} as well as $\log(1+x)\leq x$ for $x>-1$. In (b) we used $p+\epsilon\leq \frac{1}{2}$. Next, by the Hoeffding bound, we have,
     \begin{align}
        \Pr_{\bm z\sim \tilde{\mathcal U}_i^{\otimes T}}\set{ \frac{\Pr_{\bm u\sim \tilde{\mathcal U}_i^{\otimes T}} \set{\bm u = \bm z} }{\Pr_{\bm u\sim {\mathcal U}_i^{\otimes T}} \set{\bm u = \bm z} } \leq e^{5}} &\geq \Pr_{\bm z\sim \tilde{\mathcal U}_i^{\otimes T}}\set{ k_{\bm z} \leq T(p+\epsilon) + \frac{2p}{\epsilon} } \notag \\
        &\geq 1-\exp \paren{-\frac{8 p^2 T}{\epsilon^2}} = 1-\frac{1}{T^2}.\label{eq:large_proba_similar_likelihood}
    \end{align}
    
    We denote by $\mathcal E_i$ the corresponding event. Plugging our estimates in \Cref{eq:comparing_expectations} then gives (which is the relationship between expectation operators)
    \begin{equation}\label{eq:complete_comparison_expectations}
        \E_{\bm v_i,\truth} \leq e^{5} \E_{\bm u_i,\truth} + \E_{\bm v_i,\truth}\1[\neg \mathcal E_i].
    \end{equation}
    Using this identity twice within \Cref{eq:bound_V_i_deviation_2,eq:bounding_nb_checks_for_catch} then plugging the results within \Cref{eq:bound_V_i_deviation_1} yields
    \begin{align*}
        &V_i^{\pi_i\circ \truth_{-i}}(\mH_1;\bm M) \geq Tp + \frac{\epsilon T}{3} -\paren{\frac{\epsilon T^2}{p}+T}\Pr\set{\neg \mathcal E_i} \\
        &\quad \qquad - e^{5} \paren{ \frac{\epsilon T}{p} \E_{\bm u_i,\truth}\sqb{\sum_{t=1}^T \1[v_{t,i}=1,i_t=i]o_t} +  \E_{\bm u_i,\truth} \sqb{\sum_{t=1}^T  \1[v_{t,i}=1]\1[i_t\neq i] }}.
    \end{align*}
    
    By assumption \truth{} is a PBE and thus $V_i^{\pi_i\circ \truth_{-i}}(\mH_1;\bm M) \leq V_i^{\truth}(\mH_1;\bm M)$. Together with the lower bound on $V_i^{\pi_i\circ \truth_{-i}}(\mH_1;\bm M)$ and the upper bound on $V_i^{\truth}(\mH_1;\bm M)$ from \Cref{eq:upper_bound_truth},
    \begin{align*}
        0&\leq  V_i^{\truth}(\mH_1;\bm M) - V_i^{\pi_i\circ \truth_{-i}}(\mH_1;\bm M) \\
        &\overset{(a)}{\leq} e^{5}  \frac{\epsilon T}{p} \E_{\bm u_i,\truth}\sqb{\sum_{t=1}^T \1[v_{t,i}=1,i_t=i]o_t} + \frac{\epsilon}{p}+\frac{1}{T}  - \frac{\epsilon T}{3} \\
        &\quad + e^5 \E_{\bm u_i,\truth} \sqb{\sum_{t=1}^T \1\sqb{u_{t,i}=\frac{1}{3}}\1[i_t=i] + \1[v_{t,i}=1]\1[i_t\neq i]  }\\
        &\overset{(b)}{\leq} e^5\paren{\frac{\epsilon T}{p}\mB_T(\truth,\bm M) + 3\mR_T(\truth,\bm M)} + 2 - \frac{\epsilon T}{3}.
    \end{align*}
    In (a) we also used \Cref{eq:large_proba_similar_likelihood}, and in (b) we used \Cref{eq:lower_bound_regret} and the definition of $\mB_T(\truth,\bm M)$.

    As a result, if we have $B_T(\truth,\bm M) \leq \frac{p}{6e^5}$, the previous inequality implies
    \begin{equation*}
        \mR_T(\truth,\bm M) \geq \frac{\epsilon T}{3}-2\geq \frac{\epsilon T}{4} = \frac{1}{4}\sqrt{\frac{T}{\log T}},
    \end{equation*}
    for $T$ sufficiently large. This ends the proof.
\end{proof}

\end{document}